\newtheorem{theorem}{Theorem}[section]
\newtheorem{lemma}[theorem]{Lemma}
\newtheorem{definition}[theorem]{Definition}
\theoremstyle{remark}
\newtheorem{remark}[theorem]{Remark}
\newcommand{\real}{\mathbb{R}}
\newcommand{\complex}{\mathbb{C}}
\newcommand{\ripl}{\mathrm{RIP}_L}
\newcommand{\etasm}{\eta_{\mathbf{s,M}}}
\newcommand{\res}[1]{\ensuremath{#1\!\times\!#1}}
\newcommand{\smsparse}{(\mathbf{s},\mathbf{M})}
\newcommand{\weightedsparse}{$(\mathbf{\omega},s)$-sparse }
\newcommand{\rp}{\text{Re}}
\newcommand{\cmark}{\ding{51}}%
\newcommand{\xmark}{\ding{55}}%
\newcommand*\colvec[1]{
        \global\colveccount#1
        \begin{pmatrix}
        \colvecnext
}
\def\colvecnext#1{
        #1
        \global\advance\colveccount-1
        \ifnum\colveccount>0
                \\
                \expandafter\colvecnext
        \else
                \end{pmatrix}
        \fi
}
\numberwithin{equation}{section}
\date{}
\title{On the absence of the RIP in real-world applications of compressed sensing and the RIP in levels}
\author{Alexander Bastounis \footnotemark[1] \and Anders C. Hansen \footnotemark[2]}
\begin{document}
\maketitle

\renewcommand{\thefootnote}{\fnsymbol{footnote}}

\footnotetext[1]{CCA, Centre for Mathematical Sciences, University of Cambridge, UK (A.Bastounis@maths.cam.ac.uk)}
\footnotetext[2]{DAMTP, Centre for Mathematical Sciences, University of Cambridge, UK (A.Hansen@damtp.cam.ac.uk)}
\begin{abstract}

The purpose of this paper is twofold. The first is to point out that the Restricted Isometry Property (RIP) does not hold in many applications where compressed sensing is successfully used. This includes fields like Magnetic Resonance Imaging (MRI), Computerized Tomography, Electron Microscopy, Radio Interferometry and Fluorescence Microscopy. We demonstrate that for natural compressed sensing matrices involving a level based reconstruction basis (e.g. wavelets), the number of measurements required to recover all $s$-sparse signals for reasonable $s$ is excessive. In particular, uniform recovery of all $s$-sparse signals is quite unrealistic. This realisation shows that the RIP is insufficient for explaining the success of compressed sensing in various practical applications. The second purpose of the paper is to introduce a new framework based on a generalised RIP-like definition that fits the applications where compressed sensing is used. We show that the shortcomings that show that uniform recovery is unreasonable no longer apply if we instead ask for structured recovery that is uniform only within each of the levels. To examine this phenomenon, a new tool, termed the 'Restricted Isometry Property in Levels' is described and analysed. Furthermore, we show that with certain conditions on the Restricted Isometry Property in Levels, a form of uniform recovery within each level is possible. Finally, we conclude the paper by providing examples that demonstrate the optimality of the results obtained.
\end{abstract}
\section{Introduction}
Compressed Sensing (CS), introduced by Cand\`es, Romberg \& Tao 
\cite{CandesRombergTao} and Donoho \cite{donohoCS}, has been one of the important new developments in applied mathematics in the last decade \cite{candesCSMag,Cohen09compressedsensing, DonohoTannerCounting,EldarDuarteCSReview,boche2015compressed, EldarKutyniokCSBook,FornasierRauhutCS,foucartBook, RombergCompImg, HermanStrohmerRadar}. By introducing a non-linear reconstruction method via convex optimisation, one can circumvent traditional barriers for reconstructing vectors that are sparse or compressible, meaning that they have few non-zero coefficients or can be approximated well by vectors with few non-zero coefficients. 

A substantial part of the CS literature has been devoted to the Restricted Isometry Property (RIP) (see Definition \ref{RIP}). Matrices which satisfy the RIP of order $s$ (see Remark \ref{remark:RIPHolds}) can be used to perfectly recover all $s$-sparse vectors (i.e. vectors with at most $s$ non-zero coefficients) via $\ell^1$ minimisation. Remarkably, this is possible even if the matrix in question is singular.  

Given the substantial interest in the RIP over the last years it is natural to ask whether this intriguing mathematical concept is actually  observed in many of the applications where CS is applied. It is well known that verifying that the RIP holds for a general matrix is an NP hard problem \cite{ComputationalComplexityOfRIP}, however, there is a simple test that can be used to show that certain matrices do not satisfy the RIP. This is called the \emph{flip test}. As this test reveals, there are an overwhelming number of practical applications where one will not observe the RIP of order $s$ for any reasonable sizes of the sparsity $s$. In particular, this list of applications includes Magnetic Resonance Imaging (MRI), other areas of medical imaging such as Computerised Tomography (CT) and all the tomography cousins such as thermoacoustic, photoacoustic or electrical impedance tomography, electron 
microscopy, seismic tomography, as well as other fields such as fluorescence microscopy, Hadamard 
spectroscopy and radio interferometry. Typically, we shall see that practical applications that exploit sparsity in a level based reconstruction basis will not exhibit the RIP of order $s$ for reasonable $s$. 

We will thoroughly document the lack of RIP of order $s$ in this paper, and explain why it does not hold for reasonable $s$. It is then natural to ask whether there might be an alternative to the RIP that may be more suitable for the actual real world CS applications. With this in mind, we shall introduce the \emph{RIP in levels} which generalises the classical RIP and is much better suited for the actual applications where CS is used.

\subsection{Compressed sensing} 
We shall begin by discussing the general ideas of compressed sensing. Typically we are given a scanning device, represented by an invertible matrix $M \in \complex^{n \times n}$, and we seek to recover information $x \in \complex^n$ from observed measurements $y:=Mx$. In general, we require knowledge of every element of $y$ to be able to accurately recover $x$ without additional structure. Indeed, let $\Omega = \left\lbrace \omega_1,\omega_2, \dotsc, \omega_{|\Omega|} \right\rbrace $ with $1 \leq \omega_1 < \omega_2 < \omega_3 <  \dotsb < \omega_{|\Omega|} \leq n$ and define the projection map $P_{\Omega}: \complex^n \to \complex^{|\Omega|}$ so that $P_{\Omega}(x_1,x_2, \dotsc, x_n) := (x_{\omega_1},x_{\omega_2},\dotsc, x_{\omega_{|\Omega|}})$. If $|\Omega|$ is strictly less than $n$ then for a given $y$ there are at least two distinct vectors $x_1 \in \complex^n$ and $x_2 \in \complex^n$ with $P_{\Omega} y = P_{\Omega}Mx_1 = P_{\Omega}Mx_2$, so that knowledge of $P_{\Omega}y$ will not allow us to distinguish between multiple candidates for $x$. Ideally though we would like to be able to take $|\Omega| \ll n$ to reduce either the computational or financial costs associated with using the scanning device $M$.

So far, we have not assumed any additional structure on $x$. However, let us consider the case where the vector $x$ consists mostly of zeros. More precisely, we make the following definition:
\begin{definition}[Sparsity]\label{Sparsity}
A vector $x$ is said to be \emph{$s$-sparse} for some natural number $s$ if\\
$|\text{\emph{supp}}(x)|\leq s$,
where $\text{\emph{supp}}(x)$ denotes the support of $x$.
\end{definition}

The key to CS is the fact that, under certain conditions, finding solutions to the $\ell^1$ problem
\begin{equation}\label{eq:l1Min}
\min \|\widehat{x}\|_1 \text{ such that } P_{\Omega} M x = P_{\Omega}M\widehat{x}
\end{equation}
gives a good approximation to $x$. More specifically, in \cite{ripstart} Cand{\`e}s and Tao described the concept of \emph{Restricted Isometry Property}. Typically, $M$ is an isometry. With this in mind, it may seem plausible that $U:=P_{\Omega}M$ is also close to an isometry when acting on sparse vectors. Specifically, we make the following definition:
\begin{definition}[Restricted Isometry Property]\label{RIP}
For a fixed matrix $U \in \complex^{m \times n}$, the  \emph{Restricted Isometry Property} (RIP) constant of order $s$, denoted by $\delta_s$, is the minimal $\delta > 0$ such that
\begin{equation}\label{ripDefinition}
(1-\delta)\|x\|_2^2 \leq \|Ux\|^2_2 \leq (1+\delta) \|x\|^2_2
\end{equation}
for all $s$-sparse vectors $x \in \complex^n$. 
\end{definition}
A typical theorem in compressed sensing is similar to the following, proven in \cite{4over41Paper} and \cite{foucartBook}.
\begin{theorem}\label{Theorem:RIPtypeTheorem}
For natural numbers $m$ and $n$, let $y \in \complex^n$ and $|\Omega| = m$ so that $P_{\Omega}y \in \complex^m$. Let $U \in \complex^{m \times n}$.
Additionally, suppose that $U$ has RIP constant for $2s$-sparse vectors given by $\delta_{2s}$, where $\delta_{2s} < \frac{4}{\sqrt{41}}$. Then for any $x \in \complex^n$ such that
$
\|Ux  - P_{\Omega}y \|_2 \leq \epsilon   
$
and any $\tilde{x}$ which solve the following $\ell^1$ minimisation problem:
\begin{equation} \label{ripl1Problem}
\min \|\widehat{x}\|_1 \text{ subject to } \|U\widehat{x} - P_{\Omega}y\|_2 \leq \epsilon
\end{equation} 
we have
\begin{equation} \label{equation:conclusionOfRIPTypeTheorem}
\|x-\tilde{x}\|_1 \leq C \sigma_{s}(x)_1 + D\epsilon \sqrt{s}
\end{equation}
and
\begin{equation} \label{equation:l2conclusionOfRIPTypeTheorem}
\|x - \tilde{x}\|_2 \leq \frac{C}{\sqrt{s}} \sigma_{s}(x)_1 + D \epsilon
\end{equation}
where $
\sigma_{s}(x)_1 := \min \lbrace\|x-\widehat{x}_2\|_1 \text{ such that } \widehat{x}_2 \text{ is } s \text{ sparse}
\rbrace$
and $C,D$ depend only on $\delta_{2s}$.
\end{theorem}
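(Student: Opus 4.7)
The plan is to follow the standard two-ingredient template (cone constraint plus tube constraint) for proving RIP-based recovery bounds, and to invoke the sharp-constant refinement from the Mo--Li paper cited as \cite{4over41Paper} to accommodate the threshold $\delta_{2s}<4/\sqrt{41}$.

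First I would set $h := \tilde{x} - x$ and let $T_0$ be the index set of the $s$ largest-magnitude entries of $x$, so that $\|x_{T_0^c}\|_1 = \sigma_s(x)_1$. Partition $T_0^c$ into blocks $T_1,T_2,\ldots$ of size $s$, ordered by decreasing magnitude of $h$. The feasibility of $x$ and the optimality of $\tilde{x}$ in (\ref{ripl1Problem}) together with the triangle inequality give the \emph{tube constraint}
\begin{equation*}
\|Uh\|_2 \;\leq\; 2\epsilon,
\end{equation*}
while minimality of $\|\tilde{x}\|_1$ combined with the splitting $\|\tilde{x}\|_1 \geq \|x_{T_0}\|_1 - \|h_{T_0}\|_1 + \|h_{T_0^c}\|_1 - \|x_{T_0^c}\|_1$ yields the \emph{cone constraint}
\begin{equation*}
\|h_{T_0^c}\|_1 \;\leq\; \|h_{T_0}\|_1 + 2\sigma_s(x)_1.
\end{equation*}

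Next I would estimate $\|h_{(T_0\cup T_1)^c}\|_2$ by the usual tail trick: the ordering of the $T_j$ forces $\|h_{T_j}\|_2 \leq s^{-1/2}\|h_{T_{j-1}}\|_1$, and summing gives a bound of $s^{-1/2}\|h_{T_0^c}\|_1$, hence by the cone constraint a bound by $\|h_{T_0}\|_2 + 2s^{-1/2}\sigma_s(x)_1$. Then I would apply the RIP to control the dominant block. The straightforward route $\|Uh\|_2 \geq \|Uh_{T_0\cup T_1}\|_2 - \sum_{j\geq 2}\|Uh_{T_j}\|_2$ only gives the classical threshold $\delta_{2s}<\sqrt{2}-1$; to reach $4/\sqrt{41}$ I would use the refined decomposition of \cite{4over41Paper}, which replaces the crude bound on $\sum_{j\geq 2}\|Uh_{T_j}\|_2$ with an $\ell^2$-type inequality obtained by writing the tail as a convex combination of sparse vectors and invoking a parallelogram/RIP identity. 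This step, which pins down the sharp constant $4/\sqrt{41}$ and the dependence of $C,D$ on $\delta_{2s}$, is where the main work (and the main obstacle) sits.

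With $\|h_{T_0\cup T_1}\|_2 \leq \alpha \epsilon + \beta s^{-1/2}\sigma_s(x)_1$ in hand (with $\alpha,\beta$ depending only on $\delta_{2s}$), combining with the tail estimate gives (\ref{equation:l2conclusionOfRIPTypeTheorem}) directly. For (\ref{equation:conclusionOfRIPTypeTheorem}) I would use the cone constraint once more: $\|h\|_1 = \|h_{T_0}\|_1 + \|h_{T_0^c}\|_1 \leq 2\|h_{T_0}\|_1 + 2\sigma_s(x)_1 \leq 2\sqrt{s}\,\|h_{T_0}\|_2 + 2\sigma_s(x)_1$, and then substitute the $\ell^2$ bound on $h_{T_0}$. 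Tracking constants through the Mo--Li argument then yields $C,D$ depending only on $\delta_{2s}$, completing the proof.
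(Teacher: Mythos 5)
First, a point of reference: the paper does not prove this theorem at all --- it is quoted from \cite{4over41Paper} and \cite{foucartBook}. The closest thing to an in-house proof is the argument for Theorem \ref{Theorem:RIPInLevelsRecoveryTheorem}, which specialises to exactly this statement when $l=1$ and $\eta_{\mathbf{s,M}}=1$ (then $\sqrt{l(\sqrt{\eta_{\mathbf{s,M}}}+\tfrac14)^2+1}=\sqrt{41}/4$), so that is the natural benchmark for your proposal.

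Your outline is the standard direct Cand\`es-style template (tube constraint, cone constraint, block decomposition, assembly of the $\ell^1$ and $\ell^2$ bounds), and those surrounding steps are all correct. The difficulty is that the entire content of the theorem at the threshold $4/\sqrt{41}$ lives in the one step you explicitly black-box, and your description of what that black box contains is not accurate: the mechanism in \cite{4over41Paper} is not a ``convex combination of sparse vectors plus a parallelogram identity'' (that is the Cai--Zhang route to $\delta_{2s}<1/\sqrt{2}$). What actually produces $4/\sqrt{41}$ is visible in the paper's Section 7: (i) the inner-product estimate of Lemma \ref{lemmaInnerProductBetweenVectors}, which for orthogonal sparse vectors gives $|\langle Ux,Uy\rangle|\le\sqrt{\delta_{2s}^2-t^2}\,\|x\|_2\|y\|_2$ where $t$ records the actual deviation $\|Ux\|_2^2-\|x\|_2^2$ of the dominant block, rather than the worst case $\delta_{2s}$; (ii) the norm inequality of Lemma \ref{l1l2NormInequality}, $\|v\|_2\le\|v\|_1/\sqrt{s}+\sqrt{s}(v_1-v_s)/4$, applied to the tail blocks, whose telescoping correction term is the source of the $\tfrac14$ in $(\,\cdot+\tfrac14)^2$ and hence of the $41$; and (iii) an optimisation of $g(t)=(\delta_{2s}^2-t^2)/(1+t)^2$ over $t$, maximised at $t=-\delta_{2s}^2$. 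The paper also organises the endgame differently: rather than bounding $\|h_{T_0\cup T_1}\|_2$ directly, it establishes the $\ell^2$ robust nullspace property (Definition \ref{Definition:L2NSPOfOrderS}) and then invokes Theorem \ref{Theorem:RobustL2NSPImplicationOrderS}, which cleanly separates the sharp-constant computation from the derivation of the error estimates. As written, your proposal is a correct scaffold around a citation; to count as a proof it would need to reproduce items (i)--(iii), and as stated it slightly mischaracterises them.
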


Theorem \ref{Theorem:RIPtypeTheorem} roughly says that if the matrix $U$ has a sufficiently small RIP constant for $2s$-sparse vectors then \begin{enumerate}
\item If $x$ is $s$-sparse then we can recover it from knowledge of $Ux$ by solving the (convex) $\ell^1$ minimisation problem.
\item If we only know a noisy version of $Ux$ (i.e we are given $y = Ux + \eta$ for some noise $\eta$ satisfying $|\eta|< \epsilon$) then we can still recover an approximation to $x$ which will be accurate up to the size of the noise.
\item If $x$ is very close (in $\ell^1$ norm) to an $s$-sparse vector then we can recover an approximation to $x$ which will be accurate up to the distance between $x$ and its closest $s$-sparse vector.  
\end{enumerate}
For a fixed matrix $U$ (in particular, if $U = P_{\Omega}M$ then $U$ is fixed only if the set $\Omega$ is fixed), the ability to recover all $s$-sparse vectors $x$ using the values of $Ux$ is commonly known as \emph{uniform recovery}. We shall see in the remaining sections that expecting uniform recovery is unrealistic in a variety of situations.

\begin{remark}\label{remark:TheoremVariants}
Variants on Theorem \ref{Theorem:RIPtypeTheorem} include changing the hypothesis to $\delta_{2s} < C$ for a different constant $C$ (e.g. \cite{CandesWeakRIP} requires $\delta_{2s} < \sqrt{2} - 1$ or \cite{FOUCARTRIP4652} uses different methods to show that, for large s, $\delta_{2s}<0.475$ is sufficient) or changing $\delta_{2s}$ to $\delta_{3s}$ (e.g. \cite{delta3k} 
requires $\delta_{3s} < 4-2\sqrt{3}$) or $\delta_{s}$ (e.g. \cite{SharpRIPBounds}
requires $\delta_{s} < \frac{1}{3}$). We will see that any of these modifications will suffer from the same shortcomings in the next section. 
\end{remark}
\begin{remark}\label{remark:RIPHolds}
If the RIP constant for $U \in \complex^{m \times n}$ of order $as$ for some $a \in \mathbb{N}$ is sufficiently small so that the conclusion of a theorem similar to Theorem \ref{Theorem:RIPtypeTheorem} holds then we say that \emph{$U$ satisfies or exhibits the RIP of order $s$}.
\end{remark}
\section{The absence of the RIP and the flip test}

\subsection{The flip test}\label{Section:flipTest}
\begin{figure}[t]
\centering
\newcommand{\scaleSize}[0]{0.6}
\newcommand{\vectorLength}{4}
\newcommand{\xoff}{0}
\newcommand{\yoff}{0}
\newcommand*{\INVISIBLE}[1]{{\color{white}#1}}

\begin{tikzpicture}[scale=\scaleSize]
	 	 	 	 \fill[black!100] (1*1,4*1) rectangle (0*1,3.5*1);
	 	 	 	 \fill[black!75!white] (1*1,3.5*1) rectangle (0,2.5*1);
	 	 	 	 \fill[black!35!white] (1*1,2.5*1) rectangle (0,1.75*1);
	 	 	 	 \fill[black!20!white] (1*1,1.75*1) rectangle (0,1.25*1);
	 	 	 	 \fill[black!5!white] (1*1,1.25*1) rectangle (0,0*1);
	 	 	 	 \draw[red,thick] (0,0) grid [xstep=1,ystep=0.25*1] (1,4*1);
	 	 	 	 \renewcommand{\yoff}{0}
	 	 	 	 \newcommand{\yoffPrime}{-0.75*\vectorLength}
	 	 	 	 \node[anchor=north,align=center]  at
	 	 	    (0.5,-0.2) {\INVISIBLE{$w'_2$}$x^1$\INVISIBLE{$w'_2$}};
	 	 	 	 
	 	 	 	 \renewcommand{\xoff}{1*\vectorLength}
	 	 	 	 \renewcommand{\yoff}{0}
	 	 	 	 \renewcommand{\yoffPrime}{0.75*\vectorLength}
	 	 	 	 \draw[very thick,-triangle 90] (1+\xoff -\vectorLength,\vectorLength - 0.5+\yoff) to[out = -40, in = -140] node[pos=0.6,above ,rotate=60]{Using CS} (\xoff,3.5+\yoffPrime);
	 	 	 	 \renewcommand{\yoff}{\yoffPrime}
	 	 	 	 \fill[black!100] (\xoff+1,4+\yoff) rectangle (\xoff,3.5+\yoff);
	 	 	 	 	 	 	 	 	 	 		 	 	 	    	 	 	 	 	 	 \fill[black!75!white] (\xoff+1,3.5 + \yoff) rectangle (\xoff,2.5*1+ \yoff);
	 	 	 	 	 	 	 	 	 	 		 	 	 	    	 	 	 	 	 	 \fill[white] (\xoff + 1,2.5 + \yoff) rectangle (\xoff,1.75*1+ \yoff);
	 	 	 	 	 	 	 	 	 	 		 	 	 	    	 	 	 	 	 	 \fill[white] (\xoff + 1,1.75+ \yoff) rectangle (\xoff,1.25*1+ \yoff);
	 	 	 	 	 	 	 	 	 	 		 	 	 	    	 	 	 	 	 	 \fill[white] (\xoff + 1,1.25+ \yoff) rectangle (\xoff,0*1+ \yoff);
	 	 	 	 	 	 	 	 	 	 		 	 	 	    	 	 	 	 	 	 \draw[red,thick] (0+\xoff,0+ \yoff-0.01) grid [xstep=1,ystep=0.25*1] (1 + \xoff,4*1+ \yoff);
	 	 	 	  	 	 	    	 	 	 	 	 	   
	 	 	 	 	 	 	 	 	 	 		 	 	 	    	 	 	 	 	 	 \node[anchor=north,align=center]  at
	 	 	 	  	 	 	    	 	 	 	 	 	    (0.5+\xoff,-0.2+\yoff) {$\widetilde{x}_1$};
	 	 	    
	 	 	 	 \renewcommand{\yoffPrime}{-0.75*\vectorLength}
	 	 	 	 \renewcommand{\xoff}{\vectorLength}
	 	 	 	 \renewcommand{\yoff}{0}
	 	 	 	 \draw[very thick,-triangle 90] (1+\xoff -\vectorLength,0.5+\yoff) to[out = -40, in = -140] node[pos=0.4,above ,rotate=-50]{$Q_{\text{reverse}}$} (\xoff,1.5+\yoffPrime);
 	 	 	 	 	\renewcommand{\yoff}{\yoffPrime}
	 	 	 	 	 	 \fill[black!100] (\xoff+1,0+\yoff) rectangle (\xoff,0.5+\yoff);
	 	 	 	 	 	 	 	 	 \fill[black!75!white] (\xoff+1,0.5 + \yoff) rectangle (\xoff,1.5*1+ \yoff);
	 	 	 	 	 	 	 	 	 \fill[black!35!white] (\xoff + 1,1.5 + \yoff) rectangle (\xoff,2.25*1+ \yoff);
	 	 	 	 	 	 	 	 	 \fill[black!20!white] (\xoff + 1,2.25+ \yoff) rectangle (\xoff,2.75*1+ \yoff);
	 	 	 	 	 	 	 	 	 \fill[black!5!white] (\xoff + 1,2.75+ \yoff) rectangle (\xoff,4+ \yoff);
	 	 	 	 	 	 	 	 	 \draw[red,thick] (0+\xoff,0+ \yoff) grid [xstep=1,ystep=0.25*1] (1 + \xoff,4*1+ \yoff);
	 	   
	 	 	 	 	 	 	 	 	 \node[anchor=north,align=center]  at
	 	    (0.5+\xoff,-0.2+\yoff) {\INVISIBLE{$w'_2$}$x^2$\INVISIBLE{$w'_2$}};
  	 	 	 	 	 	\renewcommand{\xoff}{2*\vectorLength}
		 	 	 	 \fill[black!100] (\xoff+1,0+\yoff) rectangle (\xoff,0.5+\yoff);
		 	 	 	 	 	 	 \fill[black!75!white] (\xoff+1,0.5 + \yoff) rectangle (\xoff,1.5*1+ \yoff);
		 	 	 	 	 	 	 \fill[white] (\xoff + 1,1.5+\yoff) rectangle (\xoff,\yoff+4);
		 	 	 	 	 	 	 \draw[red,thick] (0+\xoff,0+ \yoff) grid [xstep=1,ystep=0.25*1] (1 + \xoff,4*1+ \yoff);

		 	 	 	 	 	 	 \node[anchor=north,align=center]  at (0.5+\xoff,-0.2+\yoff) {\INVISIBLE{$w'_2$}$\widetilde{x}^2 	$\INVISIBLE{$w'_2$}};
 	 	 	 	 	 		 	 	 \draw[very thick,-triangle 90] (1+\xoff -\vectorLength,2+\yoff) to[out = -20, in = 160] node[anchor=south,above]{Using CS} (\xoff,2+\yoff);
  	 	 	 	 	 		 	 	  	 	 	 	 	 	\renewcommand{\xoff}{3*\vectorLength}
	 	 	 	 	 		 	 	 	    	 	 	 \fill[black!100] (\xoff+1,4+\yoff) rectangle (\xoff,3.5+\yoff);
	 	 	 	 	 		 	 	 	    	 	 	 	 	 	 \fill[black!75!white] (\xoff+1,3.5 + \yoff) rectangle (\xoff,2.5*1+ \yoff);
	 	 	 	 	 		 	 	 	    	 	 	 	 	 	 \fill[white] (\xoff + 1,2.5 + \yoff) rectangle (\xoff,1.75*1+ \yoff);
	 	 	 	 	 		 	 	 	    	 	 	 	 	 	 \fill[white] (\xoff + 1,1.75+ \yoff) rectangle (\xoff,1.25*1+ \yoff);
	 	 	 	 	 		 	 	 	    	 	 	 	 	 	 \fill[white] (\xoff + 1,1.25+ \yoff) rectangle (\xoff,0*1+ \yoff);
	 	 	 	 	 		 	 	 	    	 	 	 	 	 	 \draw[red,thick] (0+\xoff,0+ \yoff) grid [xstep=1,ystep=0.25*1] (1 + \xoff,4*1+ \yoff);
 	 	    	 	 	 	 	 	   
	 	 	 	 	 		 	 	 	    	 	 	 	 	 	 \node[anchor=north,align=center]  at
 	 	    	 	 	 	 	 	    (0.5+\xoff,-0.2+\yoff) {\INVISIBLE{$w'_2$}$Q^{-1}_{\text{reverse}}\widetilde{x}^2$\INVISIBLE{$w'_2$}};
 	 	    	 	 	 	 	 	    
  	 	 	    	 	 	 	 	 	    	 	 	 	 	 	 	 	 	 	\draw[very thick,-triangle 90] (1+\xoff -\vectorLength,0.5+\yoff) to[out = -40, in = -140] node[pos=0.6,above ,rotate=60]{$Q^{-1}_{\text{reverse}}$} (\xoff,3.5+\yoff);
  	 	 	    	 	 	 	 	 	     	 	 	 	 	 	\renewcommand{\xoff}{4*\vectorLength}

	 	 	 	 \end{tikzpicture}

\caption{A graphical demonstration of the flip test for matrices which exhibit the RIP. Darker colours denote larger values.}
\label{fig:flipTestDemo}
\end{figure}
Although Theorem \ref{Theorem:RIPtypeTheorem} and similar theorems based on the RIP seem convenient, computing the RIP constant of an arbitrary matrix is an NP hard problem \cite{ComputationalComplexityOfRIP}. Certain special cases have been shown to have a small enough RIP constant with high probability (e.g. \cite{SmallRIP} for Gaussian and Bernoulli matrices) for the conclusion of Theorem \ref{Theorem:RIPtypeTheorem} to hold 
 but in other cases the size of the RIP constant is not known. In \cite{FlipTest} the following test (the so called 'flip test') was proposed. The test can be used to verify the RIP of order $s$ for reasonable $s$ and uniform recovery of $s$-sparse vectors and works as follows:
\begin{enumerate}
\item Suppose that a matrix $U \in \complex^{m \times n}$ has a sufficiently small RIP constant so that the error estimate \eqref{equation:conclusionOfRIPTypeTheorem} holds. Take a vector $x^1$ and compute $y^1:=Ux^1$. Set $\widetilde{x}^1$ to be the solution to \eqref{ripl1Problem} where we seek to reconstruct $x^1$ from $y^1$. 
\item For an operator $Q$ that permutes entries of vectors in $\complex^n$, set $x^2:=Qx^1$, and compute $y^2 := Ux^2$. Again, set $\widetilde{x}^2$ to be the solution to \eqref{ripl1Problem} where we seek to reconstruct $x^2$ from $y^2$.
\item From \eqref{equation:conclusionOfRIPTypeTheorem}, we have 
$
\|x^1-\widetilde{x}^1\|_1, \|x^2-\widetilde{x}^2\|_1 \leq C \sigma_{s}(x^2)_1 =  C \sigma_{s}(x^1)_1
$
because $\sigma_s$ is independent of ordering of the coefficients. Also, since permutations are isometries,
$\|x^1-Q^{-1}\widetilde{x}^2\|_1 \leq C \sigma_{s}(x^2)_1 = C \sigma_{s}(x^1)_1.$
\item If $x^1$ is well approximated by $\widetilde{x}^1$ and the RIP is the reason for the good error term, we expect 
$\|x^1-\widetilde{x}^1\|_1 \approx C \sigma_{s}(x^1)_1 $. Thus,  $x^1$ should also be well approximated by $Q^{-1}\widetilde{x}^2$.
\item Hence if $\|x^1-\widetilde{x}^1\|_1 \,\, \text{differs greatly from}   \,\, \|x^1-Q^{-1}\widetilde{x}^2\|_1,$ there is no uniform recovery and we cannot have the RIP.
\end{enumerate}
The particular choice of $Q$ that was given in \cite{FlipTest} was the permutation $Q_{\text{reverse}}$ that reverses order - namely, if $x \in \complex^n$ then $Q_{\text{reverse}}(x_1,x_2 ,\dotsc ,x_{n-1},x_n) = (x_n,x_{n-1}, \dotsc, x_2,x_1).$
A graphical demonstration of the results of the flip test (assuming that the RIP holds for large enough $s$ to recover typical $x^1$ and using the permutation $Q_{\text{reverse}}$) is given in Figure \ref{fig:flipTestDemo}. Here, larger values are represented by darker colours, so that the first few values of $x^1$ are the largest. This is what we would expect if $x^1$ is the wavelet coefficients of a typical image. 

\begin{table}[t]

\begin{center}
\begin{tabular}{ |l|l|c|c|c| }
\hline
\multicolumn{2}{|c|}{}& \multicolumn{2}{ |c| }{Matrix method} & Observes RIP\\
\cline{3-4}
\multicolumn{2}{|c|}{} & $\mathrm{DFT}\cdot \mathrm{DWT^{-1}}$& $\mathrm{HAD}\cdot \mathrm{DWT^{-1}}$ & \\ \hline
\multirow{9}{*}{Problem} & MRI & \cmark & \xmark & \xmark \\ \cline{2-5}
 & Tomography & \cmark & \xmark & \xmark \\ \cline{2-5}
 
  & Spectroscopy   & \cmark & \xmark & \xmark \\ \cline{2-5}
  
    & Electron microscopy  & \cmark & \xmark & \xmark \\ \cline{2-5}
 & Radio interferometry   & \cmark & \xmark & \xmark \\ \cline{2-5}
  & Fluorescence microscopy & \xmark & \cmark & \xmark \\ \cline{2-5}
  & Lensless camera  & \xmark & \cmark & \xmark \\ \cline{2-5}
  & Single pixel camera  & \xmark & \cmark & \xmark \\ \cline{2-5}
  & Hadamard spectroscopy & \xmark & \cmark & \xmark \\ \cline{2-5}
\hline
\end{tabular}
\end{center}

\caption{A table displaying various applications of compressive sensing. For each application, a suitable matrix is suggested along with information on whether or not that matrix has a sufficiently small RIP constant for results similar to the conclusion of Theorem \ref{Theorem:RIPtypeTheorem} to hold.}\label{table:ObeysRIP}
\end{table}

This test is simple, and performing it with the permutation $Q_{\text{reverse}}$ on a variety of examples, even with the absence of noise (see Figure \ref{fig:flipTestImage}),
gives us a collection of problems for which the RIP does not account for the excellent reconstruction observed by experimental methods. The notation $\mathrm{DFT},\mathrm{HAD}$ and $\mathrm{DWT}$ is used throughout Figure \ref{fig:flipTestImage} and the remainder of this article to represent the Discrete Fourier Transform, the Hadamard Transform and the Discrete Wavelet Transform respectively. For several other examples with flip tests that verify the lack of RIP in many important compressed sensing applications, see \cite{SMSparsity} and \cite{AsymptoticStructure}. The conclusion of the flip test for compressed sensing is summarised in Table \ref{table:ObeysRIP}.  

It is worth noting that even with $97\%$ sampling as in the second row of Figure \ref{fig:flipTestImage}, the RIP constant $\delta_s$ (for reasonable $s$) of some matrices is still too large, so that either inequalities \eqref{equation:conclusionOfRIPTypeTheorem} and \eqref{equation:l2conclusionOfRIPTypeTheorem} do not hold or they are insufficiently tight to be useful statements about the quality of recovery obtained by $\ell^1$ minimisation. Although this may seem surprising at first, it is actually very natural when one understands how poor the recovery of the finer wavelet levels is for subsampling of matrices like $\mathrm{DFT}\cdot\mathrm{DWT}^{-1}_4$.
Note that the flip test will fail in a similar manner if we replace wavelets with other popular frames such as curvelets, contourlets or shearlets \cite{candes2004new, Gitta3, Vetterli}.
\begin{figure}
\begin{center}
\newcommand{\abrcap}[2]{\raisebox{80pt}{\begin{minipage}[t]{65pt}{\footnotesize
 #1x#1\\[15pt]Error:\\[2pt]}#2\%\end{minipage}}}%
\newcommand{\abrdesc}[6]{\raisebox{110pt}{\begin{minipage}[t]{60pt}{\footnotesize
				#6\\\res{#1}\\#2\%\\[4pt]#3$\cdot$#4\\[4pt]{#5}}\end{minipage}}}%
\newcommand{\abrflip}[1]{\includegraphics[width=0.28\textwidth]{#1}}%
\vspace{-10pt}
\begin{tabular}[h]{@{\hspace{0pt}}c@{\hspace{0pt}}c@{\hspace{3pt}}c@{\hspace{3pt}}c@{\hspace{0pt}}}
& CS reconstruction & CS reconstruction w/ flip & Subsampling pattern 
used\\[5pt]
  \abrdesc{2048}{12}{$\mathrm{DFT}$}{$\mathrm{DWT}^{-1}_3$}{MRI,\\Spectroscopy, Radio-\\interferometry}{College 1}
   &\abrflip{image_College1GrayCS.png}
  &\abrflip{image_College1GrayCSFlip.png}
  &\abrflip{image_College1GraySampling.png}
   \\
     \abrdesc{2048}{97}{$\mathrm{DFT}$}{$\mathrm{DWT}^{-1}_4$}{MRI,\\Spectroscopy, Radio-\\interferometry}{College 2}
      &\abrflip{image_College3GrayCS.png}
        &\abrflip{image_College3GrayCSFlip.png}
        &\abrflip{image_College3GraySampling.png}
    \\
       \abrdesc{2048}{21}{$\mathrm{HAD}$}{$\mathrm{DWT}^{-1}_2$}{Comp. imag.,\\ Hadamard spectroscopy, Fluorescence microscopy }{Rocks}
        &\abrflip{image_RocksGrayCSHad.png}
          &\abrflip{image_RocksGrayCSFlipHad.png}
          &\abrflip{HadamardSampling.png}
	\\
       \abrdesc{2048}{27.5}{$\mathrm{DFT}$}{$\mathrm{DWT}^{-1}_2$}{Electron \\microscopy,\\ Computerised\\ tomography}{College3}
        &\abrflip{image_College4RRCS.png}
          &\abrflip{image_College4RRCSFlip.png}
          &\abrflip{image_College4SP.png}          
\end{tabular}
\caption{CS reconstruction based on $\tilde x^1$ (left column) and the flip test based on $Q_{\text{reverse}} \tilde x^2$ (middle 
column). As in Section \ref{Section:flipTest}, if the RIP holds then the first and middle columns should be similar. The right column shows the subsampling pattern used (with white representing a sample taken at that location). The percentage shown is the fraction of possible measurements that 
were sampled. $\mathrm{DWT}_N$ denotes the transform corresponding to Daubechies wavelets with $N$ vanishing moments.}
\label{fig:flipTestImage}
\end{center}
\end{figure}
\subsection{Why don't we see the RIP?}
\begin{figure}[t]
\centering

\begin{minipage}{.28\textwidth}
\centering
\includegraphics[width=\linewidth]{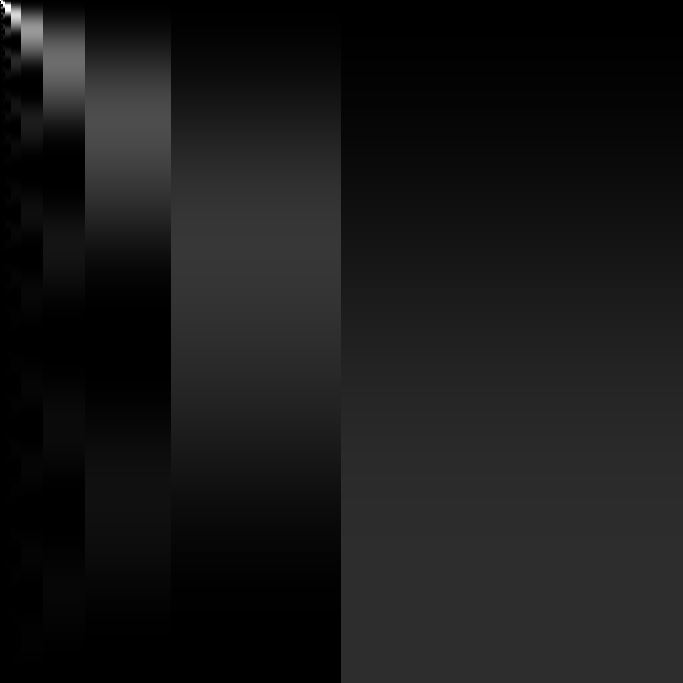}

\end{minipage}\hfill
\begin{minipage}{.28\textwidth}
\centering
\includegraphics[width=\linewidth]{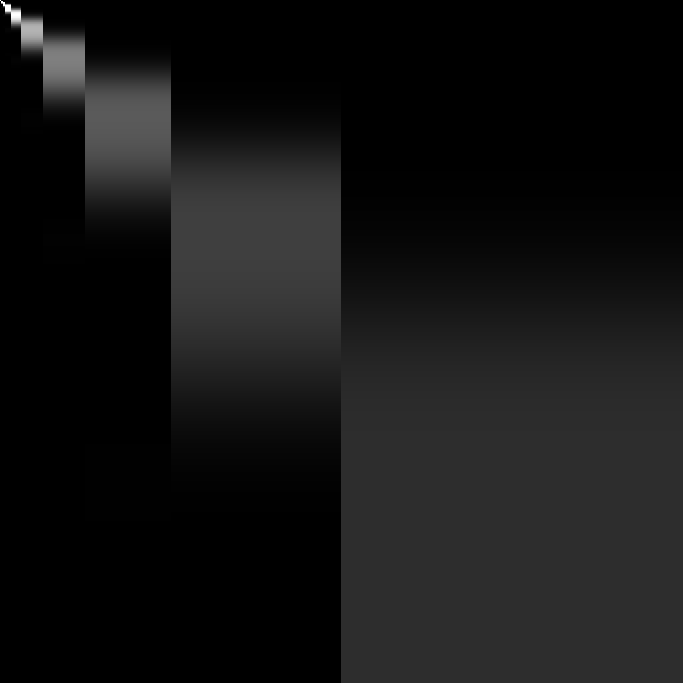}

\end{minipage}\hfill
\begin{minipage}{.28\textwidth}
\centering
\includegraphics[width=\linewidth]{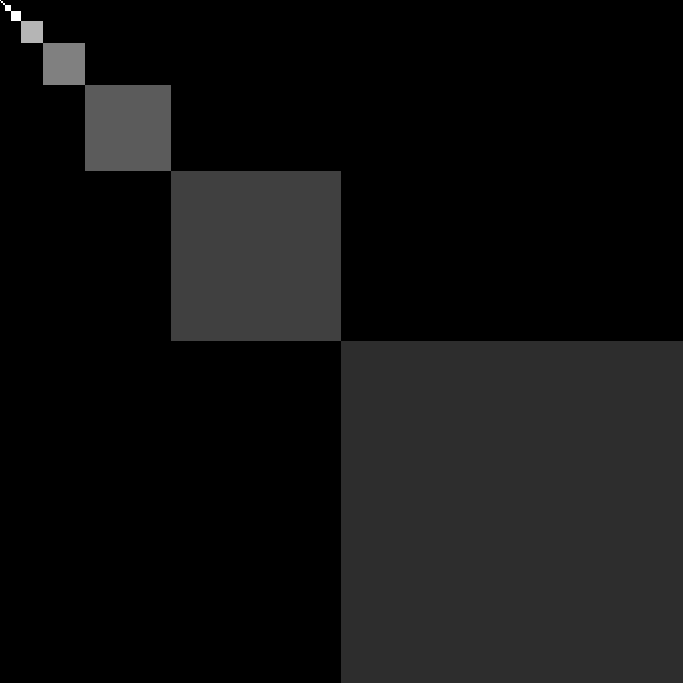}

\end{minipage}\hfill
\caption{The three images display the absolute values of various sensing matrices. A lighter colour represents larger absolute values. (Left) $\mathrm{DFT}\cdot \mathrm{DWT^{-1}_{2}}$, (middle) $\mathrm{DFT}\cdot \mathrm{DWT^{-1}_{10}}$ and (right) $\mathrm{HAD}\cdot \mathrm{DWT^{-1}_{\mathrm{Haar}}}$ where $\mathrm{DFT}$ is the Discrete Fourier Transform, $\mathrm{HAD}$ the Hadamard transform and $\mathrm{DWT^{-1}_{N}}$ the Inverse Wavelet Transform corresponding to Daubechies wavelets with $N$ vanishing moments.}
\label{fig:absMatrices}
\end{figure}
To illustrate the reason for the lack of RIP in the previous examples, let us consider a matrix $M \in \complex^{n \times n}$ such that 
\begin{equation*}
M = \begin{pmatrix} M_1 & 0 &  \dots & 0 & 0 \\ 0 & M_2 & \dots & 0 & 0 \\  \vdots & \vdots & \ddots &\vdots & \vdots \\ 0 & 0 & \dots & M_{l-1} & 0 \\
0 & 0 & \dots & 0 & M_l \end{pmatrix},
\end{equation*}
where the matrices $M_i\in \complex^{a_i}\times \complex^{a_i}$ are isometries and $a_i$ is the length of the $i$th wavelet level. We shall consider the standard compressed sensing problem of recovering $x \in \mathbb{C}^n$, where $n = a_1 + a_2 + \dotsb + a_l$, from knowledge of $P_{\Omega}Mx$ for some sampling set $\Omega$ with $|\Omega| \ll n$. Given the block diagonality of $M$ it is natural to consider a subsampling scheme that fits this structure. In particular, let 
$$
\Omega = \Omega_1 \cup \hdots \cup \Omega_l,  \qquad  \Omega_j \subset \left\{\sum_{k=1}^{j-1} a_k + 1, \hdots,  \sum_{k=1}^{j}a_k\right\}, \quad 2 \leq j \leq l,
$$
and $\Omega_1 \subset \{1,\hdots, a_1\}$. We also, let $
P_{\Omega} = P_{\Omega_1} \oplus \hdots \oplus P_{\Omega_l},
$where the direct sum refers to the obvious separation given by the structure of $M$.
Note that the block diagonal structure of $M$ is a simplified model of the real world, however, as Figure \ref{fig:absMatrices} demonstrates, it is a good approximation when considering the popular matrices
$\mathrm{DFT}\cdot \mathrm{DWT^{-1}}$ and $\mathrm{HAD}\cdot \mathrm{DWT^{-1}}$. In fact, $\mathrm{HAD}\cdot \mathrm{DWT^{-1}_{\mathrm{Haar}}}$ is completely block diagonal.
Suppose that 
$
x = (x^1,x^2,\dotsc,x^l)$ with $x^i = (x^i_1,\hdots, x^i_{a_i}),
$ and that $s_i = |\mathrm{supp}(x^i)|$. If we assume that, for $i=1,2,\dotsc,l$, the matrices $P_{\Omega_i}M_i$ satisfy the RIP with reasonable constants $\delta^i_{s_i}$ then we can easily recover $x$ from $P_{\Omega}Mx$ using $\ell^1$ recovery. 
Set $\tilde{s} = s_1 + s_2 + \dotsb + s_l$. If the global RIP is the explanation for the success of compressed sensing in this case, we would require $\delta_{\tilde{s}}$ to be of reasonable size for $P_{\Omega}M$, so that we can recover  $\tilde x =  Q_{\text{reverse}}x$ from $P_{\Omega}M\tilde x$. Since $\delta_{\tilde{s}} \geq \delta^i_{\tilde{s}}$, then $\delta^i_{\tilde{s}}$ must also be small.

This leads to problems: if $x$ represents the wavelet coefficients of an image then $s_i$ is roughly constant so that $s_i = s$ for some $s \in \mathbb{N}$. A good sampling pattern will have enough measurements so that the finest detail level $P_{\Omega_l}M_l$ can recover $s$-sparse vectors but recovering $s'$-sparse vectors with $s' \gg s$ is impossible. In particular, $ls$ is much larger than $s$ and so $\delta_{\tilde{s}} \geq \delta^l_{\tilde{s}} \approx \delta^l_{ls}$ is too large. Even though the RIP is unable to explain the success of compressed sensing in Figure \ref{fig:flipTestImage}, it is clear that with these examples successful recovery is still possible. We would like to emphasize the point that the RIP is merely a sufficient but not necessary condition for compressed sensing.

\begin{remark}\label{remarkRIPLackExplain}  Although this example is simplified, both regarding the sampling procedure and the number of wavelet levels, similar arguments will explain the lack of RIP for different variable density sampling schemes (see \cite{SMSparsity, BigotBlockCS, chauffert2014variable, VanderEtAlVariable, Siemens, KrahmerWardCSImaging, Lustig3} for more on variable density/multilevel sampling and structured sampling \cite{CalderbankNIPS, CalderbankCommInspCS}). The key is that successful variable density sampling schemes will depend on the distribution of coefficients, just like this simplified example does. Note also that changing from the finite-dimensional model (which is far from optimal for infinite-dimensional problems such as MRI) to the infinite-dimensional model described in \cite{BAACHGSCS, SMSparsity,FundamentalPerformanceLimits, PuyDaviesGribonval2015, GLPU, PruessmannUnserMRIFast} will not fix the problem.
\end{remark}

\subsection{The weighted RIP}

Consideration of a different explanation for the success of 	compressed sensing that includes more structure than just plain sparsity is not a novel idea. Indeed, the \emph{`weighted RIP'} was described in \cite{WeightedL1} as a structured alternative to the RIP. To describe this approach, we shall begin by defining \emph{weighted sparsity}. More specifically, given a collection of weights $\mathbf{\omega}:= (\omega_1,\omega_2 ,\dotsc,\omega_n) \in \mathbb{R}^n$ with $\omega_j \geq 1$ for each $j$, a vector $x \in \complex^n$  is said to be $(\mathbf{\omega},s)$-weighted sparse if the weighted $\ell^0$ norm, $\|x\|_{\mathbf{\omega},0}:= \sum_{j \in \text{supp}(x)} \omega^2_j$, satisfies $\|x\|_{\mathbf{\omega},0} < s$. We can similarly extend the $\ell^1$ norm to a weighted $\ell^1$ norm by defining $\|x\|_{\mathbf{\omega},1}:= \sum_{j=1}^n \omega_j x_j$ and then examine \emph{weighted $\ell^1$ minimisation} in the same way that we can discuss $\ell^1$ minimisation. A preliminary idea to deal with the difficulties raised in section \ref{Section:flipTest} is to argue that instead of expecting to recover all $s$-sparse vectors, we should expect to recover all \weightedsparse vectors. This is further motivated by the success of such an approach to the recovery of smooth functions from undersampled measurements \cite{WeightedL1}.

\subsubsection{The weighted RIP is insufficient with wavelets}
Unfortunately, there are issues with this approach when applied to problems involving a level based construction basis such as wavelets like in section \ref{Section:flipTest}. These are more thoroughly documented in \cite{GeneralisedFlipTest}, but we shall provide a brief outline here. Just as the flip test demonstrates that in many examples relevant to practical applications the class of $s$-sparse problems is too big and contains vectors that cannot be recovered by $\ell^1$ minimisation, we have the same phenomenon for weighted sparsity. Typically, for problems involving a level based reconstruction basis, and for any choice of weights $\omega$, the class of \weightedsparse vectors is too large and contains vectors that cannot be recovered by either weighted-$\ell^1$ or $\ell^1$ minimisation. In our specific setting above, this means that we have a `natural' image with wavelet coefficients $w$ that is recovered exactly and a vector $w'$ with $\|w'\|_{(\mathbf{\omega},0)} \leq \|w\|_{(\mathbf{\omega},0)}$ which is not recovered. Therefore, either $\|w\|_{(\mathbf{\omega},0)} > s$ and $w$ is not \weightedsparse (so a theory based on weighted sparsity does not explain why $w$ is recovered) or $w'$ is \weightedsparse (but not recovered, so that the class of \weightedsparse vectors is too large).

We can show these results by expanding the `flip test' from section \ref{Section:flipTest}. The result is the \emph{generalised flip test}:
\begin{enumerate}
\item Let $U \in \complex^{m \times n}$ be the combined measurement operator and sparsifying transformation (e.g. $U = P_{\Omega}\text{DFT}\cdot\text{DWT}^{-1}$) and let $w$ be a `typical' vector that can be recovered by $U$ (using either $\ell^1$ or weighted-$\ell^1$ minimisation).
\item Threshold $w$ so that it is perfectly sparse and perfectly recovered and then set $w^1$ to be the vector with $w^1 = 1$ if $w_i \neq 0$ and $w_i = 0$ otherwise. Since there is no change in the location of the non-zero coefficients, $w^1$ can be perfectly recovered.

\item For a given collection of weights $\omega$, set $s$ to be the minimal value so that $w^1$ is \weightedsparse. If weighted sparsity is the right model for compressed sensing then all \weightedsparse vectors should be recovered exactly.
\item We now choose a vector $w^2$ with $\|w^2\|_{\mathbf{\omega},0} \leq \|w^1\|_{\mathbf{\omega},0}$, with each non-zero entry of $w^2$ set to $1$. We do this in such a way that $w^2$ has a larger number of non-zero entries contained in some level $l_0$ than $w^1$. Since $\|w^2\|_{\mathbf{\omega},0} \leq \|w^1\|_{\mathbf{\omega},0}$, $w^2$ is \weightedsparse and should be recovered exactly. If this does not happen, then weighted sparsity is not a complete description of the success of compressed sensing when applied to this specific $U$.
\end{enumerate}

As we can see in Figure \ref{fig:WeightedRIPL1}, there are many examples where weighted sparsity is insufficient to explain the success of compressed sensing. We shall provide additional insight as to why weighted sparsity is insufficient in Section \ref{Section:WeightedSparsityAndSMSparsity}.

\begin{remark}\label{remark:WeightedL1Uses}
It must be emphasised that weighted sparsity and the weighted RIP were developed in \cite{WeightedL1} for the purpose of recovering smooth functions with polynomials. Thus, one should not expect the weighted RIP to hold for wavelets.  Conversely, the RIP in levels may not work for polynomials, as unlike wavelets there is no level structure present. This fact demonstrates the finesses of compressed sensing theory and that we are in need for a collection of much more specific theorems using different sparsity models that depend on the problem.
\end{remark}

\begin{figure}
\begin{center}
\newcommand{\abrcap}[2]{\raisebox{80pt}{\begin{minipage}[t]{65pt}{\footnotesize
 #1x#1\\[15pt]Error:\\[2pt]}#2\%\end{minipage}}}%
\newcommand{\abrdesc}[5]{\raisebox{100pt}{\begin{minipage}[t]{60pt}{\footnotesize
  \res{#1}\\#2\%\\[7pt]#3$\cdot$#4\\[7pt]{#5}}\end{minipage}}}%
\newcommand{\abrmain}[1]{\includegraphics[width=0.45\textwidth]{#1}}%
\vspace{-8pt}
\centering 
Setup

\vspace{2pt}
\begin{tabular}[h]{c@{\hspace{4pt}}c@{\hspace{4pt}}c@{\hspace{0pt}}}
\hline
	The function $f$ after thresholding & The sampling pattern $\Omega$ used\\
	\abrmain{/WFlip/ExperN3224_SparseImage.pdf}
	&   \abrmain{/WFlip/ExperN3224_SP.pdf}
	\\
	
\end{tabular}

\vspace{10pt}
Standard CS
\vspace{2pt}

\begin{tabular}[h]{c@{\hspace{4pt}}c@{\hspace{4pt}}c@{\hspace{0pt}}}
\hline
	
		The vector $w^1$ (non-zero wavelet coeff. of $f$ set to $1$) & Perfect recovery of $w^1$ using $\Omega$\\
		\abrmain{/WFlip/ExperN3224_OrigWC.pdf}
	&\abrmain{/WFlip/ExperN3224_OrigWCRec.pdf}
	\\ 
\end{tabular}
\vspace{10pt}

CS after the generalised flip test

\vspace{2pt}
\begin{tabular}[h]{c@{\hspace{4pt}}c@{\hspace{4pt}}c@{\hspace{0pt}}}
\hline
	The vector $w^2$ with the same weighted sparsity as $w^1$ & Unsuccessful recovery of $w^2$ using $\Omega$\\	
	\abrmain{/WFlip/ExperN3224_MovedWC.pdf}
	&\abrmain{/WFlip/ExperN3224_RecoveredWC.pdf}
	\\ 
\end{tabular}
\begin{center}
\end{center}
\end{center}
\caption{The figure displays the generalised fliptest for weighted sparsity with the function $f(x) = \sin(x) \mathds{1}_{[0,0.3]} -10\cos(x) \mathds{1}_{(0.3,0.8]} + 9\mathds{1}_{(0.8,1]}$. Recovery was done using a subsampled $1$D fourier to wavelet matrix, with Daubechies $3$ wavelets and $\ell^1$ minimisation. The weights on the coefficients in level $i$ were given by $2^i$. Similar results follow for other weights and also for recovery with weighted $\ell^1$ minimisation.} 
\label{fig:WeightedRIPL1}
\end{figure}

\section{An extended theory for compressed sensing}
The current mathematical theory for compressive sensing revolves around a few key ideas. These are the concepts of sparsity, incoherence, uniform subsampling and the RIP. In \cite{SMSparsity} and \cite{AsymptoticStructure}, it was shown that these concepts are absent for a large class of compressed sensing problems. To solve this problem, the extended concepts of \emph{asymptotic sparsity}, \emph{asymptotic incoherence} and \emph{multi-level sampling} were introduced. We now introduce the fourth extended concept in the new theory of compressive sensing: the \emph{RIP in levels}. This generalisation of the RIP replaces the idea of \emph{uniform recovery of $s$-sparse signals} with that of \emph{uniform recovery of $(s,M)$-sparse signals}. We shall detail this new idea in this section.

\subsection{A level based alternative: the RIP in levels}\label{Section:RIPL}
The examples given in Figure \ref{fig:flipTestImage} all involve reconstructing in a basis that is divided into various levels. It is this level based structure that allows us to introduce a new concept which we term the `RIP in levels'. We shall demonstrate such a concept based on our observations with wavelets, which we give a brief description of in the following section. Despite our focus on wavelets in the next few pages, it should be noted that our work applies equally to all level based reconstruction bases. 
\subsubsection{Wavelets}
 A multiresolution analysis (as defined in \cite{Ingrid, Ingrid2,  MaaB}) for $L^2(X)$ (where $X$
 is an interval or a square) is formed by constructing increasing scaling spaces $(V_j)_{j=0}^{\infty}$ and wavelet spaces $(W_j)_{j=0}^{\infty}$ with $V_j,W_j \subset L^2(X)$  so that
\begin{enumerate}
\item If $f(\cdot) \in V_j$ then $f(2\cdot) \in V_{j+1}$, and vice-versa.   \label{MRADilationInvariance}
\item $\overline{\bigcup \limits_{j = 0}^{\infty} V_j} = L^2(X)$ and $\bigcap \limits_{j=0}^{\infty} V_j = \left\lbrace 0 \right\rbrace$.
\item $W_j$ is the orthogonal complement of $V_j$ in $V_{j+1}$.
\end{enumerate}
For natural images $f$, the largest coefficients in the wavelet expansion of $f$ appear in the levels corresponding to smaller $j$.
 Closer examination of the relative sparsity in each level also reveals a pattern: let $w$ be the collection of wavelet coefficients of $f$ and for a given level $k$ let $S^k$ be the indices of all wavelet coefficients of $f$ in the $k$th level. Additionally, let $\mathcal{M}_n$ be the largest (in absolute value) $n$ wavelet coefficients of $f$. Given $\epsilon \in [0,1]$, we define the functions $s(\epsilon)$ and $s_k(\epsilon)$ (as in \cite{SMSparsity}) by
\begin{align*}
s(\epsilon) &:= \min \left \lbrace n : \left \| \sum_{i \in \mathcal{M}_n} w_i \right\| \geq \epsilon \,\left \|\sum_{k=0}^{\infty} \sum_{i \in S^k} w_i \right\|\right\rbrace, \quad s_k(\epsilon) := |\mathcal{M}_{s(\epsilon)} \cap S^k|.
\end{align*}
\begin{equation*} 
\end{equation*}
More succinctly, $s_k(\epsilon)$ represents the relative sparsity of the wavelet coefficients of $f$ at the $k$th scale. If an image is very well represented by wavelets, we would like $s_k(\epsilon)$ to be as small as possible for $\epsilon$ close to 1. However, one can make the following observation: if we define the numbers $M_k$ so that $M_k - M_{k-1} = |S^k|$ (the reason for this choice of notation will become clear in Section \ref{section:FlipTestInLevels}) then the ratios $\frac{s_k(\epsilon)}{M_k - M_{k-1}}$ decay very rapidly for a fixed $\epsilon$. Numerical examples showing this phenomenon with Haar Wavelets are displayed in Figure \ref{fig:skeps}. Summarising, we observe that the sparsity of typical images has a structure which the traditional RIP ignores. \vspace{10pt}

\begin{figure}
\begin{center}
\newcommand{\abrcap}[2]{\raisebox{80pt}{\begin{minipage}[t]{65pt}{\footnotesize
 #1x#1\\[15pt]Error:\\[2pt]}#2\%\end{minipage}}}%
\newcommand{\abrdesc}[5]{\raisebox{100pt}{\begin{minipage}[t]{60pt}{\footnotesize
  \res{#1}\\#2\%\\[7pt]#3$\cdot$#4\\[7pt]{#5}}\end{minipage}}}%
\newcommand{\abrflip}[1]{\includegraphics[height=0.27\textwidth]{#1}}%
\newcommand{\abrmain}[1]{\includegraphics[width=0.27\textwidth]{#1}}%
\vspace{-10pt}
\begin{tabular}[h]{c@{\hspace{4pt}}c@{\hspace{4pt}}c@{\hspace{0pt}}}
	
	\abrmain{image_College1GrayCS}
	&\abrflip{College1SKEPS.pdf}
	&\abrflip{College1SmallSK.pdf}
	\\	
	\abrmain{image_RocksGrayCSHad}
	&\abrflip{RocksSKEPS}
	&\abrflip{RocksSmallSK}
	\\ 
\end{tabular}
\caption{The relative sparsity of Haar wavelet coefficients of two image. The leftmost column displays the image in question. The middle and final columns display the values of $s_k(\epsilon)$ for $\epsilon \in [0.5,1]$ and $\epsilon \in [0.86,1]$ respectively, where $k$ represents a wavelet level. Of particular importance is the rapid decay of $s_k(\epsilon)$ as $k$ grows larger.}
\label{fig:skeps}
\end{center}
\end{figure}

\subsubsection{$\smsparse$-sparsity and the flip test in levels}\label{section:FlipTestInLevels}
Theorem \ref{Theorem:RIPtypeTheorem} and any similar theorems all suggest that we are able to recover all $s$-sparse vectors exactly, independent of which levels the $s$-sparse vectors are supported on.  Instead of such a stringent requirement, we can take advantage of the structure of our problem, a concept that is already popular from the recovery point of view \cite{BaranuikModelCS, Volkan, HeCarinStructCS, HeCarinTreeCS}.  We have observed that, for wavelets,  as $k \to \infty$ $s_k(\epsilon)/(M_k - M_{k-1})$ decays rapidly (see Figure \ref{fig:skeps}). To further understand this phenomenon, in \cite{SMSparsity} the concept of $\smsparse$-sparsity was introduced.

\begin{definition}[$\smsparse$-sparsity]\label{def:SMSPARSEDefinition}
Let $\mathbf{M}=(M_0,M_1, \dotsc, M_{l}) \in \mathbb{N}^{l+1}$ with $1 \leq M_1 < M_2 < \dotsb <M_{l}$ and $M_0 = 0$, where the natural number $l$ is called the \emph{number of levels}. Additionally, let $
\mathbf{s}=(s_1,s_2, \dotsc, s_l) \in \mathbb{N}^l$ with $s_i \leq M_i - M_{i-1}.$ We call $\smsparse$ a \emph{sparsity pattern}.

A set $\Lambda$ of integers is said to be $\smsparse$-sparse if 
$\Lambda \subset \left\lbrace M_0+1,M_0+2, \dotsc, M_l \right\rbrace 
$ and for each $i \in \left\lbrace 1,2 ,\dotsc, l\right\rbrace$, we have
$
|\Lambda \cap \left\lbrace M_{i-1}+1,M_{i-1}+2, \dotsc, M_{i} \right\rbrace|  \leq s_i.
$
A vector is said to be $\smsparse$-sparse if its support is an $\smsparse$-sparse set.
The collection of $\smsparse$-sparse vectors is denoted by $\Sigma_{\mathbf{s,M}}$.
We can also define $\sigma_{\mathbf{s,M}}(x)_1$ as a natural extension of $\sigma_{s}(x)_1$. Namely,
$
\sigma_{\mathbf{s,M}}(x)_1:= \min\limits_{\widehat{x} \in \Sigma_{\mathbf{s,M}}} \|x - \widehat{x}\|_1. $
\end{definition}

\begin{remark}\label{rem:SMSPARSETechnicality}
If $\smsparse$ is a sparsity pattern, we will sometimes refer to $(a\mathbf{s,M})$-sparse sets for some natural number $a$ even though $as_i$ may be larger than  $M_i - M_{i-1}$. To make sense of such a statement, we define (in this context) 
$$
a\mathbf{s} := \left(\min(as_1,M_1 - M_0),\min(as_2,M_2 -M_1), \dotsc ,\min(as_l,M_{l}-M_{l-1})\right).
$$
\end{remark}
Let us now look at a specific case where $\smsparse$ represent wavelet levels (again, we emphasise that wavelets are simply one example of a level based system and that our work is more general). Roughly speaking, we can choose $\mathbf{s}$ and $\mathbf{M}$ such that $x$ is $\smsparse$-sparse if it has fewer non-zero coefficients in the finer wavelet levels. As in Theorem \ref{Theorem:RIPtypeTheorem} (where we take $\epsilon = 0$ corresponding to an absence of noise), we shall examine solutions $\tilde{x}$ to the problem 
\begin{equation}\label{eq:l1ProblemClose}
\min \|\widehat{x}\|_1 \text{ such that } Ux = U\widehat{x}.
\end{equation} 
Instead of asking for $
\|x-\tilde{x}\|_1 \leq C \sigma_{s}(x)_1
$ and $
\|x-\tilde{x}\|_2 \leq C \sigma_{s}(x)_1/\sqrt{s},$
we might instead look for a condition on $U$ that allows us to conclude that
\begin{align}
\|x-\tilde{x}\|_1 &\leq C \sigma_{\mathbf{s,M}}(x)_1 \,\,\,\text{and} \label{equation:conclusionOfRIPInLevelsTypeTheorem}\\
\|x-\tilde{x}\|_2 &\leq  \frac{C\sigma_{\mathbf{s,M}}(x)_1}{\sqrt{s_1 + s_2 + \dotsb + s_l}}, \label{equation:conclusionOfRIPInLevelsTypeTheoremL2}
\end{align}
for some constant $C$, independent of $x$ and $s,\eta_{\mathbf{s,M}}$. In Section \ref{Section:flipTest}, we saw that there was a simple test that was able to tell us if a matrix does not exhibit the RIP. However, the argument in Section \ref{Section:flipTest} does not hold if we only insist on results of a form similar to equations \eqref{equation:conclusionOfRIPInLevelsTypeTheorem} and \eqref{equation:conclusionOfRIPInLevelsTypeTheoremL2}. Instead, we can describe a `flip test in levels' in the following way:
\begin{enumerate}
\item Suppose that a matrix $U \in \complex^{m \times n}$ satisfies \eqref{equation:conclusionOfRIPInLevelsTypeTheorem} and \eqref{equation:conclusionOfRIPInLevelsTypeTheoremL2} when solving an $\ell^1$ minimisation problem of the same form as \eqref{eq:l1ProblemClose}. Take a sample vector $x^1$ and compute $y^1:=Ux^1$. Set $\widetilde{x}^1$ to be the solution to \eqref{eq:l1ProblemClose} where we seek to reconstruct $x^1$ from $y^1$. 
\item For an operator $Q$ which permutes entries of vectors in $\complex^n$ such that $Q(\Sigma_{\mathbf{s,M}}) = \Sigma_{\mathbf{s,M}}$, set $x^2:=Qx^1$, and compute $y^2 := Ux^2$. Again, set $\widetilde{x}^2$ to be the solution to \eqref{eq:l1ProblemClose} where we seek to reconstruct $x^2$ from $y^2$.
\item $Q(\Sigma_{\mathbf{s,M}}) = \Sigma_{\mathbf{s,M}}$ implies that $C \sigma_{\mathbf{s,M}}(x^1)_1  = C \sigma_{\mathbf{s,M}}(x^2)_1$, because
\begin{align*}
\sigma_{\mathbf{s,M}}(x^2)_1&= \min\limits_{\widehat{x}^2 \in \Sigma_{\mathbf{s,M}}} \|x^2 - \widehat{x}^2\|_1 = \min\limits_{\widehat{x}^2 \in Q(\Sigma_{\mathbf{s,M}})} \|x^2 - \widehat{x}^2\|_1\\
&= \min\limits_{\widehat{x}^1 \in \Sigma_{\mathbf{s,M}}} \|x^2 - Q\widehat{x}^1\|_1 = \min\limits_{\widehat{x}^1 \in \Sigma_{\mathbf{s,M}}} \|Qx^1 - Q\widehat{x}^1\|_1\\
&= \min\limits_{\widehat{x}^1 \in \Sigma_{\mathbf{s,M}}} \|x^1 - \widehat{x}^1\|_1 = \sigma_{\mathbf{s,M}}(x^1)_1
\end{align*}
where we have used the fact that permutations are isometries in the transition to the final line.
\item From \eqref{equation:conclusionOfRIPInLevelsTypeTheorem}, we have $\|x^2-\widetilde{x}^2\|_1 \leq C \sigma_{\mathbf{s,M}}(x^1)_1$. Again, since permutations are isometries, we see that $\|x^1-Q^{-1}\widetilde{x}^2\|_1 \leq C \sigma_{\mathbf{s,M}}(x^1)_1.$
\item If $x^1$ is well approximated by $\widetilde{x}^1$ and equations \eqref{equation:conclusionOfRIPInLevelsTypeTheorem} and \eqref{equation:conclusionOfRIPInLevelsTypeTheoremL2} are in some sense optimal, we expect 
$\|x^1-\widetilde{x}^1\|_1 \approx C \sigma_{\mathbf{s,M}}(x^1)_1 $. Thus,  $x^1$ should also be well approximated by $Q^{-1}\widetilde{x}^2$ since $\|x^1 - Q^{-1}\widetilde{x}^2\|_1 \leq C \sigma_{\mathbf{s,M}}(x^2)_1 =  C \sigma_{\mathbf{s,M}}(x^1)_1$.
\item If this fails, then equations \eqref{equation:conclusionOfRIPInLevelsTypeTheorem} and \eqref{equation:conclusionOfRIPInLevelsTypeTheoremL2} cannot be the reason for the success of compressed sensing with the matrix $U$.
\end{enumerate}
 \begin{table}
 \caption{Flip test in levels with randomly generated permutations\label{table:flipInLevelsMassPermutations}}
 \begin{center}
 \begin{tabular}[H]{ |c|c|c|c|c| }
 \hline
 Image & Subsampling percentage & Max & Min & Standard deviation \\ \cline{1-5}
  College 1 & $12.48\%$ & $4.0560\%$ & $3.8955\%$ & $0.0642\%$\\ \cline{1-5}
  College 2 & $97.17\%$ & $0.4991\%$ & $0.4983\%$ & $0.0001\%$\\ \cline{1-5}
  Rocks & $21.32\%$   & $5.6943\%$ & $5.6049\%$ & $0.0115\%$ \\ \cline{1-5}
  College 3 & $27.51\%$ & $2.0074\%$ & $1.9860\%$ & $0.0032\%$ \\ \cline{1-5}
 \hline
 \end{tabular}
 \vspace{10pt}
 \caption*{A table displaying relative error percentages $\|x-\widetilde{x}\|_2/\|x\|_2$ for various images $x$ as in Figure \ref{fig:flipInLevelsTestImage} with recovered image $\widetilde{x}$. Each image was processed with a fixed subsampling pattern and $1000$ randomly generated permutations as described in Section \ref{section:FlipTestInLevels}. The columns labelled `Max', `Min' and `Standard deviation' list the maximum, minimum and standard deviation of the relative errors taken over all tested permutations. }
 \end{center}
 \end{table}
The requirement that $Q(\Sigma_{\mathbf{s,M}}) = \Sigma_{\mathbf{s,M}}$ now requires us to consider different permutations than a simple reverse permutation as in Section \ref{Section:flipTest}. A natural adaptation of $Q_{\text{reverse}}$ to this new `flip test in levels' is a permutation that just reverses coefficients within each wavelet level. 
Figure \ref{fig:flipInLevelsTestImage} displays what happens when we attempt to do the flip test with this permutation. In this case, we see that the performance of CS reconstruction under flipping and the performance of standard CS reconstruction are very similar. This suggests that uniform recovery within the class of $\smsparse$-sparse vectors (as in \ref{equation:conclusionOfRIPInLevelsTypeTheorem} and \ref{equation:conclusionOfRIPInLevelsTypeTheoremL2}) is possible with a variety of practical compressive sensing matrices. Indeed, in Table \ref{table:flipInLevelsMassPermutations} we also consider a collection of randomly generated $Q$ with $Q(\Sigma_{\mathbf{s,M}}) = \Sigma_{\mathbf{s,M}}$. We see that once again, performance with permutations within the levels is similar to standard CS performance. 

\begin{figure}
\begin{center}
\newcommand{\abrcap}[2]{\raisebox{80pt}{\begin{minipage}[t]{65pt}{\footnotesize
 #1x#1\\[15pt]Error:\\[2pt]}#2\%\end{minipage}}}%
\newcommand{\abrdesc}[6]{\raisebox{110pt}{\begin{minipage}[t]{60pt}{\footnotesize
  #6\\\res{#1}\\#2\%\\[4pt]#3$\cdot$#4\\[4pt]{#5}}\end{minipage}}}%
\newcommand{\abrflip}[1]{\includegraphics[width=0.28\textwidth]{#1}}%
\vspace{-10pt}
\begin{tabular}{@{\hspace{0pt}}c@{\hspace{0pt}}c@{\hspace{3pt}}c@{\hspace{3pt}}c@{\hspace{0pt}}}
& CS reconstruction & CS w/ flip in levels & Subsampling pattern 
used\\[5pt]
  \abrdesc{2048}{12}{$\mathrm{DFT}$}{$\mathrm{DWT}^{-1}_3$}{MRI,\\ Spectroscopy, Radio-\\interferometry}{College 1}
   &\abrflip{image_College1GrayCS}
  &\abrflip{image_College1GrayCSLevels}
  &\abrflip{image_College1GraySampling}
 \\
   \abrdesc{2048}{97}{$\mathrm{DFT}$}{$\mathrm{DWT}^{-1}_4$}{MRI,\\ Spectroscopy, Radio-\\interferometry}{College 2}
    &\abrflip{image_College3GrayCS}
      &\abrflip{image_College3GrayCSLevels}
      &\abrflip{image_College3GraySampling}
  \\
     \abrdesc{2048}{12}{$\mathrm{HAD}$}{$\mathrm{DWT}^{-1}_2$}{Comp. imag.,\\ Hadamard spectroscopy, Fluorescence microscopy}{Rocks}
      &\abrflip{image_RocksGrayCSHad}
        &\abrflip{image_RocksGrayCSLevelsHad}
        &\abrflip{HadamardSampling}
 \\
      \abrdesc{2048}{27.5}{$\mathrm{DFT}$}{$\mathrm{DWT}^{-1}_3$}{Electron \\microscopy,\\ Computerised\\ tomography}{College 3}
       &\abrflip{image_College4RRCS}
         &\abrflip{image_College4RRCSLevels}
         &\abrflip{image_College4SP}
\end{tabular}
\caption{Standard reconstruction (left column) and the result of the flip test where the wavelet coefficients are flipped within each level (middle 
column). The right column shows the subsampling pattern used. The percentage shown 
is the fraction of Fourier or Hadamard coefficients that 
were sampled. $\mathrm{DWT}^{-1}_N$ denotes the Inverse Wavelet Transform corresponding to Daubechies wavelets with $N$ vanishing moments.}
\label{fig:flipInLevelsTestImage}
\end{center}
\end{figure}
\subsubsection{Relating $(\mathbf{s,M})$-sparsity and weighted sparsity}\label{Section:WeightedSparsityAndSMSparsity}
The `flip test in levels' suggests that for many compressed sensing problems, there are $\mathbf{s}$ and $\mathbf{M}$ such that all $\smsparse$-sparse vectors are recovered equally well by $\ell^1$ minimisation. With this in mind, we are now in a position to provide additional details on why the same is not the case for weighted sparsity. Indeed, one can easily state and prove the following theorem (see \cite{GeneralisedFlipTest}
for details):
		\begin{theorem}\label{WRIProblems}
	Let $\mathbf{s,M}$ have $l$ levels and fix $r < l$. Suppose that the collection of $\smsparse$-sparse vectors are all $(\mathbf{\omega},X)$-weighted sparse for some $X$. Then there is an $l_0$ with $r < l_0 < l$ such that the collection of $(\mathbf{\tilde{\mathbf{s}},M})$-sparse are also $(\mathbf{\omega},X)$-weighted sparse, where $$\tilde{\mathbf{s}} = (\underbrace{s_1,s_2,\dotsc, s_r}_{r},\underbrace{0,0,\dotsc,0}_{l_0-1-r}, (l-r)s_{l_0},0,\dotsc,0).$$
	\end{theorem}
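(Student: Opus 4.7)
My plan is to reduce the statement to a purely combinatorial averaging inequality on the sorted weights per level. For each $k \in \{1,\ldots,l\}$, let $W_k$ denote the sum of the $s_k$ largest values of $\omega_j^2$ taken over $j \in \{M_{k-1}+1,\ldots,M_k\}$. Since an $\smsparse$-sparse set contains at most $s_k$ indices in level $k$, the extremal weighted $\ell^0$ norm over $\Sigma_{\smsparse}$ is exactly $\sum_{k=1}^l W_k$, so the hypothesis is equivalent to $\sum_{k=1}^l W_k < X$. In the same way, for a candidate $l_0 \in \{r+1,\ldots,l-1\}$, letting $\tilde W_{l_0}$ be the sum of the top $\min((l-r)s_{l_0},\,M_{l_0}-M_{l_0-1})$ values of $\omega_j^2$ in level $l_0$ (the cap honours Remark~\ref{rem:SMSPARSETechnicality}), the maximum of $\|x\|_{\omega,0}$ over $\Sigma_{\tilde{\mathbf{s}},\mathbf{M}}$ equals $\sum_{k=1}^r W_k + \tilde W_{l_0}$. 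Consequently the conclusion reduces to producing an $l_0$ with $\tilde W_{l_0} \leq \sum_{k=r+1}^l W_k$.

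The second ingredient is a sub-linearity observation for sorted partial sums. Arrange the values $\omega_j^2$ in level $l_0$ in decreasing order; since the sequence is monotone, the average of the top $(l-r)s_{l_0}$ entries cannot exceed the average of the top $s_{l_0}$ entries, which gives the inequality $\tilde W_{l_0} \leq (l-r)\,W_{l_0}$. Coupled with a pigeonhole choice of $l_0$ as an index minimising $W_k$ over the tail $\{r+1,\ldots,l\}$, one has $W_{l_0} \leq \tfrac{1}{l-r}\sum_{k=r+1}^l W_k$, and chaining the two inequalities yields $\tilde W_{l_0} \leq \sum_{k=r+1}^l W_k$ as required.

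The main obstacle is ensuring that the minimising index can be taken strictly less than $l$. A direct minimisation could land at $l_0 = l$, which is forbidden by the constraint $r < l_0 < l$. Closing this gap requires restricting the minimisation to $\{r+1,\ldots,l-1\}$ and refining the sub-linearity estimate, e.g.\ via $\tilde W_{l_0} \leq (l-r-1)W_{l_0} + W_{l_0}$ combined with the pigeonhole bound on $\{r+1,\ldots,l-1\}$ and the structural monotonicity of weights across levels in the applications of interest (as flagged in Remark~\ref{remark:WeightedL1Uses}, weighted sparsity is typically set up so that $W_k$ is non-decreasing in $k$, ruling out $l$ as the unique minimiser). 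This is precisely the place where the careful level-by-level bookkeeping of \cite{GeneralisedFlipTest} does the heavier lifting; the remainder of the argument is the clean two-step averaging above.
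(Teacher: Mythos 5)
The paper itself gives no proof of this theorem --- it is stated with a pointer to \cite{GeneralisedFlipTest} --- so there is nothing internal to compare against. Judged on its own terms, your reduction and the two-step averaging argument are correct and essentially complete: the hypothesis is indeed equivalent to $\sum_{k=1}^{l} W_k < X$, the sorted-block estimate $\tilde W_{l_0} \le (l-r)W_{l_0}$ is valid (split the top $(l-r)s_{l_0}$ entries into $l-r$ consecutive blocks of $s_{l_0}$, each dominated by the first), and the pigeonhole choice of $l_0$ minimising $W_k$ over $\{r+1,\dotsc,l\}$ closes the chain $\tilde W_{l_0} \le (l-r)W_{l_0} \le \sum_{k=r+1}^{l} W_k$, whence $\sum_{k=1}^{r}W_k + \tilde W_{l_0} \le \sum_{k=1}^{l}W_k < X$.

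The one place you should be firmer is the endpoint issue, and the right resolution is not the one you sketch. The constraint $r < l_0 < l$ cannot be salvaged in general: for $l=2$, $r=1$ there is no integer strictly between $1$ and $2$, and for $l=3$, $r=1$ one can put $2s_2$ equal, very large weights in level $2$ and weights equal to $1$ in level $3$, so that $\tilde W_2 = 2W_2 > W_2 + W_3$ and the forced choice $l_0=2$ fails. So the strict upper bound on $l_0$ is a defect of the statement (the intended reading is $r < l_0 \le l$), not a gap your proof is obliged to fill. Your proposed patch --- restricting the minimisation to $\{r+1,\dotsc,l-1\}$ and invoking monotonicity of the weights across levels ``in the applications of interest'' --- imports an assumption that appears nowhere in the hypotheses and should be deleted; with $l_0$ allowed to equal $l$ your two inequalities already constitute a complete proof, and the extremal case $l_0=l$ is exactly the trivial one where $\tilde{\mathbf{s}}$ concentrates the budget in the last level.
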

The use of this theorem becomes apparent if we consider Figure \ref{fig:WeightedRIPL1}. Recall that the Fourier to Wavelet matrix in Figure \ref{fig:WeightedRIPL1} is well approximated by block diagonal matrices (c.f. Figure \ref{fig:absMatrices}). This block diagonality structure means that we can design our sampling pattern so that information corresponding to coarser wavelet levels is more readily captured than the information corresponding to the finer wavelet levels. Typically the first $r$ levels will be fully sampled, but after that subsampling occurs and this is where we run into difficulties with weighted sparsity. If we suppose that recovering all vectors with $s_k$ non zero coefficients in the indices corresponding to the $k$th wavelet level takes $\Omega_k$ measurements in that level, then recovering all weighted sparse vectors requires $(l-r)\Omega_{k}$ measurements for some $k$. Unfortunately, this leads to weighted sparsity overestimating the number of measurements required to recover all vectors of interest.

\subsubsection{The RIP in levels}
Given the success of the `flip test in levels', let us now try to find a sufficient condition on a matrix $U \in \complex^{m \times n}$ that allows us to conclude \eqref{equation:conclusionOfRIPInLevelsTypeTheorem} and \eqref{equation:conclusionOfRIPInLevelsTypeTheoremL2}. If the RIP implies \eqref{equation:conclusionOfRIPTypeTheorem} then the obvious idea is to extend the RIP to a so-called `RIP in levels', defined as follows:
\begin{definition}[RIP in levels]
\label{def:RIPInLevels}
For a given sparsity pattern $\smsparse$ and matrix $U \in \complex^{m \times n}$, the \emph{RIP in levels} ($\mathrm{RIP}_L$) constant of order $\smsparse$, denoted by $\delta_{\mathbf{s,M}}$, is the smallest $\delta> 0$ such that
\begin{equation*}
(1-\delta)\|x\|^2_2 \leq \|Ux\|^2_2 \leq (1+\delta) \|x\|^2_2
\end{equation*}
for all $x \in \Sigma_{\mathbf{s,M}}$.
\end{definition}

We will see that the RIP in levels allows us to obtain error estimates on $\|x-\widetilde{x}\|_1$ and $\|x-\widetilde{x}\|_2$.

\section{Main results}\label{Section:Results} 
If a matrix $U \in \complex^{m \times n}$ satisfies the RIP then we have control over the values of $\|Ue_i\|_2$ where $i \in \lbrace 1,2, \dotsc, n\rbrace$ and $e_i$ is the $i$-th standard basis element of $\complex^n$. To ensure that the same thing happens with the $\ripl$ we make the following two definitions:
\begin{definition}[Ratio constant]
\label{Definition:RatioConstant}
The \emph{ratio constant} of a sparsity pattern $\smsparse$, which we denote by $\eta_{\mathbf{s,M}}$, is given by
$\eta_{\mathbf{s,M}} := \max_{i,j} s_i/s_j.$

If the sparsity pattern $\smsparse$ has $l$ levels and there is a $j \in \lbrace 1,2,\dotsc,l \rbrace$ for which $s_j = 0$ then we write $\eta_{\mathbf{s,M}} = \infty$.
\end{definition}

\begin{definition}\label{def:CoverMatrix}
A sparsity pattern $\smsparse$ is said to \emph{cover} a matrix $U \in \complex^{m \times n}$ if 
\begin{enumerate}
\item $\etasm < \infty$ \label{Assumption:EtaFinite}
\item $M_{l} \geq n$ where $l$ is the number of levels for $\smsparse$. \label{Assumption:GoesToEndOfMatrix}
\end{enumerate}
\end{definition}

If a sparsity pattern does not cover $U$ because it fails to satisfy either \ref{Assumption:EtaFinite} or \ref{Assumption:GoesToEndOfMatrix} from the definition of a sparsity pattern covering a matrix $U$ then we cannot guarantee recovery of $\smsparse$-sparse vectors, even in the case that $\delta_{\mathbf{s,M}} = 0$. We shall justify the necessity of both conditions using two counterexamples. Firstly, we shall provide a matrix $U$, a sparsity pattern $\smsparse$ and an $\smsparse$-sparse vector $x_1 \in \complex^n$ such that $\etasm = \infty$, $\delta_{\mathbf{s,M}}  = 0$ and $x_1$ is not recovered by standard $\ell^1$ minimisation. Indeed, consider the following
\begin{equation*}
U = \begin{pmatrix}
1 & 2 \\
0 & 0 \\
\end{pmatrix}\text{, }\hspace{6pt}\mathbf{s} = (1,0) \text{, }\hspace{6pt} \mathbf{M} = (0,1,2)\text{, }\hspace{6pt} x_1 = \begin{pmatrix} 1\\ 0 \end{pmatrix}. 
\end{equation*}
By the definition of $\etasm$, we have $\etasm = \infty$ and it is obvious that $\delta_{\mathbf{s,M}} = 0$. Furthermore, even without noise, $x_1$ does not solve the minimisation problem $\min \|\tilde{x}\|_1 \text{ such that } Ux_1 = U\tilde{x}.$ This can easily be seen by observing that $Ux_1 = Ux_2$ with $\|x_2\|_1 = \frac{1}{2}$ where $x_2 := (0, 1/2)^{T}$.
It is therefore clear that Assumption \ref{Assumption:EtaFinite} is necessary. We shall now provide an explanation for why Assumption \ref{Assumption:GoesToEndOfMatrix} is also a requirement if we wish for the $\ripl$ to be a sufficient condition for the recovery of $\smsparse$-sparse vectors. This time, consider the following combination of $U$, $\smsparse$ and $x_1$:
\begin{equation*}
U = \begin{pmatrix}
1 & 0 & 2 \\
0 & 1 & 0 \\
\end{pmatrix}\text{, }\hspace{6pt}\mathbf{s} = (1) \text{, }\hspace{6pt} \mathbf{M} = (0,1)\text{, }\hspace{6pt} x_1 = (1, 0, 0)^T. 
\end{equation*}
and again, even though $\delta_{\mathbf{s,M}} = 0$, recovery is not possible because $Ux_1 = Ux_2$ with $\|x_2 \|_1 = 1/2$ where
$x_2:= (0, 0, 1/2)^T$.

We shall therefore try to prove a result similar to Theorem \ref{Theorem:RIPtypeTheorem} for the $\ripl$ under the assumption that $\smsparse$ covers $U$.  
We need one further definition to state a result equivalent to Theorem \ref{Theorem:RIPtypeTheorem} for the $\ripl$. In equation \eqref{equation:conclusionOfRIPTypeTheorem} the bound on $||x-\widetilde{x}||_1$ involves $\sqrt{s}$. This arises because $s$ is the maximum number of non-zero values that could be in an $s$-sparse vector. The equivalent for $\smsparse$-sparse vectors is the following:
\begin{definition}\label{def:NumElements}
The \emph{number of elements} of a sparsity pattern $\smsparse$, which we denote by $\widetilde{s}$, is given by $\tilde{s}:= s_1+s_2+\dotsb + s_l$.
\end{definition}
To prove that a sufficiently small RIP in levels constant implies an equation of the form \eqref{equation:conclusionOfRIPInLevelsTypeTheorem}, it is natural to adapt the steps used in  \cite{4over41Paper} to prove Theorem \ref{Theorem:RIPtypeTheorem}.
This adaptation yields a sufficient condition for recovery even in the noisy case.

\begin{theorem}\label{Theorem:RIPInLevelsRecoveryTheorem}
Let $\smsparse$ be a sparsity pattern with $l$ levels and ratio constant $\eta_{\mathbf{s,M}}$. 
Suppose that the matrix $U \in \mathbb{C}^{m \times n}$ is covered by $\smsparse$ and has a $\ripl$ constant $\delta_{2\mathbf{s,M}}$ satisfying
\begin{equation}\label{eq:RequirementOnDelta2sM}
\delta_{2\mathbf{s,M}} < \frac{1}{\sqrt{l \left(\sqrt{\eta_{\mathbf{s,M}}}+\frac{1}{4}\right)^2 + 1}}.
\end{equation}
Furthermore, suppose that $x \in \mathbb{C}^n$ and $y \in \mathbb{C}^m$ satisfy $\|Ux-y\|_2 \leq \epsilon$. 
Then any $\widetilde{x} \in \complex^n$ which solve the $\ell^1$ minimisation problem
\begin{equation*} 
\min_{\widehat{x} \in \complex^n} \|\widehat{x}\|_1 \text{ \emph{subject to} } \|U\widehat{x} - y\|_2 \leq \epsilon 
\end{equation*}
also satisfy
\begin{align} 
\|x-\widetilde{x}\|_1 &\leq C_1 \sigma_{\mathbf{s,M}}(x)_1 + D_1 \sqrt{\tilde{s}}\,\epsilon \,\,\,\text{and} \label{eq:RIPLL1EstimateImplication}\\ 
\|x - \widetilde{x}\|_2 &\leq   \frac{\sigma_{\mathbf{s,M}}(x)_1}{\sqrt{\tilde{s}}} \left(C_2+C'_2\sqrt[4]{l\eta_{\mathbf{s,M}} }\right) + \epsilon \left(D_2+D'_2 \sqrt[4]{l\eta_{\mathbf{s,M}}}\right)\label{eq:RIPLL2EstimateImplication}
\end{align}
where $C_1,C_2,C'_2,D_1,D_2$ and $D'_2$ depend only on $\delta_{2\mathbf{s,M}}$.
\end{theorem}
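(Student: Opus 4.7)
The plan is to adapt Foucart's $4/\sqrt{41}$ argument from \cite{4over41Paper} to the level-based setting, the key new ingredient being a tail estimate that respects the level structure and picks up the geometric factor $\sqrt{l\,\eta_{\mathbf{s,M}}/\tilde s}$ in place of the usual $1/\sqrt{s}$. Write $h=\widetilde x-x$, let $T$ be the support of a best $\smsparse$-sparse approximation to $x$ (so $\|x_{T^c}\|_1=\sigma_{\mathbf{s,M}}(x)_1$), and note that the triangle inequality applied to the feasibility constraints gives $\|Uh\|_2\le 2\epsilon$. The $\ell^1$-minimality of $\widetilde x$ yields the familiar cone-type bound $\|h_{T^c}\|_1\le\|h_T\|_1+2\sigma_{\mathbf{s,M}}(x)_1$.

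Next, I decompose $h_{T^c}$ level by level: within level $i$, sort the entries of $h_{T^c}$ by decreasing magnitude and group them into consecutive blocks $T_j^i$ of size $s_i$. Then each $T_j:=\bigcup_{i=1}^l T_j^i$ is $\smsparse$-sparse and $T\cup T_1\in\Sigma_{2\mathbf{s,M}}$. Magnitude sorting gives $\|h_{T_j^i}\|_\infty\le\|h_{T_{j-1}^i}\|_1/s_i$, hence $\|h_{T_j^i}\|_2^2\le\|h_{T_{j-1}^i}\|_1^2/s_i$. The hypothesis $\etasm<\infty$ forces $s_i\ge\tilde s/(l\,\etasm)$ for every $i$, whence
\[
\|h_{T_j}\|_2^2=\sum_i\|h_{T_j^i}\|_2^2\le\frac{l\,\etasm}{\tilde s}\sum_i\|h_{T_{j-1}^i}\|_1^2\le\frac{l\,\etasm}{\tilde s}\|h_{T_{j-1}}\|_1^2\qquad (j\ge 2),
\]
and summing yields the master tail bound $\sum_{j\ge 2}\|h_{T_j}\|_2\le\sqrt{l\,\etasm/\tilde s}\,\|h_{T^c}\|_1$.

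To bound $\|h_{T\cup T_1}\|_2$ with only $\delta_{2\mathbf{s,M}}$, I apply the $\ripl$ to $h_{T\cup T_1}\in\Sigma_{2\mathbf{s,M}}$ and expand
\[
(1-\delta_{2\mathbf{s,M}})\|h_{T\cup T_1}\|_2^2\le\|Uh_{T\cup T_1}\|_2^2=\langle Uh_{T\cup T_1},Uh\rangle-\sum_{j\ge 2}\langle Uh_{T\cup T_1},Uh_{T_j}\rangle.
\]
The first inner product is controlled by $2\epsilon\sqrt{1+\delta_{2\mathbf{s,M}}}\,\|h_{T\cup T_1}\|_2$. For each cross term I use the Foucart-style split $\langle Uh_{T\cup T_1},Uh_{T_j}\rangle=\langle Uh_T,Uh_{T_j}\rangle+\langle Uh_{T_1},Uh_{T_j}\rangle$; since $h_T\pm h_{T_j}$ and $h_{T_1}\pm h_{T_j}$ both lie in $\Sigma_{2\mathbf{s,M}}$, polarization and the $\ripl$ give $|\langle Uh_T,Uh_{T_j}\rangle|\le\delta_{2\mathbf{s,M}}\|h_T\|_2\|h_{T_j}\|_2$ and similarly for the other piece. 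An appropriately weighted combination (from which the $\tfrac14$ in \eqref{eq:RequirementOnDelta2sM} originates) together with the tail bound and the cone inequality $\|h_{T^c}\|_1\le\sqrt{\tilde s}\,\|h_{T\cup T_1}\|_2+2\sigma_{\mathbf{s,M}}(x)_1$ yields a linear inequality in $\|h_{T\cup T_1}\|_2$. The threshold \eqref{eq:RequirementOnDelta2sM} is precisely what makes the coefficient on the left positive, and rearranging bounds $\|h_{T\cup T_1}\|_2$ by a constant multiple of $\epsilon$ plus $\sigma_{\mathbf{s,M}}(x)_1/\sqrt{\tilde s}$.

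The error estimates then follow by combination. The $\ell^1$ bound \eqref{eq:RIPLL1EstimateImplication} comes from $\|h\|_1\le 2\|h_T\|_1+2\sigma_{\mathbf{s,M}}(x)_1\le 2\sqrt{\tilde s}\,\|h_{T\cup T_1}\|_2+2\sigma_{\mathbf{s,M}}(x)_1$ via Cauchy--Schwarz on $T$, which carries no level blow-up. The $\ell^2$ bound \eqref{eq:RIPLL2EstimateImplication} comes from $\|h\|_2\le\|h_{T\cup T_1}\|_2+\sum_{j\ge 2}\|h_{T_j}\|_2$ combined with the master tail bound; substituting the cone inequality and the $\ell^1$ estimate for $\|h_{T^c}\|_1$ produces the $\sqrt[4]{l\,\etasm}$ factor after one balances the naive tail control with the sharper interpolation $\|h_{T_j}\|_2^2\le\|h_{T_j}\|_\infty\|h_{T_j}\|_1$ applied block by block, so that the quartic root (rather than the square root) of the level-ratio product appears. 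The main obstacle is the bookkeeping around the Foucart split: one must carry through the optimal numerical constants so that only $\delta_{2\mathbf{s,M}}$, and not $\delta_{3\mathbf{s,M}}$, enters, and so that the threshold \eqref{eq:RequirementOnDelta2sM} reduces to $4/\sqrt{41}$ in the single-level case $l=1$, $\etasm=1$.
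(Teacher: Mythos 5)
Your architecture is sound and would prove \emph{a} theorem of this type, but not this theorem: the argument as described cannot reach the stated threshold \eqref{eq:RequirementOnDelta2sM}. The paper's route is different in two respects that are exactly what produce the constant $\sqrt{l}\left(\sqrt{\etasm}+\tfrac14\right)$. First, it does not work with the head set $T\cup T_1$ at all; it proves the $\ell^2$ robust nullspace property of order $\smsparse$ (Definition \ref{def:l2NSPROBUSTSM}) for an arbitrary vector, using only the $\smsparse$-sparse head set $S_0$ and pairing it against \emph{all} tail blocks $S_j$, $j\ge 1$. Controlling $\sum_{j\ge 1}\|x_{S_j}\|_2$ (note: $j\ge 1$, not $j\ge 2$) by $\|x_{S_0^c}\|_1$ is impossible with the crude shelling bound $\|h_{T_j^i}\|_2^2\le\|h_{T_{j-1}^i}\|_1^2/s_i$ that you use; it requires the sharper inequality $\|v\|_2\le\|v\|_1/\sqrt{s}+\tfrac{\sqrt{s}}{4}(v_1-v_s)$ (Lemma \ref{l1l2NormInequality}), whose telescoping produces the term $\tfrac14\sqrt{l}\,\|x_{S_0}\|_2$ in \eqref{upperboundxvecS^i_j}. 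That is where the $\tfrac14$ in \eqref{eq:RequirementOnDelta2sM} comes from — not, as you assert, from a weighted combination of the cross-term split. Second, the cross terms are bounded not by the crude polarization estimate $\delta_{2\mathbf{s,M}}\|h_T\|_2\|h_{T_j}\|_2$ but by the refined Lemma \ref{lemmaInnerProductBetweenVectors}, which gives $\sqrt{\delta_{2\mathbf{s,M}}^2-t^2}$ with $t$ the actual isometry defect on $x_{S_0}$; optimizing $g(t)=(\delta^2-t^2)/(1+t)^2$ yields the $\sqrt{1-\delta_{2\mathbf{s,M}}^2}$ in the denominator of the final condition.

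Quantitatively, your scheme loses in both places. The split $|\langle Uh_{T\cup T_1},Uh_{T_j}\rangle|\le\delta_{2\mathbf{s,M}}(\|h_T\|_2+\|h_{T_1}\|_2)\|h_{T_j}\|_2\le\sqrt2\,\delta_{2\mathbf{s,M}}\|h_{T\cup T_1}\|_2\|h_{T_j}\|_2$ together with your tail bound $\sum_{j\ge2}\|h_{T_j}\|_2\le\sqrt{l\etasm/\tilde s}\,\|h_{T^c}\|_1$ and the cone inequality leads to a condition of the form $\delta_{2\mathbf{s,M}}\bigl(1+\sqrt{2l\etasm}\bigr)<1$, which at $l=\etasm=1$ gives $\sqrt2-1\approx0.414$, strictly stronger than $4/\sqrt{41}\approx0.625$, and asymptotically demands $\delta_{2\mathbf{s,M}}\lesssim(2l\etasm)^{-1/2}$ versus the theorem's $(l\etasm)^{-1/2}$. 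So you would be proving the result under a more restrictive hypothesis than the one stated. A secondary issue: your derivation of the $\ell^2$ bound only gestures at where $\sqrt[4]{l\etasm}$ (rather than $\sqrt{l\etasm}$) comes from; in the paper this is the AM--GM step $\sqrt{\tau\|Uv\|_2\|v_{S^c}\|_1}\le\tfrac12\tau\|Uv\|_2\sqrt[4]{\tilde s}+\tfrac12\|v_{S^c}\|_1/\sqrt[4]{\tilde s}$ inside Lemma \ref{Lemma:NSP2ImplicationForV}, applied after the interpolation $\|v_{S^c}\|_2^2\le\|v_{S^c}\|_1\|v_S\|_2/\min_k\sqrt{s_k}$, and your sketch would need to be rewritten around that lemma to make the fourth root appear. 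To repair your proof either adopt the paper's head-set-$S_0$ structure with Lemmas \ref{l1l2NormInequality} and \ref{lemmaInnerProductBetweenVectors}, or weaken the claimed threshold.
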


This result allows uniform recovery within the class of $\smsparse$-sparse vectors but the requirement on $\delta_{2\mathbf{s,M}}$ depends on $l$ and $\eta_{\mathbf{s,M}}$. We make the following observations:
\begin{enumerate}
\item If we pick a sparsity pattern that uses lots of levels then we will require a smaller $\ripl$ constant. 
\item If we pick a sparsity pattern with fewer levels then typically we shall pick a collection of $s_i$ so that $\frac{s_i}{s_j}$ is correspondingly larger for distinct $i$ and $j$. 
\item If the $\ripl$ constant $\delta_{2\mathbf{s,M}}$ is sufficiently small so that the conclusion of Theorem \ref{Theorem:RIPInLevelsRecoveryTheorem} holds, the bound on 
$\|x-x\|_2$ is weaker than the bound in Theorem \ref{Theorem:RIPtypeTheorem}. 
\end{enumerate}
As a consequence of these observations, at first glance it may appear that the results we have obtained with the $\ripl$ are weaker than those obtained using the standard RIP. However, Theorem \ref{Theorem:RIPInLevelsRecoveryTheorem} is stronger than Theorem \ref{Theorem:RIPtypeTheorem} in two senses. Firstly, if one considers a sparsity pattern with one level then Theorem \ref{Theorem:RIPInLevelsRecoveryTheorem} reduces to Theorem \ref{Theorem:RIPtypeTheorem}. Secondly, the conclusion of Theorem \ref{Theorem:RIPtypeTheorem} does not apply at all if we do not have the RIP. Therefore, for the examples given in Figure \ref{fig:flipTestImage}, Theorem \ref{Theorem:RIPtypeTheorem} does not apply at all. 

Ideally, it would be possible to find a constant $C$ such that if the $\ripl$ constant is smaller than $C$ then recovery of all $\smsparse$-sparse vectors would be possible. Unfortunately, we shall demonstrate that this is impossible in Theorems \ref{Theorem:etaDependenceTheorem} and \ref{Theorem:lDependenceTheorem}. Indeed, in some sense Theorem \ref{Theorem:RIPInLevelsRecoveryTheorem} is optimal in $l$ and $\etasm$, as the following results confirm.

\begin{theorem}\label{Theorem:etaDependenceTheorem}
Fix $a \in \mathbb{N}$ and $f: \real \to \real$ such that $f(\eta_{\mathbf{s,M}}) = o(\eta_{\mathbf{s,M}}^{\frac{1}{2}})$. Then there are $m,n \in \mathbb{N}$, a matrix $U\in \complex^{m \times n}$ and a sparsity pattern $\smsparse$ with two levels that covers $U$ such that the $\ripl$ constant $\delta_{a\mathbf{s,M}}$ and ratio constant $\eta_{\mathbf{s,M}}$ satisfy
\begin{equation}
\delta_{a\mathbf{s,M}} \leq \frac{1}{|f(\eta_{\mathbf{s,M}})|} \label{equation:etaDependenceRIPBounds}
\end{equation}
but there is an $\smsparse$-sparse $z^1$ such that
\begin{equation*}
z^1 \notin \arg\min \|z\|_1 \emph{\text{ such that }} Uz = Uz^1.
\end{equation*}
\end{theorem}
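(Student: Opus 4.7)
The plan is to produce, for each large integer $k$, an explicit two-level instance $(U_k, \mathbf{s}_k, \mathbf{M}_k)$ with $\eta_{\mathbf{s,M}} = k$ in which $\ell^1$ minimisation fails on an $\mathbf{s,M}$-sparse vector while $\delta_{a\mathbf{s,M}}$ is at most $C(a)/\sqrt{k}$ for some constant $C(a)$. Given $f$ with $f(\eta) = o(\eta^{1/2})$, the assumption $|f(k)|/\sqrt{k} \to 0$ then forces $C(a)/\sqrt{k} \leq 1/|f(k)|$ for all $k$ large enough, so fixing such a $k$ and taking the corresponding construction yields $\delta_{a\mathbf{s,M}} \leq 1/|f(\eta_{\mathbf{s,M}})|$ together with the $\ell^1$ failure.

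For the construction I take $\mathbf{s} = (k,1)$ and $\mathbf{M} = (0, (a+1)k, (a+1)k + 1)$, so that level one has size $(a+1)k$ and level two a single coordinate; note $(a+1)k > a s_1$, which is the crucial feature ensuring that a kernel vector fully supported in level one is \emph{not} $(a\mathbf{s,M})$-sparse and therefore does not trivially force $\delta_{a\mathbf{s,M}} \geq 1$. The columns $u_1, \ldots, u_{(a+1)k}$ are taken to be the standard basis of $\mathbb{R}^{(a+1)k}$, split into a ``top'' group of $k$ columns and a ``bot'' group of $ak$ columns, while the single level-two column is
\[
  u_n \;=\; -\frac{1}{\gamma}\bigl(A\,\mathbf{1}_{[1,k]} + B\,\mathbf{1}_{[k+1,(a+1)k]}\bigr),
\]
with $\gamma = \sqrt{k(A^2 + aB^2)}$ chosen so that $\|u_n\|_2 = 1$. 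By construction, $v = (A, \ldots, A, B, \ldots, B, \gamma) \in \ker U$. Setting $S := [1,k] \cup \{n\}$ makes $v_S$ an $\mathbf{s,M}$-sparse candidate, with $\|v_S\|_1 = kA + \gamma$ and $\|v_{S^c}\|_1 = a k B$; after substituting $\gamma = \sqrt{k(A^2+aB^2)}$ and dividing by $B$, the failure condition $\|v_S\|_1 > \|v_{S^c}\|_1$ becomes an explicit polynomial inequality in $A/B$ and $k$, satisfiable for $A/B$ in an interval whose lower endpoint tends to $a-1$ as $k \to \infty$. Taking $z^1 := v_S$ and $z^2 := -v_{S^c}$ then witnesses the failure of $\ell^1$ recovery since $Uz^1 = Uz^2$ and $\|z^2\|_1 < \|z^1\|_1$.

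To bound $\delta_{a\mathbf{s,M}}$, I would expand $\|Uz\|_2^2$ for an arbitrary $(a\mathbf{s,M})$-sparse $z = (z_1, \ldots, z_{(a+1)k}, z_n)$. The block structure collapses the cross terms from the basis vectors, leaving only the contribution of $u_n$, and a direct calculation gives
\[
  \|Uz\|_2^2 - \|z\|_2^2 \;=\; -\frac{2 z_n}{\gamma}\bigl(A\,\sigma_{\text{top}} + B\,\sigma_{\text{bot}}\bigr),
\]
where $\sigma_{\text{top}}$ and $\sigma_{\text{bot}}$ are the sums of $z$ over its active top and bot coordinates. Applying Cauchy--Schwarz separately to each sum with the budgets $|T_{\text{top}}| \leq k$ and $|T_{\text{bot}}| \leq ak$, then combining the two terms via the elementary inequality $(\alpha x + \beta y)^2 \leq (\alpha^2 + \beta^2)(x^2+y^2)$ specialised with $\alpha = A/\gamma \cdot \sqrt{k}$ and $\beta = B/\gamma \cdot \sqrt{ak}$, produces a bound of the form $|\|Uz\|_2^2 - \|z\|_2^2| \leq \Phi(A/B,a) \cdot 2|z_n|\|z_{\text{level 1}}\|_2$, where $\Phi$ is minimised at $A/B = \sqrt{a}$. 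Optimising $A/B$ subject to the $\ell^1$-failure constraint yields the claimed rate $C(a)/\sqrt{k}$.

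The main obstacle is the tension between the two requirements on $A/B$: the $\ell^1$-failure inequality pushes $A/B$ away from its RIP-optimal value $\sqrt{a}$, while keeping $\delta_{a\mathbf{s,M}}$ small demands $A/B$ close to $\sqrt{a}$. The resolution I expect to need is that, for large $k$, the window of admissible $A/B$ has width of order $1/\sqrt{k}$ about the RIP-optimal choice, and that within this window the $O(1/\sqrt{k})$ deviation from the optimum costs only $O(1/\sqrt{k})$ in the RIP-in-levels constant. Verifying this sharpened dependence, rather than the looser constant bound obtainable from a naive application of Cauchy--Schwarz, is the technically delicate part of the argument.
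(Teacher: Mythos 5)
There is a genuine gap, and it is fatal to the construction rather than a fixable technicality: the matrix you propose has $\delta_{a\mathbf{s,M}}$ bounded \emph{below} by an absolute constant, independently of $k$, $A$ and $B$, so \eqref{equation:etaDependenceRIPBounds} must fail as soon as $|f(\eta_{\mathbf{s,M}})|>\sqrt{2}$. Write $h:=\gamma^{-1}\bigl(A\,\mathbf{1}_{[1,k]}+B\,\mathbf{1}_{[k+1,(a+1)k]}\bigr)$, a unit vector, so that your identity becomes $\|Uz\|_2^2-\|z\|_2^2=-2z_n\langle z_{L_1},h\rangle$ with $z_{L_1}$ the level-one part of $z$. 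An $(a\mathbf{s,M})$-sparse set may occupy $ak$ of the $(a+1)k$ level-one coordinates; taking all of whichever group (top or bot) has the larger entry and filling the remaining budget from the other group yields an admissible $T$ with $\|h_T\|_2^2\geq \frac{a}{a+1}$ for every choice of $A/B$. The vector with $z_{L_1}=h_T/\|h_T\|_2$ and $z_n=\pm1$ is $(a\mathbf{s,M})$-sparse, has $\|z\|_2^2=2$, and realises $\bigl|\|Uz\|_2^2-\|z\|_2^2\bigr|/\|z\|_2^2=\|h_T\|_2\geq\sqrt{a/(a+1)}\geq 1/\sqrt{2}$. So the ``sharpened dependence'' you hope to extract from Cauchy--Schwarz does not exist: the RIP defect is \emph{linear} in the overlap $\|h_T\|_2$ between the kernel direction and the admissible supports, and in your design that overlap is necessarily of order one, because the coordinates over which the kernel vector is spread (all of level one) form a level whose sparsity budget $ak$ is a fixed fraction of its length $(a+1)k$. (Two smaller slips: the $\ell^1$-failure window is $A/B> a-O(k^{-1/2})$, not $a-1$, and since $\sqrt{a}<a$ for $a\geq2$ it never contains your ``RIP-optimal'' value $\sqrt{a}$, so even the intended optimisation has no feasible point.)

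The paper's construction avoids both obstructions. It places the \emph{dense} part of the kernel vector ($C^2$ entries of height $\lambda$) in a level that is already fully sparse, $s_2=M_2-M_1=C^2$, so that passing from $\mathbf{s}$ to $a\mathbf{s}$ buys nothing there, and places the \emph{spiky} part ($C$ entries of height $\lambda C$) in a level with $s_1=1$; an $(a\mathbf{s,M})$-sparse set then captures only a fraction $\frac{a+1}{C+1}\to 0$ of the kernel vector's energy. Moreover $U$ is taken to be (a realisation of) the orthogonal projection onto the complement of the unit kernel vector $x^1$, so the defect equals $|\langle\gamma,x^1\rangle|^2$, \emph{quadratic} in the captured energy, giving $\delta_{a\mathbf{s,M}}\leq\frac{a+1}{C+1}=O(\eta_{\mathbf{s,M}}^{-1/2})$ with $\eta_{\mathbf{s,M}}=C^2$ --- exactly enough to beat any $f=o(\eta^{1/2})$. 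Note that even if you reversed your level structure to match the paper's, an ``identity plus one extra column'' matrix would still only give a defect linear in the captured energy, i.e.\ $\delta_{a\mathbf{s,M}}=O(\eta_{\mathbf{s,M}}^{-1/4})$, which fails against, say, $f(\eta)=\eta^{1/3}$; the projection form of $U$ is doing real work here, not just tidying the computation.
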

Roughly speaking, Theorem \ref{Theorem:etaDependenceTheorem} says that if we fix the number of levels and try to replace the condition
\begin{equation*}
\delta_{2\mathbf{s,M}} < \frac{1}{\sqrt{l \left(\sqrt{\eta_{\mathbf{s,M}}}+\frac{1}{4}\right)^2 + 1}}
\end{equation*}
with a condition of the form $\delta_{2\mathbf{s,M}} < \left(\eta_{\mathbf{s,M}}\right)^{-\alpha/2}/(C\sqrt{l})$ for some constant $C$ and some $\alpha < 1$ then the conclusion of Theorem \ref{Theorem:RIPInLevelsRecoveryTheorem} ceases to hold. In particular, the requirement on $\delta_{2\mathbf{s,M}}$ cannot be independent of $\etasm$. The parameter $a$ in the statement of Theorem \ref{Theorem:etaDependenceTheorem} says that we cannot simply fix the issue by changing $\delta_{2\mathbf{s,M}}$ to $\delta_{3\mathbf{s,M}}$ or any further multiple of $\mathbf{s}$.

Similarly, we can state and prove a similar theorem that shows that the dependence on the number of levels, $l$, cannot be ignored.
\begin{theorem}\label{Theorem:lDependenceTheorem}
Fix $a \in \mathbb{N}$ and $f: \real \to \real$ such that $f(l) = o(l^{\frac{1}{2}})$. Then there are $m,n \in \mathbb{N}$, a matrix $U \in \complex^{m \times n}$ and a sparsity pattern $\smsparse$ that covers $U$ with ratio constant $\eta_{\mathbf{s,M}} = 1$ and $l$ levels such that the $\ripl$ constant $\delta_{a\mathbf{s,M}}$ corresponding to $U$ satisfies
$\delta_{a\mathbf{s,M}} \leq 1/|f(l)|$
but there is an $\smsparse$-sparse $z^1$ such that
\begin{equation*}
z^1 \notin \arg\min \|z\|_1 \emph{\text{ such that }} Uz = Uz^1.
\end{equation*}
\end{theorem}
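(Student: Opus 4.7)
My plan is to mirror the strategy behind Theorem \ref{Theorem:etaDependenceTheorem}, now keeping $\etasm$ fixed at $1$ while letting $l \to \infty$. For each large $l$ I will build an $(n-1) \times n$ matrix $U$ whose kernel is one-dimensional, spanned by a carefully tuned vector $v \in \complex^n$, and I use witnesses $z^1, z^2$ with $z^2 - z^1 = v$. Set $\mathbf{s} = (1, \ldots, 1) \in \mathbb{N}^l$ (so $\etasm = 1$) and let each level have width $N$, $\mathbf{M} = (0, N, 2N, \ldots, lN)$, $n = lN$. Indexing coordinates by pairs $(i, j)$ with $i \in \{1,\ldots,l\}$, $j \in \{1,\ldots,N\}$, define
\[
v_{i, 1} = -1 \text{ for every } i, \qquad v_{1, j} = c \text{ for } j = 2, \ldots, N,
\]
with $v_{i, j} = 0$ otherwise; the parameters $c > 1$ and $N$ will be pinned down below. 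Take $U$ to be any $(n-1) \times n$ matrix whose rows form an orthonormal basis of $v^\perp$, so that $U^*U = I_n - vv^*/\|v\|_2^2$ and $\ker U = \mathrm{span}(v)$. The covering conditions $\etasm < \infty$ and $M_l = n$ are immediate.

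For the failure witness I take $z^1 = \sum_{i=1}^l e_{i, 1}$, which is $\smsparse$-sparse with $\|z^1\|_1 = l$, and $z^2 = c \sum_{j=2}^N e_{1, j}$, for which $\|z^2\|_1 = (N-1)c$. Since $z^2 - z^1 = v \in \ker U$, we have $Uz^2 = Uz^1$, so $z^1$ fails to be an $\ell^1$ minimiser as soon as $(N-1)c < l$.

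To estimate the $\ripl$ constant, I use the identity $\|Uz\|_2^2 = \|z\|_2^2 - |\langle z, v\rangle|^2/\|v\|_2^2$ together with a Cauchy--Schwarz argument on each candidate support: for $y \in \complex^n$ supported on $\Lambda$, $|\langle y, v\rangle|^2 \leq \|y\|_2^2 \sum_{(i,j) \in \Lambda}|v_{i,j}|^2$, with equality when $y$ is proportional to $v|_\Lambda$. Maximising over $\Lambda$ with at most $a$ entries per level picks out the $a$ largest-modulus coordinates of $v$ in each level, yielding
\[
\delta_{a\mathbf{s,M}}(U) = \frac{a c^2 + (l-1)}{l + (N-1)c^2}
\]
whenever $c \geq 1$ and $N - 1 \geq a$. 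Setting $N - 1 = \lceil 2 a f(l) \rceil$ and $c = l/(4 a f(l))$, both $c > 1$ and $(N-1)c < l$ hold for $l$ large because $f(l) = o(l^{1/2})$ forces $f(l) \leq \sqrt{l}/(4a)$ eventually; direct substitution then gives $\delta_{a\mathbf{s,M}}(U) \sim 1/(2 f(l))$ as $l \to \infty$, and hence $\delta_{a\mathbf{s,M}}(U) \leq 1/f(l)$ for all sufficiently large $l$.

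The main obstacle is the tension between the two asks: keeping $\|z^2\|_1 < l$ forces $(N-1)c < l$, but shrinking the $\delta$-ratio requires the mass $(N-1)c^2$ to dominate the contribution $(l-1)$ coming from the off-support $(-1)$-entries of $v$. A short optimisation shows these can be reconciled only when $\delta_{a\mathbf{s,M}} \gtrsim l^{-1/2}$, which matches exactly the hypothesis $f(l) = o(l^{1/2})$; any faster decay of $\delta_{a\mathbf{s,M}}$ in $l$ would violate the $\ell^1$-counterexample inequality $(N-1)c < l$. The constant $a$ enters only through multiplicative constants in the parameter choices, so the conclusion is uniform in $a$ as required.
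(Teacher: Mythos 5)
Your construction is correct and follows essentially the same route as the paper's: a matrix with one-dimensional kernel spanned by a single vector spread across the levels, the two ``halves'' of that kernel vector serving as the $\smsparse$-sparse witness $z^1$ and its cheaper $\ell^1$ competitor, and a Cauchy--Schwarz bound on $|\langle \gamma, v\rangle|$ over $(a\mathbf{s},\mathbf{M})$-sparse supports to control $\delta_{a\mathbf{s,M}}$. The differences are only bookkeeping: the paper reuses the vector $\lambda(C,\dots,C,1,\dots,1)$ from Theorem \ref{Theorem:etaDependenceTheorem} with one level of width $C$ and $C^2$ levels of width $1$, whereas you take uniform-width levels, tune a two-parameter family $(N,c)$, and compute $\delta_{a\mathbf{s,M}}$ exactly rather than merely bounding it (implicitly assuming, as you should state, that $|f(l)|\to\infty$ along the chosen sequence, the remaining case being trivial).
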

Furthermore, Theorem \ref{Theorem:RIPInLevelsOptimalL2} shows that the $\ell^2$ error estimate on $\|x-\widetilde{x}\|_2$ is optimal up to constant terms. 

\begin{theorem}\label{Theorem:RIPInLevelsOptimalL2}
The $\ell^2$ result (\ref{eq:RIPLL2EstimateImplication}) in Theorem \ref{Theorem:RIPInLevelsRecoveryTheorem} is sharp in the following sense:
\begin{enumerate}
\item For a fixed $a \in \mathbb{N}$ and any functions $f,g: \real \to \real$ such that $f(\eta) = o(\eta^{\frac{1}{4}})$ and $g(\eta) = O(\sqrt{\eta})$, there are natural numbers $m$ and $n$, a matrix $U_2 \in \complex^{m \times n}$ and a sparsity pattern $\smsparse$ with two levels that such that
\begin{itemize}
\item $\smsparse$ covers $U_2$
\item The $\ripl$ constant corresponding to the sparsity pattern $(a\mathbf{s,M})$, denoted by $\delta_{a\mathbf{s,M}}$, satisfies $\delta_{a\mathbf{s,M}} \leq 1/|g(\eta_{\mathbf{s,M}})|$.
\item There exist vectors $z$ and $z^1$ such that $U_2(z-z^1)=0$ and $\|z\|_1 \leq \|z^1\|_1$ but 
\begin{equation*} 
\|z-z^1\|_2 > \frac{f(\eta_{\mathbf{s,M}})}{\sqrt{\tilde{s}}} \sigma_{\mathbf{s,M}}(z^1)_1.
\end{equation*} 
\end{itemize}
\item For a fixed $a \in \mathbb{N}$ and any functions $f,g: \real \to \real$ such that $f(l) = o(l^{\frac{1}{4}})$ and $g(l) = O(\sqrt{l})$, there are natural numbers $m$ and $n$, a matrix $U_2 \in \complex^{m \times n}$ and a sparsity pattern $\smsparse$ with $\eta_{\mathbf{s,M}}=1$ such that 
\begin{itemize}
\item $\smsparse$ covers $U_2$
\item The $\ripl$ constant corresponding to the sparsity pattern $(a\mathbf{s,M})$, denoted by $\delta_{a\mathbf{s,M}}$, satisfies
$\delta_{a\mathbf{s,M}} \leq 1/|g(l)|.$
\item There exist vectors $z$ and $z^1$ such that $U_2(z-z^1)=0$ and $\|z\|_1 \leq \|z^1\|_1$ but 
\begin{equation*} 
\|z-z^1\|_2 > \frac{f(l)}{\sqrt{\tilde{s}}} \sigma_{\mathbf{s,M}}(z^1)_1.
\end{equation*} 
\end{itemize}
\end{enumerate}
\end{theorem}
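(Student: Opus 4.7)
The plan is to construct explicit counterexamples in the same spirit as Theorems \ref{Theorem:etaDependenceTheorem} and \ref{Theorem:lDependenceTheorem}, but with the key refinement that the target vector $z^1$ is chosen so that $\sigma_{\mathbf{s,M}}(z^1)_1$ is nonzero and can be tracked precisely. The $\eta^{1/4}$ (resp.\ $l^{1/4}$) scaling will actually be massively overshot: the construction I have in mind produces a ratio $\|z-z^1\|_2\sqrt{\tilde s}/\sigma_{\mathbf{s,M}}(z^1)_1$ of order $\eta$ (resp.\ $l$), which comfortably dominates any $f$ with $f(\eta)=o(\eta^{1/4})$ (resp.\ $f(l)=o(l^{1/4})$).

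For part (1), I fix a two-level pattern with $\mathbf s=(s_1,1)$ and $\mathbf M=(0,s_1+1,s_1+1+K)$, giving $\eta_{\mathbf{s,M}}=s_1$ and $\tilde s = s_1+1$, where $K$ is a free parameter. I build a matrix $U_2\in\complex^{(s_1+1)\times(s_1+1+K)}$ admitting as a kernel element the block vector $v=(\mathbf 1_{s_1+1},-\lambda\mathbf 1_K)$ with $\lambda:=(s_1+1)/K$. Setting
\begin{equation*}
z^1 = (\mathbf 1_{s_1+1},\,0),\qquad z = z^1 - v = (0,\,\lambda\mathbf 1_K),
\end{equation*}
one has $U_2(z-z^1)=0$, $\|z\|_1=K\lambda = s_1+1 = \|z^1\|_1$, and the explicit evaluations
\begin{equation*}
\|z-z^1\|_2 = \|v\|_2 = \sqrt{(s_1+1)+(s_1+1)^2/K},\qquad \sigma_{\mathbf{s,M}}(z^1)_1 = 1,
\end{equation*}
since $z^1$ has $s_1+1$ equal-magnitude entries in a level permitting only $s_1$, so the nearest $\smsparse$-sparse vector drops exactly one of them. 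Choosing $K = \Theta(s_1)$ gives $\|z-z^1\|_2\sqrt{\tilde s}/\sigma_{\mathbf{s,M}}(z^1)_1 = \Theta(\eta)$, which exceeds $f(\eta)$ for any $f(\eta)=o(\eta^{1/4})$ once $\eta$ is large.

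The remaining task is to produce such a $U_2$ with small $\ripl$ constant. I take the first $s_1+1$ columns of $U_2$ to be the standard basis $e_1,\dotsc,e_{s_1+1}$ of $\complex^{s_1+1}$, and the remaining $K$ columns to be unit vectors $\phi_1,\dotsc,\phi_K\in\complex^{s_1+1}$ satisfying $\sum_{j=1}^{K}\phi_j = (K/(s_1+1))\mathbf 1$ (forcing $U_2 v = 0$) and whose pairwise inner products are $O(1/\sqrt K)$. A concrete choice is $\phi_j = \alpha\,\mathbf 1/\sqrt{s_1+1} + \beta\,\psi_j$ where $\psi_j$ are vectors of a unit-norm tight frame in the hyperplane $\{\mathbf 1\}^\perp$ with $\sum_j\psi_j = 0$ and constants $\alpha,\beta$ fixed by the unit-norm and kernel conditions (rescaling $K$ to accommodate feasibility). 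A Gershgorin estimate on the Gram matrix of any $2as_1+2a$-subset of columns then yields
\begin{equation*}
\delta_{a\mathbf{s,M}}\;\le\;C\sqrt{as_1/K} + O(a/K),
\end{equation*}
which can be driven below $1/|g(\eta)|$ for any $g = O(\sqrt\eta)$ by enlarging $K$ polynomially in $\eta$. The main obstacle is engineering the spreader columns $\phi_j$ to meet all three requirements (kernel identity, unit norm, small cross-inner-product bound) at once; this is a frame-theoretic design problem and is where the technical work of the proof concentrates.

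Part (2) follows the same template with $l$ levels. Take $\mathbf s=(1,1,\dotsc,1)$ (so $\eta_{\mathbf{s,M}}=1$ and $\tilde s = l$), let level $1$ have size $2$, levels $2,\dotsc,l-1$ have size $1$, and the final level have size $K$. Design $U_2$ to admit the kernel element $v=(\mathbf 1_2,\,1,\dotsc,1,\,-\lambda\mathbf 1_K)$ with $\lambda = l/K$, and set $z^1 = (\mathbf 1_2,1,\dotsc,1,0)$ and $z = z^1 - v$. Then $z^1\notin\Sigma_{\mathbf{s,M}}$ (level $1$ carries two entries where only one is allowed), $\sigma_{\mathbf{s,M}}(z^1)_1 = 1$, $\|v\|_2 = \sqrt{l + l^2/K}$, and the same calculation yields $\|z-z^1\|_2\sqrt{\tilde s}/\sigma_{\mathbf{s,M}}(z^1)_1 = \Theta(l)$, dominating any $f(l)=o(l^{1/4})$. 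The $\ripl$ bound comes from an analogous Gershgorin argument across the $l$ blocks, with the same spreader-frame construction reused in the final level.
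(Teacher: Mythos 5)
Your proposal has a fatal gap, and the headline claim is the tell-tale sign. You assert that your construction achieves
$\|z-z^1\|_2\sqrt{\tilde s}/\sigma_{\mathbf{s,M}}(z^1)_1=\Theta(\eta)$ while simultaneously forcing $\delta_{a\mathbf{s,M}}\leq 1/|g(\eta)|$ for an \emph{arbitrary} $g=O(\sqrt{\eta})$. This is impossible: the theorem must hold in particular for $g(\eta)=A\sqrt{\eta}$ with $A$ large, in which case (for $a\geq 2$, $l=2$) the constant $\delta_{2\mathbf{s,M}}\leq 1/(A\sqrt{\eta})$ lies below the threshold \eqref{eq:RequirementOnDelta2sM} of Theorem \ref{Theorem:RIPInLevelsRecoveryTheorem}, so $U_2$ satisfies the robust $\ell^2$ nullspace property of order $\smsparse$ with bounded $\rho,\tau$, and Lemma \ref{Lemma:NSP2ImplicationForV} together with Lemma \ref{lemmaNSPRelativetoSConclusion} then caps the ratio for \emph{any} pair with $U_2(z-z^1)=0$ and $\|z\|_1\leq\|z^1\|_1$ at $O\bigl((l\eta)^{1/4}\bigr)$. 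The exponent $1/4$ in the statement is the true ceiling, not a figure to be ``massively overshot.''

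The concrete failure is in your $\ripl$ estimate. With $m=s_1+1$ and the level-one columns equal to the full standard basis $e_1,\dotsc,e_{s_1+1}$ of $\complex^{s_1+1}$, every level-two column $\phi_1$ lies in the span of the level-one columns, and the coefficient vector expressing it is itself admissible: the vector $\gamma$ whose level-one coordinates equal the entries of $\phi_1$ and whose level-two part is $-1$ in the first slot is $(2\mathbf{s},\mathbf{M})$-sparse (the level-one budget $\min(2s_1,s_1+1)=s_1+1$ is unrestricted), has $\|\gamma\|_2^2=2$, and satisfies $U_2\gamma=\phi_1-\phi_1=0$. Hence $\delta_{a\mathbf{s,M}}\geq 1$ for every $a\geq 2$ (and $\geq 1-O(1/\sqrt{s_1})$ for $a=1$ by truncating $\phi_1$ to its $s_1$ largest entries), no matter how cleverly the spreader frame is designed. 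Your Gershgorin argument misses this because the Gram rows indexed by the $\phi_j$ carry $\ell^1$ mass $\|\phi_j\|_1\approx\sqrt{s_1}$ against the level-one columns; and independently, the Welch bound forbids $K\gg s_1^2$ unit vectors in $\complex^{s_1+1}$ with pairwise inner products $O(1/\sqrt{K})$. Part 2 fails the same way. The paper avoids all of this by inverting your design: the kernel is one-dimensional, spanned by a vector supported \emph{entirely inside the level of budget $1$} and spread over $\approx 2C/\rho$ equal coordinates there, so an $(a\mathbf{s},\mathbf{M})$-sparse vector can grab at most $a$ of them and correlates with the kernel only to order $\sqrt{a\rho/C}$; the witnesses are then $z^1=x^1$ and $z=0$, with $\sigma_{\mathbf{s,M}}(z^1)_1/\sqrt{\tilde s}\approx\sqrt{3/(\rho\sqrt{\eta})}$ delivering exactly the $\eta^{1/4}$ separation.
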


Theorems of a similar form to Theorem \ref{Theorem:RIPtypeTheorem} are typically proven by showing that the \emph{$\ell^2$ robust nullspace property of order $s$} holds.
\begin{definition}\label{Definition:L2NSPOfOrderS}
A matrix $U \in \complex^{m \times n}$ is said to satisfy the $\ell^2$ robust nullspace property of order $s$ if there is a $\rho \in (0,1)$ and a $\tau>0$ such that for all vectors $v \in \complex^n$ and all $S$ which are subsets of $\lbrace 1,2,3, \dotsc,n \rbrace$ with $|S| \leq  s$, we have $
\|v_S\|_2 \leq \rho \|v_{S^c}\|_1/\sqrt{s}  + \tau \|Uv\|_2.$
\end{definition}

The implication is then the following Theorem: (for example, see \cite{foucartBook}, Theorem 4.22)

\begin{theorem}\label{Theorem:RobustL2NSPImplicationOrderS}
Suppose that $U \in \complex^{m \times n}$ satisfies the $\ell^2$ robust nullspace property of order $s$ with constants $\rho \in (0,1)$ and $\tau>0$.
Then any solutions $\widetilde{x} \in \complex^n$ to the $\ell^1$ minimisation problem 
\begin{equation*} 
\min_{\widehat{x} \in \complex^n} \|\widehat{x}\|_1 \text{ \emph{subject to} } \|U\widehat{x} - y\|_2 \leq \epsilon 
\end{equation*}
where 
$
\|Ux - y\|_2   \leq \epsilon
$ 
 satisfy
\begin{equation*}
\|\widetilde{x}-x\|_1 \leq C_1\sigma_{s}(x)_1 + D_1\epsilon \sqrt{s}, \quad \|\widetilde{x}-x\|_2 \leq \frac{C_2\sigma_{s }(x)_1}{\sqrt{s}} + D_2\epsilon 
\end{equation*}
where the constants $C_1,C_2,D_1$ and $D_2$ depend only on $\rho$ and $\tau$.
\end{theorem}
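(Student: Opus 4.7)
\bigskip

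\noindent\textbf{Proof Plan.} The plan is to derive both bounds from a single chain of inequalities obtained by combining the $\ell^1$-minimality of $\widetilde{x}$ with a single application of the robust nullspace property, followed by an additional application to control $\|v\|_2$. Throughout, write $v := \widetilde{x}-x$ and note immediately that
\begin{equation*}
\|Uv\|_2 \leq \|U\widetilde{x}-y\|_2 + \|Ux - y\|_2 \leq 2\epsilon,
\end{equation*}
by the triangle inequality and the feasibility of $\widetilde{x}$. Let $S \subset \{1,\dots,n\}$ be the support of a best $s$-term approximation of $x$, so that $|S|\leq s$ and $\|x_{S^c}\|_1 = \sigma_s(x)_1$.

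\medskip

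\noindent First I would derive a cone-type inequality. Since $\widetilde{x}$ minimises the $\ell^1$ norm and $x$ is feasible, $\|\widetilde{x}\|_1 \leq \|x\|_1$. Splitting the norms according to $S$ and $S^c$ and using triangle inequalities on $\|\widetilde{x}_S\|_1 = \|x_S + v_S\|_1$ and $\|\widetilde{x}_{S^c}\|_1 = \|x_{S^c} + v_{S^c}\|_1$ in the standard way yields
\begin{equation*}
\|v_{S^c}\|_1 \leq \|v_S\|_1 + 2\sigma_s(x)_1.
\end{equation*}
Now apply the robust NSP with the set $S$: $\|v_S\|_2 \leq \rho\|v_{S^c}\|_1/\sqrt{s} + \tau\|Uv\|_2$. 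Since $|S|\leq s$, we have $\|v_S\|_1 \leq \sqrt{s}\,\|v_S\|_2 \leq \rho\|v_{S^c}\|_1 + \tau\sqrt{s}\|Uv\|_2$. Combining with the cone inequality and using $\rho<1$, I solve for $\|v_{S^c}\|_1$ to obtain
\begin{equation*}
\|v_{S^c}\|_1 \leq \frac{2\sigma_s(x)_1 + \tau\sqrt{s}\|Uv\|_2}{1-\rho},
\end{equation*}
and then $\|v\|_1 = \|v_S\|_1 + \|v_{S^c}\|_1$ yields the stated $\ell^1$ estimate with constants depending only on $\rho$ and $\tau$, after substituting $\|Uv\|_2 \leq 2\epsilon$.

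\medskip

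\noindent For the $\ell^2$ estimate I would apply the robust NSP a second time, but now to a different index set. Let $T$ be the support of the $s$ largest entries of $v$ in absolute value. Then the NSP gives $\|v_T\|_2 \leq \rho\|v_{T^c}\|_1/\sqrt{s} + \tau\|Uv\|_2$. To control $\|v_{T^c}\|_2$, use the standard Stechkin-type inequality that follows from the sorting choice of $T$: if $|v_1|\geq |v_2|\geq\cdots$ are the sorted magnitudes, then $|v_{s+k}|\leq \|v\|_1/(s+k)$, so that $\sum_{j>s}|v_j|^2 \leq \|v_{T^c}\|_1^2/s$, hence $\|v_{T^c}\|_2 \leq \|v_{T^c}\|_1/\sqrt{s}$. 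Combining,
\begin{equation*}
\|v\|_2 \leq \|v_T\|_2 + \|v_{T^c}\|_2 \leq \frac{(1+\rho)\|v_{T^c}\|_1}{\sqrt{s}} + \tau\|Uv\|_2,
\end{equation*}
and since $\|v_{T^c}\|_1 \leq \|v\|_1$, we may substitute the $\ell^1$ bound already obtained to conclude.

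\medskip

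\noindent The main obstacle is really bookkeeping rather than a conceptual hurdle: the cone inequality, the nullspace application, and the Stechkin sorting argument are each routine, but one must be careful that the NSP is applied to two genuinely different sets ($S$ coming from the best $s$-term approximation of $x$, and $T$ coming from the largest entries of $v$) and that the constants $C_1,C_2,D_1,D_2$ emerge depending only on $\rho$ and $\tau$. The delicate point is the passage from the $\ell^1$-style cone argument to the $\ell^2$-style bound, which requires the auxiliary application of the NSP with the $v$-dependent set $T$ rather than the $x$-dependent set $S$; choosing the same set would lose the sharp scaling $\sigma_s(x)_1/\sqrt{s}$.
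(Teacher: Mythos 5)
Your overall route is the standard one and matches what the paper itself relies on: the paper does not reprove this theorem but cites \cite{foucartBook} (Theorem 4.22), and its proof of the levels analogue (Theorem \ref{Theorem:RobustL2NSPImplicationOrderSM}) follows exactly your two-set strategy --- an $\ell^1$ cone/nullspace argument relative to a set adapted to $x$ (their Lemmas \ref{lemmaNSPRelativetoSConclusion} and \ref{Lemma:NSPSufficiency}), followed by a second application of the nullspace property to a set adapted to $v$ to pass to $\ell^2$ (their Lemma \ref{Lemma:NSP2ImplicationForV}). Your $\ell^1$ half is correct and yields the same constants, e.g.\ $C_1 = \tfrac{2+2\rho}{1-\rho}$ and $D_1 = \tfrac{4\tau}{1-\rho}$.

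There is, however, one false intermediate inequality in your $\ell^2$ step. You claim $\sum_{j>s}|v_j|^2 \leq \|v_{T^c}\|_1^2/s$, i.e.\ $\|v_{T^c}\|_2 \leq \|v_{T^c}\|_1/\sqrt{s}$. This does not follow from $|v_{s+k}| \leq \|v\|_1/(s+k)$ and is false in general: take $s=4$ and $v=(1,1,1,1,1,0,\dots)$, so that $v_{T^c}$ is a single entry of size $1$, giving $\|v_{T^c}\|_2 = 1 > 1/2 = \|v_{T^c}\|_1/\sqrt{s}$. The correct Stechkin-type bound controls the tail by the \emph{full} $\ell^1$ norm: from $|v_{s+1}| \leq \|v_T\|_1/s$ one gets $\|v_{T^c}\|_2^2 \leq |v_{s+1}|\,\|v_{T^c}\|_1 \leq \|v_T\|_1\|v_{T^c}\|_1/s \leq \|v\|_1^2/(4s)$, hence $\|v_{T^c}\|_2 \leq \|v\|_1/(2\sqrt{s})$. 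Fortunately this is all you actually use downstream, since in your final chain you immediately relax $\|v_{T^c}\|_1$ to $\|v\|_1$ anyway; replacing the false inequality by $\|v_{T^c}\|_2 \leq \|v\|_1/(2\sqrt{s})$ gives $\|v\|_2 \leq (\rho + \tfrac12)\|v\|_1/\sqrt{s} + \tau\|Uv\|_2$, and substituting the $\ell^1$ bound and $\|Uv\|_2 \leq 2\epsilon$ yields the stated $\ell^2$ estimate with constants depending only on $\rho$ and $\tau$. So the gap is local and repairable, but as written that step would not survive scrutiny.
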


The corresponding natural extension of the $\ell^2$ robust nullspace of order $s$ to the $\smsparse$-sparse case is the \emph{$\ell^2$ robust nullspace property of order $\smsparse$}.
\begin{definition}\label{def:l2NSPROBUSTSM}
A matrix $U \in \complex^{m \times n}$ satisfies the $\ell^2$ robust nullspace property of order $\smsparse$ if there is a $\rho \in (0,1)$ and a $\tau>0$ such that
\begin{equation}\label{eq:NSPDefinitionOrderSM}
\|v_S\|_2 \leq \frac{\rho}{\sqrt{\tilde{s}}} \|v_{S^c}\|_1 + \tau \|Uv\|_2
\end{equation}
for all $\smsparse$-sparse sets $S$ and vectors $v \in \complex^n$.
\end{definition} 

The $\smsparse$-sparse version of Theorem \ref{Theorem:RobustL2NSPImplicationOrderS} is Theorem \ref{Theorem:RobustL2NSPImplicationOrderSM}.

\begin{theorem}\label{Theorem:RobustL2NSPImplicationOrderSM}
Suppose that a matrix $U \in \complex^{m \times n}$ satisfies the $\ell^2$ robust nullspace property of order $\smsparse$ with constants $\rho \in (0,1)$ and $\tau>0$. Let $x \in \mathbb{C}^n$ and $y \in \mathbb{C}^m$ satisfy $\|Ux-y\|_2 < \epsilon$.
Then any solutions $\widetilde{x}$ of the $\ell^1$ minimisation problem
\begin{equation*} 
\min_{\widehat{x} \in \complex^n} \|\widehat{x}\|_1 \text{ \emph{subject to} } \|U\widehat{x} - y\|_2 \leq \epsilon 
\end{equation*}
satisfy
\begin{align}
\|\widetilde{x}-x\|_1 &\leq A_1\sigma_{\mathbf{s,M}}(x)_1 + C_1 \epsilon \sqrt{\tilde{s}} \label{eq:RNSPL1EstimateImplication}\\
\|\widetilde{x}-x\|_2 &\leq \frac{\sigma_{\mathbf{s,M}}(x)_1}{\sqrt{\tilde{s}}}\left(A_2 + B_2\sqrt[4]{l\eta_{\mathbf{s,M}}}\right) + 2\epsilon \left(C_2+D_2 \sqrt[4]{l\eta_{\mathbf{s,M}}}\right) \label{eq:RNSPL2EstimateImplication}
\end{align}
where 
$$
A_1:= \frac{2 + 2\rho}{1-\rho}, \quad C_1:= \frac{4\tau}{1-\rho}, \quad A_2 := \frac{2\rho+2\rho^2}{1-\rho},
$$
$$
B_2:= \frac{\left(2\sqrt{\rho}+1\right)\left(1+\rho\right)}{1-\rho}, \quad C_2 := \frac{\rho\tau+\tau}{1-\rho} \ \text{ and } \  D_2:= \frac{4\sqrt{\rho}\tau + 3\tau - \rho\tau}{2-2\rho}. 
$$
\end{theorem}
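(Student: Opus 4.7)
My plan is to adapt the Foucart--Rauhut-style derivation of stable and robust $\ell^p$ recovery from the $\ell^2$ robust nullspace property (compare Theorem \ref{Theorem:RobustL2NSPImplicationOrderS} above and its standard proof), but inserting the level structure at every step. The proof splits naturally into three stages: (i) upgrading the $\ell^2$ robust NSP to an $\ell^1$ robust NSP tailored to $\smsparse$-sparse sets; (ii) deriving \eqref{eq:RNSPL1EstimateImplication} from $\ell^1$ minimality; and (iii) deriving \eqref{eq:RNSPL2EstimateImplication} from a level-wise Stechkin-type inequality combined with Cauchy--Schwarz, which is where the $\sqrt[4]{l\eta_{\mathbf{s,M}}}$ factor emerges.

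For stages (i) and (ii): any $\smsparse$-sparse set $S$ has $|S| \leq \widetilde{s}$, so Cauchy--Schwarz gives $\|v_S\|_1 \leq \sqrt{\widetilde{s}}\|v_S\|_2$, and feeding this into \eqref{eq:NSPDefinitionOrderSM} produces the $\ell^1$ robust NSP $\|v_S\|_1 \leq \rho\|v_{S^c}\|_1 + \tau\sqrt{\widetilde{s}}\|Uv\|_2$. I will set $v := \widetilde{x}-x$ (so $\|Uv\|_2 \leq 2\epsilon$) and take $S$ to be the best $\smsparse$-sparse approximation support of $x$, giving $\|x_{S^c}\|_1 = \sigma_{\mathbf{s,M}}(x)_1$. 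The optimality $\|\widetilde{x}\|_1 \leq \|x\|_1$ (feasibility of $x$) together with the triangle inequality gives $\|v_{S^c}\|_1 \leq \|v_S\|_1 + 2\sigma_{\mathbf{s,M}}(x)_1$; combining with the $\ell^1$ NSP and rearranging yields $\|v_{S^c}\|_1 \leq \frac{2\sigma_{\mathbf{s,M}}(x)_1 + 2\tau\sqrt{\widetilde{s}}\epsilon}{1-\rho}$, and then $\|v\|_1 \leq (1+\rho)\|v_{S^c}\|_1 + \tau\sqrt{\widetilde{s}}\|Uv\|_2$ produces \eqref{eq:RNSPL1EstimateImplication} with the stated $A_1, C_1$.

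For stage (iii) I will re-select the index set: let $T$ be the best $\smsparse$-sparse approximation support of $v$ itself, so that on each level $I_k = \{M_{k-1}+1,\dots,M_k\}$ the indices $T \cap I_k$ pick out the $s_k$ largest entries of $v_{I_k}$. Writing $\gamma_k := \|v_{T \cap I_k}\|_1$ and $\delta_k := \|v_{I_k \setminus T}\|_1$, every entry of $v_{I_k \setminus T}$ has magnitude at most $\gamma_k/s_k$, which together with $\|w\|_2^2 \leq \|w\|_\infty \|w\|_1$ gives the level-wise Stechkin bound $\|v_{I_k \setminus T}\|_2^2 \leq \gamma_k\delta_k/s_k$. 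Summing over $k$ and applying Cauchy--Schwarz to $\sum_k (\gamma_k/\sqrt{s_k})(\delta_k/\sqrt{s_k})$ --- bounding one factor by $\|v_T\|_2$ via the within-level inequality $\gamma_k^2/s_k \leq \|v_{T \cap I_k}\|_2^2$, and the other by $\|v_{T^c}\|_1\sqrt{l\eta_{\mathbf{s,M}}/\widetilde{s}}$ via $\sum_k \delta_k^2/s_k \leq \|v_{T^c}\|_1^2/s_{\min}$ combined with the elementary estimate $s_{\min} \geq \widetilde{s}/(l\eta_{\mathbf{s,M}})$ --- will produce the key intermediate inequality $\|v_{T^c}\|_2^2 \leq \|v_T\|_2 \cdot \|v_{T^c}\|_1\sqrt{l\eta_{\mathbf{s,M}}/\widetilde{s}}$. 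Substituting $\|v_T\|_2 \leq \rho\|v_{T^c}\|_1/\sqrt{\widetilde{s}} + \tau\|Uv\|_2$ from the $\ell^2$ NSP, taking a square root via $\sqrt{a+b}\leq\sqrt{a}+\sqrt{b}$, and applying the weighted AM--GM $\sqrt{\beta\alpha} \leq \sqrt{\rho}\,\alpha/2 + \beta/(2\sqrt{\rho})$ to the cross term (with $\alpha = \|v_{T^c}\|_1/\sqrt{\widetilde{s}}$ and $\beta = \tau\|Uv\|_2$, chosen to produce the $\sqrt{\rho}$-dependence visible in $B_2$ and $D_2$) will convert the exposed $\sqrt{l\eta_{\mathbf{s,M}}}$ into $\sqrt[4]{l\eta_{\mathbf{s,M}}}$. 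Combining with $\|v\|_2 \leq \|v_T\|_2 + \|v_{T^c}\|_2$, then substituting $\|v_{T^c}\|_1 \leq \|v\|_1 \leq A_1\sigma_{\mathbf{s,M}}(x)_1 + C_1\sqrt{\widetilde{s}}\epsilon$ from stage (ii) together with $\|Uv\|_2 \leq 2\epsilon$, will deliver \eqref{eq:RNSPL2EstimateImplication} with the stated $A_2, B_2, C_2, D_2$.

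The main obstacle will be establishing the intermediate inequality $\|v_{T^c}\|_2^2 \leq \|v_T\|_2\|v_{T^c}\|_1\sqrt{l\eta_{\mathbf{s,M}}/\widetilde{s}}$ with exactly the right product structure on the right-hand side: it is only this product form (as opposed to a sum of squares) that allows the subsequent square root to promote the visible $\sqrt{l\eta_{\mathbf{s,M}}}$ down to the claimed $\sqrt[4]{l\eta_{\mathbf{s,M}}}$, matching the optimality recorded in Theorem \ref{Theorem:RIPInLevelsOptimalL2}. A naive level-wise chunking estimate of the form $\|v_{T^c}\|_2 \leq \|v\|_1/(2\sqrt{s_{\min}})$ would give only the suboptimal $\sqrt{l\eta_{\mathbf{s,M}}}$ scaling. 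The remaining AM--GM bookkeeping to extract the precise constants is mechanical but somewhat tedious.
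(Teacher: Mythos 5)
Your proposal follows essentially the same route as the paper: stages (i) and (ii) reproduce the paper's passage from the $\ell^2$ robust NSP to an $\ell^1$ robust NSP with constants $\rho$ and $\tau\sqrt{\tilde{s}}$ and the standard minimality/triangle-inequality argument (Lemmas \ref{lemmaNSPRelativetoSConclusion} and \ref{Lemma:NSPSufficiency}), and stage (iii) is the content of Lemma \ref{Lemma:NSP2ImplicationForV}. Your level-wise Stechkin derivation of the product inequality is a clean alternative to the paper's max--min argument, and in fact lands on the identical estimate $\|v_{T^c}\|_2^2 \leq \|v_T\|_2\,\|v_{T^c}\|_1/\sqrt{\min_i s_i}$ before the bound $\tilde{s}/\min_i s_i \leq l\eta_{\mathbf{s,M}}$ is invoked; the subsequent NSP substitution, square root, and AM--GM are exactly the paper's steps.

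One point to fix: your choice of weighted AM--GM, $\sqrt{\beta\alpha} \leq \sqrt{\rho}\,\alpha/2 + \beta/(2\sqrt{\rho})$, will not reproduce the stated constants. The $\sqrt{\rho}$ visible in $B_2$ and $D_2$ does not come from weighting the AM--GM by $\rho$; it comes from taking the square root of the term $\frac{\rho}{\sqrt{\tilde{s}}}\|v_{T^c}\|_1^2$ produced by substituting the $\ell^2$ NSP into the product inequality. The paper handles the cross term with the \emph{unweighted} AM--GM after balancing by $\sqrt[4]{\tilde{s}}$, i.e.\ $\sqrt{\tau\|Uv\|_2\,\|v_{T^c}\|_1} \leq \tfrac{1}{2}\tau\|Uv\|_2\sqrt[4]{\tilde{s}} + \tfrac{1}{2}\|v_{T^c}\|_1/\sqrt[4]{\tilde{s}}$, which yields the coefficients $\bigl(\sqrt{\rho}+\tfrac{1}{2}\bigr)\sqrt[4]{l\eta_{\mathbf{s,M}}}$ and $\tfrac{1}{2}\sqrt[4]{l\eta_{\mathbf{s,M}}}$ appearing in Lemma \ref{Lemma:NSP2ImplicationForV} and hence the stated $B_2$ and $D_2$. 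Your weighting instead gives a coefficient $\tfrac{3\sqrt{\rho}}{2}$ on the $\|v_{T^c}\|_1/\sqrt{\tilde{s}}$ term (slightly better than $\sqrt{\rho}+\tfrac{1}{2}$) but a coefficient $\tfrac{1}{2\sqrt{\rho}}$ on the $\tau\|Uv\|_2$ term, so the resulting noise constant blows up as $\rho \to 0$, whereas the stated $D_2$ stays bounded. The argument is still a valid proof of an estimate of the claimed \emph{form}, but to obtain the theorem with the constants as written you should revert to the unweighted AM--GM.
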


This Theorem explains where the dependence on $\etasm$ and $l$ in \eqref{eq:RIPLL2EstimateImplication} emerges from. Analogous to proving the $\ell^2$ error estimates in Theorem \ref{Theorem:RIPtypeTheorem} using the $\ell^2$ robust nullspace property of order $s$, we prove the $\ell^2$ error estimate in Theorem \ref{Theorem:RIPInLevelsRecoveryTheorem} by showing that a sufficiently small $\ripl$ constant implies the robust $\ell^2$ nullspace property of order $\smsparse$. The $\ell^2$ error estimate \eqref{eq:RNSPL2EstimateImplication} follows and we are left with a dependence on $\sqrt[4]{l\eta_{\mathbf{s,M}}}$ in the right hand side of \eqref{eq:RIPLL2EstimateImplication}.
As before, Theorem \ref{Theorem:RNSPOptimalL2} shows that this is optimal.

\begin{theorem}\label{Theorem:RNSPOptimalL2}
The result in Theorem \ref{Theorem:RobustL2NSPImplicationOrderSM} is sharp, in the sense that
\begin{enumerate}
\item 
For any $f:\real^3 \to \real$ satisfying $f(\rho,\tau,\eta) = o(\eta^{\frac{1}{4}}) \text{ for fixed } \rho \in (0,1) \text{ and } \tau > 0,$
there are natural numbers $m$ and $n$, a matrix $U_2 \in \complex^{m \times n}$ and a sparsity pattern $\smsparse$ with ratio constant $\eta_{\mathbf{s,M}}$ and two levels such that 
\begin{itemize}
\item $\smsparse$ covers $U_2$
\item $U_2$ satisfies the $\ell^2$ robust nullspace property of order $\smsparse$ with constants $\rho \in (0,1)$ and $\tau > 0$
\item There exist vectors $z$ and $z^1$ such that $U_2(z-z^1)=0$ and $\|z\|_1 \leq \|z^1\|_1$ but 
\begin{equation*} 
\|z-z^1\|_2 > \frac{f(\rho,\tau,\etasm)}{\sqrt{\tilde{s}}} \sigma_{\mathbf{s,M}}(z^1)_1.
\end{equation*} 
\end{itemize}
\item 
For any $f:\real^3 \to \real$ satisfying $f(\rho,\tau,l) = o(l^{\frac{1}{4}}) \text{ for fixed } \rho \in (0,1) \text{ and } \tau > 0,$
there are natural numbers $m$ and $n$, a matrix $U_2 \in \complex^{m \times n}$ and a sparsity pattern $\smsparse$ with ratio constant $\eta_{\mathbf{s,M}} = 1$ and $l$ levels such that
\begin{itemize}
\item $\smsparse$ covers $U_2$
\item $U_2$ satisfies the $\ell^2$ robust nullspace property of order $\smsparse$ with constants $\rho \in (0,1)$ and $\tau > 0$
\item There exist vectors $z$ and $z^1$ such that $U_2(z-z^1)=0$ and $\|z\|_1 \leq \|z^1\|_1$ but 
\begin{equation*} 
\|z-z^1\|_2 > \frac{f(\rho,\tau,l)}{\sqrt{\tilde{s}}} \sigma_{\mathbf{s,M}}(z^1)_1.
\end{equation*}
\end{itemize}

\end{enumerate}
\end{theorem}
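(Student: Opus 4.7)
The plan is to derive both parts of the theorem from the recovery-failure constructions already established in Theorems \ref{Theorem:etaDependenceTheorem} and \ref{Theorem:lDependenceTheorem}. The key observation is that if the witness vector $z^1$ in the conclusion is chosen to be exactly $\smsparse$-sparse, then $\sigma_{\mathbf{s,M}}(z^1)_1 = 0$, so the desired strict inequality collapses to $\|z-z^1\|_2 > 0$, independently of $f$. Thus it suffices to exhibit an $\smsparse$-sparse vector $z^1$ which is not the $\ell^1$ minimizer subject to its own measurements, together with a verification that the underlying matrix satisfies the $\ell^2$ robust nullspace property of order $\smsparse$ with some constants $\rho \in (0,1)$ and $\tau > 0$.

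For part (1), I would fix $a=2$ and invoke Theorem \ref{Theorem:etaDependenceTheorem} applied to the constant function $f_0 \equiv 2$ (trivially $o(\eta^{1/2})$). This yields natural numbers $m,n$, a matrix $U_2 \in \complex^{m \times n}$, and a two-level sparsity pattern $\smsparse$ covering $U_2$ with any prescribed ratio constant $\etasm$, such that $\delta_{2\mathbf{s,M}} \le 1/2 < 1/\sqrt{2}$, together with an $\smsparse$-sparse $z^1$ that does not minimize $\|\hat z\|_1$ subject to $U_2 \hat z = U_2 z^1$. Taking $z$ to be any such minimizer, one has $\|z\|_1 \le \|z^1\|_1$, $U_2(z-z^1)=0$, and $z \neq z^1$. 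Since $z^1$ is $\smsparse$-sparse we have $\sigma_{\mathbf{s,M}}(z^1)_1 = 0$, while $\|z-z^1\|_2 > 0$, so the strict inequality is immediate regardless of the value of $f(\rho,\tau,\etasm)$.

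To close the argument I would invoke the implication $\delta_{2\mathbf{s,M}} < 1/\sqrt{2} \Rightarrow$ $\ell^2$ robust NSP of order $\smsparse$ with constants $\rho \in (0,1)$ and $\tau > 0$. This is the levels analogue of the classical RIP-implies-NSP result (Foucart--Rauhut, Theorem 6.13); a routine adaptation of that argument, of the kind used in the proof of Theorem \ref{Theorem:RIPInLevelsRecoveryTheorem}, yields explicit constants such as $\rho = \sqrt{2}\,\delta_{2\mathbf{s,M}}/(1-\delta_{2\mathbf{s,M}})$ and $\tau = \sqrt{1+\delta_{2\mathbf{s,M}}}/(1-\delta_{2\mathbf{s,M}})$, both in the required range when $\delta_{2\mathbf{s,M}} \le 1/2$.

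Part (2) is handled identically, invoking Theorem \ref{Theorem:lDependenceTheorem} in place of Theorem \ref{Theorem:etaDependenceTheorem} so as to produce an $l$-level sparsity pattern with $\etasm = 1$ and an $\smsparse$-sparse failure vector $z^1$; the remainder of the argument, including the NSP verification and the collapse of the inequality via $\sigma_{\mathbf{s,M}}(z^1)_1 = 0$, is unchanged. The only technical point worth flagging is the levels version of the RIP-to-NSP implication: once this is available, the proof introduces no new combinatorial content beyond the prior theorems.
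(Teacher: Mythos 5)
Your proposal contains a fatal logical contradiction. You ask for a matrix $U_2$ that simultaneously (a) satisfies the $\ell^2$ robust nullspace property of order $\smsparse$ with constants $\rho\in(0,1)$, $\tau>0$, and (b) admits an $\smsparse$-sparse $z^1$ that is not an $\ell^1$ minimiser subject to $U_2\widehat{z}=U_2z^1$. These two requirements are incompatible: if $v=z-z^1\neq 0$ lies in the kernel of $U_2$ and $S=\mathrm{supp}(z^1)$ is $\smsparse$-sparse, the robust nullspace property gives $\|v_S\|_1\le\sqrt{\tilde{s}}\,\|v_S\|_2\le\rho\|v_{S^c}\|_1<\|v_{S^c}\|_1$, which is exactly the nullspace property guaranteeing that every $\smsparse$-sparse vector is the unique $\ell^1$ minimiser of its own measurements (this is also immediate from Theorem \ref{Theorem:RobustL2NSPImplicationOrderSM} with $\epsilon=0$ and $\sigma_{\mathbf{s,M}}(z^1)_1=0$). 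So the trick of forcing $\sigma_{\mathbf{s,M}}(z^1)_1=0$ to make the inequality trivial cannot work: the theorem's content lives entirely in the regime where $z^1$ is \emph{not} $\smsparse$-sparse. Indeed the point of Theorem \ref{Theorem:RNSPOptimalL2} is not failure of sparse recovery (that is Theorems \ref{Theorem:etaDependenceTheorem} and \ref{Theorem:lDependenceTheorem}); it is that even for matrices which \emph{do} satisfy the nullspace property, the $\sqrt[4]{l\eta_{\mathbf{s,M}}}$ amplification in the $\ell^2$ estimate \eqref{eq:RNSPL2EstimateImplication} is unavoidable, and this must be witnessed by a vector with $\sigma_{\mathbf{s,M}}(z^1)_1>0$.

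A second, independent gap: the matrices built in Theorems \ref{Theorem:etaDependenceTheorem} and \ref{Theorem:lDependenceTheorem} provably do \emph{not} satisfy the nullspace property of order $\smsparse$ (they have a sparse vector that is not recovered), so your plan to ``verify the NSP'' for them cannot succeed. Relatedly, the implication ``$\delta_{2\mathbf{s,M}}<1/\sqrt{2}$ implies the $\ell^2$ robust NSP of order $\smsparse$'' that you invoke as a routine adaptation of the classical argument is false in the levels setting --- that is precisely what Theorem \ref{Theorem:etaDependenceTheorem} rules out, and why condition \eqref{eq:RequirementOnDelta2sM} must degrade with $l$ and $\eta_{\mathbf{s,M}}$. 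The paper's actual proof instead reuses the construction of Theorem \ref{Theorem:RIPInLevelsOptimalL2}: a kernel spanned by $x^1=\lambda(0,\dotsc,0,1,\dotsc,1)$ with $C^2$ zeros in the first level and $\omega(\rho,C)+1$ ones in the second level where $s_2=1$, with $U_2v=\sqrt{2}w/\tau$ for $w$ the component of $v$ orthogonal to $x^1$. One verifies the robust NSP with the prescribed $\rho$ and $\tau$ directly, and takes $z^1=x^1$, $z=0$, so that $\|z-z^1\|_2=1$ while $\sigma_{\mathbf{s,M}}(z^1)_1/\sqrt{\tilde{s}}\approx\sqrt{3/(\rho\sqrt{\eta_{\mathbf{s,M}}})}\to 0$, which beats any $f=o(\eta_{\mathbf{s,M}}^{1/4})$.
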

The conclusions that we can draw from the above theorems are the following:
\begin{enumerate}
\item The $\ripl$ will guarantee $\ell^1$ and $\ell^2$ estimates on the size of the error $\|\widetilde{x}-x\|$, provided that the $\ripl$ constant is sufficiently small (Theorem \ref{Theorem:RIPInLevelsRecoveryTheorem}). 
\item The requirement that the $\ripl$ constant is sufficiently small is dependent on $\sqrt{\eta_{\mathbf{s,M}}}$ and $\sqrt{l}$. This is optimal up to constants (Theorem \ref{Theorem:etaDependenceTheorem} and Theorem \ref{Theorem:lDependenceTheorem}).
\item The $\ell^2$ error when using the $\ripl$ has additional factors of the form $\sqrt[4]{l}$ and $\sqrt[4]{\eta_{\mathbf{s,M}}}$. Again, these are optimal up to constants (Theorem \ref{Theorem:RIPInLevelsOptimalL2}).
\item Typically, we use the robust $\ell^2$ nullspace property of order $s$ to give us a bound on the $\ell^2$ error in using $\widetilde{x}$ to approximate $x$. When this concept is extended to a robust $\ell^2$ nullspace property of order $\smsparse$ then the $\ell^2$ error gains additional factors of the form $\sqrt[4]{l}$ and $\sqrt[4]{\eta_{\mathbf{s,M}}}$ (Theorem \ref{Theorem:RobustL2NSPImplicationOrderSM}).
\item These factors are optimal up to constants, so that even if we ignore the $\ripl$ and still try to prove results using the $\ell^2$ robust nullspace property of order $\smsparse$ then we would be unable to improve the $\ell^2$ error (Theorem \ref{Theorem:RNSPOptimalL2}).
\end{enumerate}
With these results, we have demonstrated that the RIP in levels may be able to explain why permutations within levels are possible and why more general permutations are impossible. The results that we have obtained give a sufficient condition on the RIP in levels constant that guarantees $\smsparse$-sparse recovery. Furthermore, we have managed to demonstrate that this condition and the conclusions that follow from it are optimal up to constants.

\section{Conclusions and open problems}
The flip test demonstrates that in practical applications the ability to recover sparse signals depends on the structure of the sparsity, so that a tool that guarantees uniform recovery of all $s$-sparse signals does not apply.
 The flip test with permutations within the levels suggests that reasonable sampling schemes provide a different form of uniform recovery, namely, the recovery of  $\smsparse$-sparse signals. It is therefore natural to try to find theoretical tools that are able to analyse and describe this phenomenon.
However, we are now left with the fundamental problems:
\begin{itemize}
\item Given a sampling basis (say Fourier) and a recovery basis (say wavelets) and a sparsity pattern $\smsparse$, what kind of sampling procedure will give the $\mathrm{RIP}_L$ of order $\smsparse$?
\item How many samples must one draw, and how should they be distributed (as a function of $\mathbf{s}$ and $\mathbf{M}$), in order to obtain the $\mathrm{RIP}_L$?
\end{itemize}

Note that these problems are vast as the sampling patterns will not only depend on the sparsity patterns, but of course also on the sampling basis and recovery basis (or frame). Thus, covering all interesting cases relevant to practical applications will yield an incredibly rich mathematical theory.

\section*{Acknowledgements}
The authors would like to thank Arash Amini for asking an excellent question, during a talk given at EPFL, about the possibility of having a theory for the RIP in levels. It was this question that sparked the research leading to this paper. The authors would also like to thank Ben Adcock for many great suggestions and input. Finally, the authors would like to thank Bogdan Roman for his valuable discussions concering the implementation of the examples using the hadamard transformation. All the numerical computations were done with the SPGL1 package \cite{SPGL1,SPGL1Paper}.

ACH acknowledges support from a Royal Society University Research Fellowship as well as the UK Engineering and Physical Sciences Research Council (EPSRC) grant EP/L003457/1 and AB acknowledges support from EPSRC grant EP/H023348/1 for the University of Cambridge Centre for Doctoral Training, the Cambridge Centre for Analysis.

\section{Proofs}
We shall present the proofs in a different arrangement to the order in which their statements were presented. The first proof that we shall present is that of Theorem \ref{Theorem:RobustL2NSPImplicationOrderSM}.
\subsection{Proof of Theorem \ref{Theorem:RobustL2NSPImplicationOrderSM}}
We begin with the following lemma: 

\begin{lemma}\label{Lemma:NSP2ImplicationForV}
Suppose that $U \in \complex^{m \times n}$ satisfies the $\ell^2$ robust nullspace property of order $\smsparse$ with constants $\rho \in (0,1)$ and $\tau >0$. Fix $v \in \complex^n$, and let $S$ be an $\smsparse$-sparse set such that $|S| = \tilde{s}$ and the property that if $T$ is an $\smsparse$-sparse set, we have $\|v_S\|_1 \geq \|v_T\|_1$. Then
\begin{equation*}
\|v\|_2 \leq \frac{\|v_{S^c}\|_1}{\sqrt{\tilde{s}}} \left[ \rho + \left(\sqrt{\rho} + \frac{1}{2}\right) \sqrt[4]{l \eta_{\mathbf{s,M}}} \right]+ \tau \|Uv\|_2 \left[ \frac{\sqrt[4]{l\eta_{\mathbf{s,M}}}}{2} + 1 \right].
\end{equation*}
\end{lemma}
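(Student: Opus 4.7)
I would split $\|v\|_2 \leq \|v_S\|_2 + \|v_{S^c}\|_2$ by the triangle inequality. The piece $\|v_S\|_2$ is handled immediately by applying the $\ell^2$ robust nullspace property of order $\smsparse$ to the $\smsparse$-sparse set $S$, which yields $\|v_S\|_2 \leq \rho\|v_{S^c}\|_1/\sqrt{\tilde{s}} + \tau\|Uv\|_2$; this already contributes the ``$\rho$'' and the ``$1$'' in the two brackets of the conclusion. All the remaining work is in bounding $\|v_{S^c}\|_2$, and in particular in producing the quartic-root factor $\sqrt[4]{l\eta_{\mathbf{s,M}}}$ rather than the $\sqrt{l\eta_{\mathbf{s,M}}}$ one would get from a direct level-by-level argument.

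The heart of the proof is the elementary inequality $\|v_{S^c}\|_2^2 \leq \|v_{S^c}\|_\infty\|v_{S^c}\|_1$ combined with the $\ell^\infty$ estimate $\|v_{S^c}\|_\infty \leq \|v_S\|_2 \sqrt{l\eta_{\mathbf{s,M}}/\tilde{s}}$. To obtain the latter I would exploit the maximality of $\|v_S\|_1$: within each level, $S$ must contain the $s_i$ largest magnitudes of $v$ (otherwise one could swap to increase $\|v_S\|_1$), so every coordinate of $v_{S^c}$ in the $i$th level is bounded by $\min\{|v_k|:k\in S,\ M_{i-1}<k\leq M_i\}$. By the ``min $\leq$ root-mean-square'' inequality this minimum is at most $\|v_{S}\|_2/\sqrt{s_i}$, and the elementary estimate $s_{\min}\geq\tilde{s}/(l\eta_{\mathbf{s,M}})$ (which follows from $\tilde{s}\leq ls_{\max}\leq l\eta_{\mathbf{s,M}}\,s_{\min}$) completes the $\ell^\infty$ bound. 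Plugging in the NSP bound for $\|v_S\|_2$ and multiplying by $\|v_{S^c}\|_1$ produces the quadratic inequality
\begin{equation*}
\|v_{S^c}\|_2^2 \leq \rho\sqrt{l\eta_{\mathbf{s,M}}}\,\frac{\|v_{S^c}\|_1^2}{\tilde{s}} + \tau\|Uv\|_2\,\|v_{S^c}\|_1\sqrt{\frac{l\eta_{\mathbf{s,M}}}{\tilde{s}}}.
\end{equation*}
Applying $\sqrt{a+b}\leq\sqrt{a}+\sqrt{b}$ extracts an $(l\eta_{\mathbf{s,M}})^{1/4}$ and a $\sqrt{\rho}$ factor, and then AM--GM ($\sqrt{xy}\leq(x+y)/2$, with $x=\tau\|Uv\|_2$ and $y=\|v_{S^c}\|_1/\sqrt{\tilde{s}}$) separates the final mixed product into the desired linear combination. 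Combining with the NSP bound on $\|v_S\|_2$ then recovers exactly $\rho+(\sqrt{\rho}+\tfrac{1}{2})\sqrt[4]{l\eta_{\mathbf{s,M}}}$ and $1+\sqrt[4]{l\eta_{\mathbf{s,M}}}/2$ as the two bracketed coefficients.

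The only subtle point is the routing that yields the quartic root: one has to work at the level of the squared norm $\|v_{S^c}\|_2^2$ via $\|v_{S^c}\|_\infty\|v_{S^c}\|_1$ and postpone taking the square root until after substituting the NSP bound. This is also why a $\sqrt{\rho}$, rather than $\rho$, multiplies the high-order term in the final estimate. A more direct bound on $\|v_{S^c}\|_2$, such as summing $\|v_{T_k}\|_2$ over level-respecting chunks of decreasing magnitude, seems to yield only $\sqrt{l\eta_{\mathbf{s,M}}}$, which is strictly weaker and would spoil the sharpness claimed later in Theorem \ref{Theorem:RNSPOptimalL2}.
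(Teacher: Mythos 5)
Your proposal is correct and follows essentially the same route as the paper's proof: the splitting $\|v\|_2\leq\|v_S\|_2+\|v_{S^c}\|_2$, the bound $\|v_{S^c}\|_2^2\leq\|v_{S^c}\|_\infty\|v_{S^c}\|_1$ with the $\ell^\infty$ estimate coming from the level-wise maximality of $\|v_S\|_1$ and the inequality $\min\leq\|v_{S^i_0}\|_2/\sqrt{s_i}$, followed by substitution of the nullspace property, subadditivity of the square root, and AM--GM to extract the $\sqrt[4]{l\eta_{\mathbf{s,M}}}$ and $\sqrt{\rho}$ factors. The only cosmetic difference is that the paper carries $1/\min_i\sqrt{s_i}$ through the computation and converts it to $\sqrt{l\eta_{\mathbf{s,M}}/\tilde{s}}$ only at the end, whereas you substitute that bound up front.
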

\begin{proof}
For $i=1,2, \dotsc, l$, we define $S^i_0$ to be 
$
S^i_0 = S \cap \left\lbrace M_{i-1}+1,M_{i-1}+2, \dotsc, M_i \right\rbrace
$
(i.e. $S^i_0$ is the elements of $S$ that are in the $i$th level). Let 
$
m = \max_{i=1,2, \dotsc, l} \,\min_{j \in S^i_0}  |v_j|. 
$ 
Since $|S^i_0| = s_i$ (otherwise $|S| < \tilde{s}$), we can see that given any $i=1,2, \dotsc, l$
\begin{align*}
\|v_S\|_2  &= \sqrt{\sum_{n \in S} |v_n|^2}  \geq \sqrt{\sum_{j \in S^i_0} |v_j|^2} \geq \sqrt{s_i} \min_{j \in S^i_0}  |v_j| \geq \min_{k=1,2, \dotsc, l} \!\sqrt{s_k}\; \min_{j \in S^i_0}  |v_j|
\end{align*}
so that 
$
\|v_S\|_2 \geq m\min \limits_{k=1,2, \dotsc, l} \sqrt{s_k}.
$ 
Furthermore, 
$ |v_j| \leq m$ for each $j \in S^c$
otherwise there is an $\smsparse$-sparse $T$ with $\|v_T\|_1 > \|v_S\|_1$. Therefore
$\|v_{S^c}\|^2_2 = \sum_{j \in S^c} |v_j|^2 \leq \sum_{j \in S^c} m|v_j|   \leq \frac{\|v_{S^c}\|_1 \|v_S\|_2}{\min\limits_{k =1,2,\dotsc,l}\sqrt{s_k}}.$
By the $\ell^2$ robust nullspace property of order $\smsparse$,
$\|v_{S^c}\|_1 \|v_S\|_2 \leq \frac{\rho}{\sqrt{\tilde{s}}}\|v_{S^c}\|_1^2 + \tau \|Uv\|_2 \|v_{S^c}\|_1.$
Since $\sqrt{a+b} \leq \sqrt{a} + \sqrt{b}$ whenever $a,b>0$, 
\begin{equation}\label{eq:vSCLowerBoundNoSquare}
\|v_{S^c}\|_2 \leq \frac{1}{\min\sqrt[4]{s_i}} \left( \frac{\sqrt{\rho}}{\sqrt[4]{\tilde{s}}}\|v_{S^c}\|_1 + \sqrt{\tau \|Uv\|_2 \|v_{S^c}\|_1}\right)
\end{equation}
Using the arithmetic-geometric mean inequality,
\begin{align*}
\sqrt{\tau \|Uv\|_2 \|v_{S^c}\|_1} & = \sqrt{\tau \|Uv\|_2 \sqrt[4]{\tilde{s}} \frac{\|v_{S^c}\|_1}{\sqrt[4]{\tilde{s}}}} \leq \frac{\tau \|Uv\|_2 \sqrt[4]{\tilde{s}}}{2} +  \frac{\|v_{S^c}\|_1}{2\sqrt[4]{\tilde{s}}}
\end{align*}
Therefore, \eqref{eq:vSCLowerBoundNoSquare} yields
\begin{align*}
\|v_{S^c}\|_2 &\leq \frac{1}{\min\sqrt[4]{s_i}} \left(  \frac{\sqrt{\rho}}{\sqrt[4]{\tilde{s}}}\|v_{S^c}\|_1 + \frac{\|v_{S^c}\|_1}{2\sqrt[4]{\tilde{s}}} + \frac{\tau \|Uv\|_2 \sqrt[4]{\tilde{s}}}{2} \right)\\
&\leq \frac{\|v_{S^c}\|_1}{\sqrt[4]{\tilde{s}}\min\sqrt[4]{s_i}}\left(\sqrt{\rho} + \frac{1}{2} \right) + \frac{\tau \|Uv\|_2 \sqrt[4]{l\eta_{\mathbf{s,M}}}}{2}
\end{align*}
because 
$
\frac{\tilde{s}}{\min s_i} \leq l \eta_{\mathbf{s,M}}.
$
Once again, employing the $\ell^2$ nullspace property gives
\begin{align*}
\|v\|_2 \leq \|v_S\|_2 + \|v_{S^c}\|_2 &\leq \frac{\rho}{\sqrt{\tilde{s}}} \|v_{S^c}\|_1 + \tau \|Uv\|_2 + \frac{\|v_{S^c}\|_1}{\sqrt[4]{\tilde{s}}\min\sqrt[4]{s_i}}\left(\sqrt{\rho} + \frac{1}{2} \right) + \frac{\tau \|Uv\|_2 \sqrt[4]{l\eta_{\mathbf{s,M}}}}{2}\\
&\leq \frac{\|v_{S^c}\|_1}{\sqrt{\tilde{s}}} \left[ \rho + \left(\sqrt{\rho} + \frac{1}{2}\right)\frac{\sqrt[4]{
\tilde{s}}}{\min\sqrt[4]{s_i}}\right] + \tau \|Uv\|_2 \left[\frac{\sqrt[4]{l\eta_{\mathbf{s,M}}}}{2} + 1\right]\\
&\leq \frac{\|v_{S^c}\|_1}{\sqrt{\tilde{s}}} \left[ \rho + \left(\sqrt{\rho} + \frac{1}{2}\right)\sqrt[4]{l\eta_{\mathbf{s,M}}}\right] + \tau \|Uv\|_2 \left[\frac{\sqrt[4]{l\eta_{\mathbf{s,M}}}}{2} + 1\right].
\end{align*}
\end{proof}
The remaining error estimates will follow from various properties related to the \emph{$\ell^1$ robust nullspace property} (see \cite{foucartBook}, definition 4.17) holds. To be precise, 
\begin{definition}\label{def:l1robustNSP}
A matrix $U \in \complex^{m \times n}$ satisfies the $\ell^1$ robust nullspace property relative to $S$ with constants $\rho \in (0,1)$ and $\tau' > 0$ if
\begin{equation}\label{eq:l1NSPDefinition}
\|v_S\|_1 \leq \rho \|v_{S^c}\|_1 + \tau'  \|Uv\|_2 
\end{equation}
for any $v \in \complex^n$. We say that $U$ satisfies the $\ell^1$ robust nullspace property of order $\smsparse$ if \eqref{eq:l1NSPDefinition} holds for any $\smsparse$-sparse sets $S$. 
\end{definition}
It is easy to see that if $U$ satisfies the $\ell^2$ robust nullspace property of order $\smsparse$ with constants $\rho$ and $\tau$ then, for any $\smsparse$-sparse set $S$, $U$ also satisfies the $\ell^1$ robust nullspace property relative to $S$ with constants $\rho$ and $\tau\sqrt{\tilde{s}}$. Indeed, assume that $U$ satisfies the $\ell^2$ robust nullspace property of order $\smsparse$ with constants $\rho$ and $\tau$. Then (by the Cauchy-Schwarz inequality) 
$\|v_S\|_1 \leq \sqrt{\tilde{s}}\|v_S\|_2 \leq \rho \|v_{S^c}\|_1 + \tau \sqrt{\tilde{s}} \|Uv\|_2.$

An immediate conclusion of the robust nullspace property is the following, proven in \cite{foucartBook} as Theorem 4.20.
\begin{lemma}\label{lemmaNSPRelativetoSConclusion}
Suppose that $U \in \complex^{m \times n}$ satisfies the $\ell^1$ robust null space property with constants $\rho \in (0,1)$ and $\tau'$ relative to a set $S$. Then for any complex vectors $x,z \in \complex^n$, we have
\begin{equation*}
\|z-x\|_1 \leq \frac{1+\rho}{1-\rho} \left( \|z\|_1 - \|x\|_1 + 2 \|x_{S^c}\|_1 \right) + \frac{2\tau'}{1-\rho} \|U(z-x)\|_2.
\end{equation*}
\end{lemma}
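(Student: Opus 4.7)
The plan is to introduce the error vector $v := z - x$ and to split $\|v\|_1$ across the set $S$ and its complement: $\|v\|_1 = \|v_S\|_1 + \|v_{S^c}\|_1$. Applying the $\ell^1$ robust nullspace property relative to $S$ to the vector $v$ immediately yields $\|v_S\|_1 \le \rho\|v_{S^c}\|_1 + \tau'\|Uv\|_2$, so that
\[
\|v\|_1 \le (1+\rho)\|v_{S^c}\|_1 + \tau'\|Uv\|_2.
\]
The task therefore reduces to bounding $\|v_{S^c}\|_1$ by the quantity $\|z\|_1 - \|x\|_1 + 2\|x_{S^c}\|_1$ that appears in the conclusion.

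For that bound, I would use the reverse triangle inequality on the two pieces of the identity $z = x + v$. On $S$ one has $\|z_S\|_1 \ge \|x_S\|_1 - \|v_S\|_1$, and on $S^c$ one has $\|z_{S^c}\|_1 \ge \|v_{S^c}\|_1 - \|x_{S^c}\|_1$. Adding the two inequalities and using $\|x_S\|_1 = \|x\|_1 - \|x_{S^c}\|_1$ gives
\[
\|z\|_1 \ge \|x\|_1 - 2\|x_{S^c}\|_1 + \|v_{S^c}\|_1 - \|v_S\|_1,
\]
which after rearrangement reads $\|v_{S^c}\|_1 \le \|z\|_1 - \|x\|_1 + 2\|x_{S^c}\|_1 + \|v_S\|_1$.

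At this point I would invoke the $\ell^1$ robust nullspace property a second time, substituting $\|v_S\|_1 \le \rho\|v_{S^c}\|_1 + \tau'\|Uv\|_2$ into the previous display and absorbing $\rho\|v_{S^c}\|_1$ to the left-hand side. This yields
\[
(1-\rho)\|v_{S^c}\|_1 \le \|z\|_1 - \|x\|_1 + 2\|x_{S^c}\|_1 + \tau'\|Uv\|_2,
\]
i.e. $\|v_{S^c}\|_1 \le \frac{1}{1-\rho}\bigl(\|z\|_1 - \|x\|_1 + 2\|x_{S^c}\|_1\bigr) + \frac{\tau'}{1-\rho}\|Uv\|_2$. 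Plugging this back into the opening bound $\|v\|_1 \le (1+\rho)\|v_{S^c}\|_1 + \tau'\|Uv\|_2$ and combining the two contributions of $\|Uv\|_2$ via $\frac{(1+\rho)\tau'}{1-\rho} + \tau' = \frac{2\tau'}{1-\rho}$ produces exactly the claimed inequality.

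There is no conceptual obstacle — the argument is purely a two-step use of the nullspace property sandwiched around a triangle-inequality comparison of $\|z\|_1$ with $\|x\|_1$ — but a little care is needed with the bookkeeping to obtain the sharp constants $\frac{1+\rho}{1-\rho}$ and $\frac{2\tau'}{1-\rho}$ rather than slightly worse versions. The crux, and the only place where the property $\rho < 1$ is used, is the absorption step that allows $\|v_{S^c}\|_1$ to be isolated on one side of the inequality.
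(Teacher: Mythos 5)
Your proof is correct, and the bookkeeping checks out: the two applications of the nullspace property, the absorption step using $\rho<1$, and the combination $\frac{(1+\rho)\tau'}{1-\rho}+\tau'=\frac{2\tau'}{1-\rho}$ all yield exactly the stated constants. The paper itself gives no proof of this lemma (it cites Theorem 4.20 of \cite{foucartBook}), and your argument is essentially the standard one found there, so there is nothing further to compare.
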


We can use this lemma to show the following important result, which is similar both in proof and statement to Theorem 4.19 in \cite{foucartBook}.
\begin{lemma}\label{Lemma:NSPSufficiency}
Suppose that a matrix $U \in \complex^{m \times n}$ satisfies the $\ell^1$ robust nullspace property of order $\smsparse$ with constants $\rho \in (0,1)$ and $\tau'>0$. Furthermore, suppose that
$
\|Ux - y\|_2 \leq \epsilon.
$
 Then any solutions $\widetilde{x}$ to the $\ell^1$ minimisation problem
\begin{equation*} 
\min_{\widehat{x} \in \complex^n} \|\widehat{x}\|_1 \text{ \emph{subject to} } \|U\widehat{x} - y\|_2 \leq \epsilon 
\end{equation*} 
satisfy
\begin{equation*} 
\|x-\widetilde{x}\|_1 \leq  \frac{2+2\rho}{1-\rho} \sigma_{\mathbf{s,M}}(x)_1 + \frac{4\tau'\epsilon }{1-\rho}.
\end{equation*}
\end{lemma}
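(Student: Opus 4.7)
The plan is to deduce the statement directly from Lemma \ref{lemmaNSPRelativetoSConclusion}, choosing an appropriate $\smsparse$-sparse index set and then exploiting both the optimality of $\widetilde{x}$ in the $\ell^1$ minimisation problem and the triangle inequality on the measurement side.

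First I would let $S$ be an $\smsparse$-sparse set such that $x_S$ is a best $\smsparse$-sparse approximation to $x$ in $\ell^1$, so that $\|x_{S^c}\|_1 = \sigma_{\mathbf{s,M}}(x)_1$. Because $U$ satisfies the $\ell^1$ robust nullspace property of order $\smsparse$ with constants $\rho$ and $\tau'$, it satisfies the $\ell^1$ robust nullspace property relative to this particular $S$ with the same constants. Hence Lemma \ref{lemmaNSPRelativetoSConclusion} applies with this $S$.

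Next I would apply Lemma \ref{lemmaNSPRelativetoSConclusion} with $z = \widetilde{x}$ to obtain
\begin{equation*}
\|\widetilde{x}-x\|_1 \leq \frac{1+\rho}{1-\rho}\bigl( \|\widetilde{x}\|_1 - \|x\|_1 + 2\|x_{S^c}\|_1 \bigr) + \frac{2\tau'}{1-\rho}\|U(\widetilde{x}-x)\|_2.
\end{equation*}
Two simple observations finish the argument. Since $x$ itself is feasible for the $\ell^1$ minimisation problem (because $\|Ux - y\|_2 \leq \epsilon$), the optimality of $\widetilde{x}$ forces $\|\widetilde{x}\|_1 \leq \|x\|_1$, so the difference $\|\widetilde{x}\|_1 - \|x\|_1$ is non-positive and can be dropped. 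For the measurement term, the triangle inequality combined with the two constraints $\|U\widetilde{x}-y\|_2 \leq \epsilon$ and $\|Ux - y\|_2 \leq \epsilon$ yields $\|U(\widetilde{x}-x)\|_2 \leq 2\epsilon$. Substituting $\|x_{S^c}\|_1 = \sigma_{\mathbf{s,M}}(x)_1$ and these two bounds into the above inequality gives precisely
\begin{equation*}
\|\widetilde{x}-x\|_1 \leq \frac{2+2\rho}{1-\rho}\sigma_{\mathbf{s,M}}(x)_1 + \frac{4\tau'\epsilon}{1-\rho}.
\end{equation*}

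There is no real obstacle here; the proof is a direct adaptation of the classical $\ell^1$-robust-nullspace argument (essentially Theorem 4.19 in \cite{foucartBook}), with $s$-sparse sets replaced by $\smsparse$-sparse sets. The only point that deserves a sentence of explanation in the write-up is why the $\ell^1$ robust nullspace property \emph{of order $\smsparse$} implies the relative property for the specific set $S$ chosen above, but this is immediate from Definition \ref{def:l1robustNSP} since such an $S$ is $\smsparse$-sparse by construction.
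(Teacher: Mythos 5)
Your proposal is correct and follows essentially the same route as the paper's own proof: apply Lemma \ref{lemmaNSPRelativetoSConclusion} with $z=\widetilde{x}$, drop $\|\widetilde{x}\|_1 - \|x\|_1 \leq 0$ by feasibility of $x$ and optimality of $\widetilde{x}$, bound $\|U(\widetilde{x}-x)\|_2 \leq 2\epsilon$ by the triangle inequality, and pick $S$ so that $\|x_{S^c}\|_1 = \sigma_{\mathbf{s,M}}(x)_1$. The only cosmetic difference is that you fix the optimal $\smsparse$-sparse set $S$ at the outset, whereas the paper applies the lemma for an arbitrary $\smsparse$-sparse $S$ and optimises over $S$ at the end.
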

\begin{proof}
By Lemma \ref{lemmaNSPRelativetoSConclusion}, for any $\smsparse$-sparse set $S$
\begin{equation*}
\|\widetilde{x}-x\|_1 \leq \frac{1+\rho}{1-\rho} \left(\|\widetilde{x}\|_1 - \|x\|_1 + 2 \|x_{S^c}\|_1\right)+ \frac{2\tau'}{1-\rho} \|U(\widetilde{x}-x)\|_2
\end{equation*}
Because both $\|Ux-y\|_2$ and $\|U\widetilde{x} - y\|_2$ are smaller than or equal to $\epsilon$,
$
\|Ux-U\widetilde{x}\| \leq 2\epsilon.
$
Furthermore, because $\widetilde{x}$ has minimal $\ell_1$ norm, 
$
\|\widetilde{x}\|_1 - \|x\|_1 \leq 0.
$

Thus
$
\|x-\widetilde{x}\|_1 \leq \frac{2+2\rho}{1-\rho}  \|x_{S^c}\|_1+ \frac{4\tau'\epsilon}{1-\rho}.
$
If we take $S$ to be the $\smsparse$-sparse set which maximizes $\|x_S\|_1$, then
\begin{equation*}
\|x-\widetilde{x}\|_1 \leq \frac{2+2\rho}{1-\rho}  \sigma_{\mathbf{s,M}}(x)_1 + \frac{4\tau' \epsilon}{1-\rho}.
\end{equation*}
\end{proof}
We can combine these results to complete the proof of Theorem \ref{Theorem:RobustL2NSPImplicationOrderSM}. Indeed, \eqref{eq:RNSPL1EstimateImplication} follows immediately from Lemma \ref{Lemma:NSPSufficiency} and the fact that $U$ satisfies the $\ell^1$ robust nullspace property with constants $\rho$ and $\tau \sqrt{\tilde{s}}$. To prove \eqref{eq:RNSPL2EstimateImplication}, we can simply set $v = x-\widetilde{x}$ in Lemma \ref{Lemma:NSP2ImplicationForV} to see that
\begin{align*}
\|x-\widetilde{x}\|_2 &\leq \frac{\|(x-\widetilde{x})_{S^c}\|_1}{\sqrt{\tilde{s}}} \left[ \rho + \left(\sqrt{\rho} + \frac{1}{2}\right) \sqrt[4]{l \eta_{\mathbf{s,M}}} \right]+ \tau \|U\left(x-\widetilde{x}\right)\|_2 \left[ \frac{\sqrt[4]{l\eta_{\mathbf{s,M}}}}{2} + 1 \right]\\
&\leq \frac{\|x-\widetilde{x}\|_1}{\sqrt{\tilde{s}}} \left[ \rho + \left(\sqrt{\rho} + \frac{1}{2}\right) \sqrt[4]{l \eta_{\mathbf{s,M}}} \right]+ 2\tau\epsilon \left[ \frac{\sqrt[4]{l\eta_{\mathbf{s,M}}}}{2} + 1 \right]
\end{align*}
and the result follows from $\eqref{eq:RNSPL1EstimateImplication}$.

\subsection{Proof of Theorem \ref{Theorem:RIPInLevelsRecoveryTheorem}}
It will suffice to prove that the conditions on $\delta_{\mathbf{s,M}}$ and $\smsparse$ in Theorem \ref{Theorem:RIPInLevelsRecoveryTheorem} imply the $\ell^2$ robust nullspace property. To show this, we begin by stating the following inequality, proven in \cite{NormInequality}:
\begin{lemma}[The norm inequality for $\ell_1$ and $\ell_2$]\label{l1l2NormInequality}
Let $v = (v_1,v_2, \dotsc ,v_s)$ where $v_1 \geq v_2 \geq v_3 \geq \dotsb  \geq v_s$. Then 
\begin{equation*}
\|v\|_2 \leq \frac{1}{\sqrt{s}} \|v\|_1 + \frac{\sqrt{s}}{4} (v_1 - v_s)
\end{equation*}
\end{lemma}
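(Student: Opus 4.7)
The plan is to prove the inequality in two stages: first establish the special case where the minimum entry $v_s$ equals zero, and then reduce the general case to this base case by the translation $v \mapsto v - v_s \mathbf{1}$. The base case is where the only non-routine step lies; once it is in hand, the reduction is a one-line triangle-inequality argument.

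For the base case $v_s = 0$, sortedness is used only through $0 \leq v_i \leq v_1$, which gives the pointwise bound $v_i^2 \leq v_1 v_i$. Summing over $i$ yields the Holder-type estimate $\|v\|_2^2 \leq v_1 \|v\|_1$. The key move is then to split
\begin{equation*}
v_1 \|v\|_1 = 4 \cdot \frac{\|v\|_1}{\sqrt{s}} \cdot \frac{\sqrt{s}\,v_1}{4}
\end{equation*}
and apply AM-GM in the form $2\sqrt{xy} \leq x + y$ with $x = \|v\|_1/\sqrt{s}$ and $y = \sqrt{s}\,v_1/4$. This yields $\|v\|_2 \leq \sqrt{v_1 \|v\|_1} \leq \|v\|_1/\sqrt{s} + \sqrt{s}\,v_1/4$, which is exactly the claimed inequality when $v_s = 0$.

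For the general case (with non-negative entries), set $v' := v - v_s \mathbf{1}$, so that $v'_s = 0$, $v'_1 = v_1 - v_s$, and $\|v'\|_1 = \|v\|_1 - s v_s$. Applying the base case to $v'$ gives
\begin{equation*}
\|v'\|_2 \leq \frac{\|v\|_1 - s v_s}{\sqrt{s}} + \frac{\sqrt{s}(v_1-v_s)}{4}.
\end{equation*}
Combining this with the triangle inequality $\|v\|_2 \leq \|v'\|_2 + v_s \sqrt{s}$ (valid since $v_s \geq 0$), the $\pm v_s \sqrt{s}$ terms cancel and one recovers precisely $\|v\|_2 \leq \|v\|_1/\sqrt{s} + \sqrt{s}(v_1-v_s)/4$.

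The main point to get right is the choice of splitting in AM-GM. The more obvious route through a variance-type bound gives $\|v\|_2^2 \leq \|v\|_1^2/s + s(v_1-v_s)^2/4$, from which $\sqrt{a^2+b^2} \leq a + b$ only produces the weaker factor $\sqrt{s}(v_1-v_s)/2$ in place of the desired $\sqrt{s}(v_1-v_s)/4$. Splitting the Holder bound $\|v\|_2^2 \leq v_1 \|v\|_1$ in the asymmetric ratio $1:4$ is exactly what produces the sharper factor; this is the one ingredient in the argument that is not completely routine.
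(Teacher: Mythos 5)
The paper does not actually prove this lemma --- it is quoted as ``proven in \cite{NormInequality}'' and used as a black box --- so there is no internal proof to compare against; your proposal supplies a genuine, self-contained argument, and it is correct. The base case is right: for $v_s=0$ sortedness gives $0\leq v_i\leq v_1$, hence $\|v\|_2^2\leq v_1\|v\|_1$, and the asymmetric AM--GM split $\sqrt{v_1\|v\|_1}=2\sqrt{(\|v\|_1/\sqrt{s})\cdot(\sqrt{s}\,v_1/4)}\leq \|v\|_1/\sqrt{s}+\sqrt{s}\,v_1/4$ indeed produces the constant $1/4$, where the Popoviciu-variance route only yields $1/2$; the translation $v\mapsto v-v_s\mathbf{1}$ then makes the $\pm v_s\sqrt{s}$ terms cancel exactly as you say. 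The one caveat is that your argument (both the identity $\|v'\|_1=\|v\|_1-sv_s$ and the triangle-inequality step $\|v\|_2\leq\|v'\|_2+v_s\sqrt{s}$) requires $v_s\geq 0$, which the lemma as stated does not formally assume; you flag this parenthetically, and it is harmless here because the paper only ever applies the lemma to decreasingly ordered vectors of absolute values $|x_{S^i_j}|$, but the hypothesis $v_s\geq 0$ should be stated explicitly if this proof is to stand in for the citation.
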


We will now prove the following additional lemma which is almost identical in statement and proof to that of Lemma 6.1 in \cite{4over41Paper}.	
\begin{lemma}\label{lemmaInnerProductBetweenVectors}
Suppose that $x,y \in \Sigma_{\mathbf{s,M}}$ and that
\begin{equation} \label{equation:InnerProductBetweenVectorsAssumption1}
\|Ux\|^2_2 - \|x\|^2_2 = t\|x\|^2_2.
\end{equation}
Additionally, suppose that $x$ and $y$ are orthogonal.
Then 
$
|\langle Ux,Uy \rangle| \leq \sqrt{\delta_{2\mathbf{s,M}}^2-t^2}\|x\|_2\|y\|_2 
$
where $\delta_{2\mathbf{s,M}}$ is the restricted isometry constant corresponding to the sparsity pattern $(2\mathbf{s,M})$ and the matrix $U$.
\end{lemma}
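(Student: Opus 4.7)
The plan is to reduce the claim to bounding the spectral norm of a $2\times 2$ symmetric real matrix built from the restriction of $U$ to the two-dimensional subspace spanned by $x$ and $y$. First I would normalize, dividing by norms so that $\|x\|_2=\|y\|_2=1$ (the degenerate case $\|y\|_2=0$ being trivial), and then multiply $y$ by a unimodular phase --- an operation that preserves orthogonality, norm, and membership in $\Sigma_{\mathbf{s,M}}$ --- so that $\rho := \langle Ux, Uy\rangle$ is real and nonnegative. Set $s := \|Uy\|_2^2 - 1$; note that since $y\in\Sigma_{\mathbf{s,M}}$, the $\mathrm{RIP}_L$ assumption gives $|s|\leq \delta_{2\mathbf{s,M}}$, though we will not need this quantitatively.

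For any real pair $(\alpha,\beta)$ with $\alpha^2+\beta^2=1$, the vector $\alpha x + \beta y$ belongs to $\Sigma_{2\mathbf{s,M}}$ (the support in each level aggregates to at most $2s_i$), and by orthogonality $\|\alpha x + \beta y\|_2^2 = 1$. Expanding
\[
\|U(\alpha x + \beta y)\|_2^2 = \alpha^2(1+t) + 2\alpha\beta\rho + \beta^2(1+s)
\]
and applying the $\ripl$ constant of order $2\mathbf{s,M}$ therefore yields
\[
\bigl|\alpha^2 t + 2\alpha\beta\rho + \beta^2 s\bigr| \leq \delta_{2\mathbf{s,M}} \qquad \text{for every } (\alpha,\beta) \text{ with } \alpha^2+\beta^2=1.
\]
This is precisely the statement that the real symmetric matrix $M := \begin{pmatrix} t & \rho \\ \rho & s\end{pmatrix}$ satisfies $\max_{\|v\|_2=1}|v^{T}Mv|\leq \delta_{2\mathbf{s,M}}$, and hence (since $M$ is symmetric) has spectral norm at most $\delta_{2\mathbf{s,M}}$.

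The conclusion now follows by applying $M$ to the standard basis vector $e_1$: we have $\|Me_1\|_2 = \sqrt{t^2 + \rho^2} \leq \|M\|_{\mathrm{op}} \leq \delta_{2\mathbf{s,M}}$, hence $\rho^2 \leq \delta_{2\mathbf{s,M}}^2 - t^2$. Reinstating the original norms of $x$ and $y$ gives the stated inequality $|\langle Ux,Uy\rangle|\leq \sqrt{\delta_{2\mathbf{s,M}}^2-t^2}\,\|x\|_2\|y\|_2$. The only mildly delicate points are the phase-reduction step (to move from a complex inner product to a real quadratic form) and the verification that $\alpha x+\beta y\in\Sigma_{2\mathbf{s,M}}$; neither is a genuine obstacle. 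The main technical content is the observation that the $\ripl$ bound on the quadratic form $v^{T}Mv$ over the unit circle converts directly into a spectral-norm bound on $M$, which in turn controls $\sqrt{t^2+\rho^2}$ through the first column of $M$.
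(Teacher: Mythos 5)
Your proof is correct. The setup---normalising $x$ and $y$, rotating the phase of $y$ so that $\rho:=\langle Ux,Uy\rangle$ is real, and applying the $\ripl$ bound of order $2\mathbf{s,M}$ to combinations $\alpha x+\beta y\in\Sigma_{2\mathbf{s,M}}$---is shared with the paper, but the way you extract the conclusion is genuinely different. The paper follows \cite{4over41Paper}: it applies the upper RIP bound to $\alpha x+\gamma y$ and the lower bound to $\beta x-\gamma y$ for independent real $\alpha,\beta$ and unimodular $\gamma$, adds the two inequalities, and then substitutes the specific choice $\alpha=\sqrt{(\delta_{2\mathbf{s,M}}+t)/(\delta_{2\mathbf{s,M}}-t)}$, $\beta=1/\alpha$, handling the boundary case $|t|=\delta_{2\mathbf{s,M}}$ by a separate limiting argument ($\beta=0$, $\alpha\to\infty$). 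You instead observe that the family of constraints $\bigl|\alpha^2t+2\alpha\beta\rho+\beta^2 s\bigr|\le\delta_{2\mathbf{s,M}}$ over the real unit circle is exactly the statement that the symmetric matrix $\begin{pmatrix} t & \rho\\ \rho & s\end{pmatrix}$ has spectral norm at most $\delta_{2\mathbf{s,M}}$, and then read off $\sqrt{t^2+\rho^2}\le\delta_{2\mathbf{s,M}}$ from the norm of its first column. This replaces the paper's somewhat unmotivated choice of $\alpha,\beta$ and its degenerate-case bookkeeping with a transparent linear-algebra fact, and it gives the symmetric bound $\sqrt{s^2+\rho^2}\le\delta_{2\mathbf{s,M}}$ in $s=\|Uy\|_2^2-\|y\|_2^2$ for free; the only thing the paper's route buys is that it transplants verbatim from the existing classical-RIP literature. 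The two delicate points you flag (the phase reduction preserves orthogonality, norms and $\Sigma_{\mathbf{s,M}}$-membership; supports aggregate so $\alpha x+\beta y\in\Sigma_{2\mathbf{s,M}}$) both check out.
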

\begin{proof}
Without loss of generality, we can assume that $\|x\|_2 = \|y\|_2 = 1$. Note that for $\alpha,\beta \in \real$ and $\gamma \in \mathbb{C}$, the vectors $\alpha x + \gamma y$ and $\beta x - \gamma y$ are contained in $\Sigma_{2\mathbf{s,M}}$. Therefore,
\begin{equation}\label{equation:firstVecUB}
\|U\left(\alpha x + \gamma y\right)\|^2_2 \leq (1+\delta_{2\mathbf{s,M}}) \|\alpha x + \gamma y\|^2_2 = (1+\delta_{2\mathbf{s,M}})(\alpha^2+|\gamma|^2).
\end{equation}
where the last line follows because $\left\langle x,y \right\rangle = 0$ (from the orthogonality of $x$ and $y$).
Similarly,
\begin{equation}\label{equation:secondVecUB}
-\|U(\beta x - \gamma y)\|^2_2 \leq -(1-\delta_{2\mathbf{s,M}})  (\beta^2+|\gamma|^2)
\end{equation}
We will now add these two inequalities. On the one hand (by using the assumption in \eqref{equation:InnerProductBetweenVectorsAssumption1} and the fact that $\alpha$,$\beta$ are real), we have
\begin{align*}
\|U(\alpha x+ \gamma y)\|^2_2  - \|U(\beta x - \gamma y)\|^2_2 &= \alpha^2 \|Ux\|^2_2 + 2\rp(\alpha\overline{\gamma}\left\langle Ux,Uy \right\rangle) + |\gamma|^2 \|Uy\|^2_2 \\
&\text{ } -\left(\beta^2 \|Ux\|^2_2 - 2\rp(\beta\overline{\gamma} \left\langle Ux,Uy\right\rangle) + |\gamma|^2 \|Uy\|^2_2\right)\\
&=(1+t) \left(\alpha^2 - \beta^2\right) + 2(\alpha+\beta)\rp(\overline{\gamma} \left\langle Ux,Uy \right\rangle )
\end{align*}
and on the other hand (from \eqref{equation:firstVecUB} and \eqref{equation:secondVecUB})
\begin{equation*}
\|U(\alpha x+ \gamma y)\|^2_2  - \|U(\beta x - \gamma y)\|^2_2 \leq \delta_{2\mathbf{s,M}} \left(\alpha^2+\beta^2 + 2|\gamma|^2 \right) + \alpha^2 - \beta^2.
\end{equation*}
Therefore,
\begin{equation*}
(1+t) \left(\alpha^2 - \beta^2\right) + 2(\alpha+\beta)\rp(\overline{\gamma} \left\langle Ux,Uy \right\rangle )  \leq \delta_{2\mathbf{s,M}} \left(\alpha^2+\beta^2 + 2|\gamma|^2 \right) + \alpha^2 - \beta^2.
\end{equation*}
After choosing $\gamma$ so that $\rp(\overline{\gamma} \left\langle Ux,Uy \right\rangle) = |\left\langle Ux,Uy \right\rangle|$ we obtain
\begin{equation}\label{equation:UxUyUpperBound}
|\left\langle Ux,Uy \right\rangle| \leq \frac{1}{2\alpha+2\beta}\left[(\delta_{2\mathbf{s,M}}-t)\alpha^2 + ( \delta_{2\mathbf{s,M}}+t)\beta^2 + 2\delta_{2\mathbf{s,M}}\right]	
\end{equation}  
because $|\gamma|=1$. 
By the definition of the RIP in levels constant, $\delta_{2\mathbf{s,M}} \geq \delta_{\mathbf{s,M}}$ and so
\begin{equation}
|t| = \left| \|Ux\|_2^2 - \|x\|^2_2 \right| \leq \delta_{\mathbf{s,M}} \leq \delta_{2\mathbf{s,M}}. \label{eq:tDelta2sM}
\end{equation}
If equality holds in \eqref{eq:tDelta2sM}, then we can set $\beta = 0$ and send $\alpha \to \infty$ in \eqref{equation:UxUyUpperBound} to obtain the required result. Otherwise, equation \eqref{eq:tDelta2sM} implies that 
$
 \sqrt{\frac{\delta_{2\mathbf{s,M}} +t}{\delta_{2\mathbf{s,M}} - t}} \in \real
$
and so we can set
$
\alpha = \sqrt{\frac{\delta_{2\mathbf{s,M}} +t}{\delta_{2\mathbf{s,M}} - t}} 
$
and $\beta = \frac{1}{\alpha}$
in equation \eqref{equation:UxUyUpperBound}.
With these values, we obtain
\begin{align*}
|\left\langle Ux,Uy \right\rangle| &\leq  \frac{\alpha}{2\alpha^2+2} \left( \delta_{2\mathbf{s,M}} + t + \delta_{2\mathbf{s,M}} - t + 2\delta_{2\mathbf{s,M}}\right) \leq \frac{4\delta_{2\mathbf{s,M}}\alpha(\delta_{2\mathbf{s,M}}-t)}{4\delta_{2\mathbf{s,M}}}  \leq 	\sqrt{\delta^2_{2\mathbf{s,M}}-t^2}. 
\end{align*}
\end{proof}

\begin{proof}[Proof (of Theorem \ref{Theorem:RIPInLevelsRecoveryTheorem})]
Let $x \in \complex^m$ be an arbitrary $m$ dimensional complex vector, and let
\begin{equation*}
x^i  := x_{\left\lbrace M_{i-1}+1,M_{i-1}+2, \dotsc ,M_{i} \right\rbrace}
\end{equation*}
denote the $i$th level of $x$. For an arbitrary vector $v = \left(v_1,v_2, \dotsc ,v_n\right)$, we define $|v|$ to be the vector $\left(|v_1|,|v_2|, \dotsc,|v_n|\right)$. 
Let $S^i_0$ denote the indexes of the $s_i$th largest elements of $|x^i|$, and 
$
S_0 := \bigcup_{i=1}^{l}S^i_0.
$
We then define $S^i_1$ to be the indexes of the $s_i$th largest elements of $|x^i|$ that are not contained in $S^i_0$ (if there are fewer that $s_i$ elements remaining, we simply take the indexes of any remaining elements of $|x^i|$) and define
$
S_1 := \bigcup_{i=1}^{l}S^i_1.
$
In general, we can make a similar definition to form a collection of index sets labelled $(S^i_j)_{i=1,2 \dotsc,l,j=1,2, \dots}$ and corresponding $\smsparse$-sparse $S_j$. 

These definitions and the fact that $\smsparse$ covers $U$ implies that if $\Omega = \bigcup\limits_{j \geq 0}S_j$ then $x_{\Omega} = x$.
By the definition of $S_0$,
$
\|x_{\Lambda}\|_1 \leq \|x_{S_0}\|_1
$
whenever $\Lambda$ is $\smsparse$-sparse. By Theorem \ref{Theorem:RobustL2NSPImplicationOrderSM}, it will suffice to verify that
\begin{equation}
\sqrt{\tilde{s}} \|x_{S_0}\|_2 \leq \rho \|x_{S^c_0}\|_1 + \tau \sqrt{\tilde{s}}\|Ux\|_2 \label{requiredNSP}
\end{equation}
holds for some $\rho \in (0,1)$ and $\tau>0$. Set 
\begin{equation}
\|Ux_{S_0}\|_2^2 = (1+t)\|x_{S_0}\|_2^2. \label{initialdef}
\end{equation}
Clearly, $|t| \leq \delta_{\mathbf{s,M}}$. Then
\begin{align}
\|Ux_{S_0}\|^2_2 &= \left\langle Ux_{S_0},Ux_{S_0} \right\rangle  = \left\langle Ux_{S_0},Ux \right\rangle - \sum_{j \geq 1} \left\langle Ux_{S_0},Ux_{S_j}\right\rangle. \label{simpleupperbound}
\end{align} 
where we have used $x_{\Omega} = x$.
Using the Cauchy-Schwarz inequality and \eqref{initialdef} yields
\begin{equation}
|\left\langle Ux_{S_0},Ux \right\rangle| \leq \|Ux_{S_0}\|_2 \|Ux\|_2 \leq \sqrt{1+t}\,\|x_{S_0}\|_2 \|Ux\|_2. \label{noisetermub}
\end{equation}
Furthermore, we can use Lemma \ref{lemmaInnerProductBetweenVectors} to see that
\begin{align}
\left|\,\sum_{j \geq 1} \left\langle Ux_{S_0},Ux_{S_j} \right\rangle\right| &\leq \sqrt{\delta^2_{2\mathbf{s,M}}-t^2}\sum_{j \geq 1} \|x_{S_0}\|_2\|x_{S_j}\|_2 \leq \|x_{S_0}\|_2 \sqrt{\delta^2_{2\mathbf{s,M}}-t^2}\sum_{i=1}^l\sum_{j \geq 1} \|x_{S^i_j}\|_2. \label{initialupperbound}
\end{align}
Combining \eqref{initialdef},\eqref{simpleupperbound},\eqref{noisetermub} and \eqref{initialupperbound} yields 
\begin{equation}
(1+t)\|x_{S_0}\|_2^2 \leq \sqrt{1+t}\,\|x_{S_0}\|_2  \|Ux\|_2 +\|x_{S_0}\|_2  \sqrt{\delta^2_{2\mathbf{s,M}}-t^2} \sum_{i=1}^{l}\sum_{j \geq 1} \|x_{S^i_j}\|_2. \label{eq:ProofOf4.4Interim}
\end{equation}
If $|S^i_j| = s_i$ then let $x^{+}_{i,j}$ (correspondingly $x^{-}_{i,j}$) be the largest element of $\left|x_{S^i_j}\right|$ (correspondingly the smallest element of $\left|x_{S^i_j}\right|$). If $S^i_j$ is non-empty with fewer than $s_i$ elements then we set $x^{+}_{i,j}$ to be the largest element of $\left|x_{S^i_j}\right|$ and $x^{-}_{i,j}=0$. Finally, when $S^i_j = \emptyset$, we let $x^{+}_{i,j}=x^{-}_{i,j}=0$. It is clear then that $x^{+}_{i,j+1} \leq x^{-}_{i,j}$.

Since $x_{S^i_j}$ contains at most $s_i$ non-zero elements, we can apply the norm inequality for $\ell_1$ and $\ell_2$ (Lemma \ref{l1l2NormInequality}) to obtain
\begin{equation*}
\|x_{S^i_j}\|_2 \leq \frac{1}{\sqrt{s_i}}\|x_{S^i_j}\|_1 + \frac{\sqrt{s_i}}{4} \left(x^+_{i,j} - x^-_{i,j}\right)
\end{equation*}
for any $i=1,2,\dotsc,l$ and $j \in \mathbb{N}$. Therefore
\begin{align*}
\sum_{j \geq 1} \|x_{S^i_j}\|_2 &\leq \sum_{j \geq 1} \left(\frac{1}{\sqrt{s_i}}\|x_{S^i_j}\|_1\right) + \frac{\sqrt{s_i}}{4} \sum_{j \geq 1}\left(x^+_{i,j} - x^-_{i,j}\right)\\
& \leq \sum_{j \geq 1} \left(\frac{1}{\sqrt{s_i}}\|x_{S^i_j}\|_1\right) + \frac{\sqrt{s_i}}{4} \left(x^+_{i,1} + \sum_{j \geq 2} x^+_{i,j} - \sum_{j \geq 1} x^+_{i,j}\right)\\
& \leq \sum_{j \geq 1} \left(\frac{1}{\sqrt{s_i}}\|x_{S^i_j}\|_1\right) + \frac{\sqrt{s_i}}{4} \left(x^+_{i,1} + \sum_{j \geq 1}\left(x^+_{i,j+1} -x^-_{i,j}  \right)\right)\\
& \leq \sum_{j \geq 1} \left(\frac{1}{\sqrt{s_i}}\|x_{S^i_j}\|_1\right) + \frac{\sqrt{s_i}}{4} x^+_{i,1}
\end{align*}
where the last inequality follows because $x^+_{i,j+1} - x^-_{i,j} \leq 0$. Additionally, 
\begin{align*}
\frac{\sqrt{s_i}}{4} x^{+}_{i,1} &= \frac{1}{4}\sqrt{\underbrace{(x^{+}_{i,1})^2+(x^{+}_{i,1})^2+(x^{+}_{i,1})^2 + \dotsb +(x^{+}_{i,1})^2}_{s_i \text{ elements}}}\leq \frac{1}{4}\|x_{S^i_0}\|_2
\end{align*}
because each element of $\left|x_{S^i_0}\right|$ is larger than $x^{+}_{i,1}$. We conclude that
\begin{align*}
\sum_{i=1}^{l}\sum_{j \geq 1} \|x_{S^i_j}\|_2 &\leq \sum_{j \geq 1} \sum_{i=1}^{l} \frac{1}{\sqrt{s_i}} \|x_{S^i_j}\|_1 + \sum_{i=1}^{l} \frac{1}{4} \|x_{S^i_0}\|_2 \leq \frac{1}{\min\sqrt{s_i}}\sum_{j \geq 1} \sum_{i=1}^{l}  \|x_{S^i_j}\|_1 + \frac{1}{4}\sqrt{l}\|x_{S_0}\|_2 \nonumber \\
&\leq \frac{1}{\min\sqrt{s_i}} 	\sum_{j \geq 1} \|x_{S_j}\|_1 + \frac{1}{4}\sqrt{l}\|x_{S_0}\|_2 \leq  \frac{1}{\min\sqrt{s_i}} \left\|x_{ \bigcup\limits_{j\geq 1}S_j} \right\|_1 + \frac{1}{4} \sqrt{l} \|x_{S_0}\|_2 \nonumber \\
\end{align*}
where the second inequality follows from the Cauchy-Schwarz inequality applied to $(\underbrace{1,1, \dotsc ,1}_{l})$ and $(\|x_{S^1_0}\|_2,\|x_{S^2_0}\|_2, \dotsc ,\|x_{S^l_0}\|_2)$ and the third and fourth inequalities follow from the disjoint supports of the vectors $x_{S^i_j}$ and $x_{S^{i'}_{j'}}$ whenever $i \neq i'$ or $j \neq j'$.
By $x_{\Omega} = x$ and the disjointedness of $S_i,S_j$ for $i \neq j$, 
$
\bigcup\limits_{j\geq 1}S_j = S_0^c
$
so
\begin{equation} \label{upperboundxvecS^i_j}
\sum_{i=1}^{l}\sum_{j \geq 1} \|x_{S^i_j}\|_2 \leq   \frac{1}{\min\sqrt{s_i}} \|x_{S_0^c}\|_1 + \frac{1}{4} \sqrt{l} \|x_{S_0}\|_2.
\end{equation}
Dividing \eqref{eq:ProofOf4.4Interim} by $\|x_{S_0}\|_2$ and employing \eqref{upperboundxvecS^i_j} yields
\begin{equation}
(1+t)\|x_{S_0}\|_2 \leq \sqrt{1+t} \|Ux\|_2 +  \sqrt{\delta^2_{2\mathbf{s,M}}-t^2} \left(\frac{1}{\min\sqrt{s_i}} \|x_{S_0^c}\|_1 + \frac{1}{4} \sqrt{l} \|x_{S_0}\|_2\right) . \label{eqAlmostThere}
\end{equation}
Let 
$
g(t):= \frac{\delta^2_{2\mathbf{s,M}}-t^2}{(1+t)^2}
$
for $|t| \leq \delta_{2\mathbf{s,M}}$. It is clear that $g(\delta_{2\mathbf{s,M}}) = g(-\delta_{2\mathbf{s,M}}) = 0$. Furthermore, $g$ is differentiable. Therefore $g$ attains its maximum at $t_{\max}$, where $g'(t_{\max}) = 0$.
A simple calculation shows us that $t_{\max} = -\delta_{2\mathbf{s,M}}^2$ (note that by the assumption \eqref{eq:RequirementOnDelta2sM}, $\delta_{2\mathbf{s,M}}^2 \leq \delta_{2\mathbf{s,M}}$). Thus
$g(t) \leq g(-\delta^2_{2\mathbf{s,M}}) = \frac{\delta_{2\mathbf{s,M}}^2}{1-\delta_{2\mathbf{s,M}}^2}.$ Additionally,
$
\frac{1}{\sqrt{1+t}} \leq \frac{1}{\sqrt{1-\delta_{2\mathbf{s,M}}}}.
$
Combining this with \eqref{eqAlmostThere} yields
\begin{align*}
\|x_{S_0}\|_2 &\leq \frac{1}{\sqrt{1+t}}\|Ux\|_2 + \sqrt{g(t)}\left(\frac{1}{\min\sqrt{s_i}} \|x_{S_0^c}\|_1 + \frac{1}{4} \sqrt{l} \|x_{S_0}\|_2\right) \\
&\leq \frac{1}{\sqrt{1-\delta_{2\mathbf{s,M}}}} \|Ux\|_2 + \frac{\delta_{2\mathbf{s,M}}}{\sqrt{1-\delta_{2\mathbf{s,M}}^2}} \left(\frac{1}{\min\sqrt{s_i}} \|x_{S_0^c}\|_1 + \frac{1}{4} \sqrt{l} \|x_{S_0}\|_2\right).
\end{align*}
A simple rearrangement gives
\begin{equation}
\|x_{S_0}\|_2 \leq \frac{\sqrt{1+\delta_{2\mathbf{s,M}}}}{\sqrt{1-\delta_{2\mathbf{s,M}}^2}-\delta_{2\mathbf{s,M}}\sqrt{l}/4} \|Ux\|_2 + \frac{\delta_{2\mathbf{s,M}}}{\min{\sqrt{s_i}}\left(\sqrt{1-\delta_{2\mathbf{s,M}}^2}-\delta_{2\mathbf{s,M}}\sqrt{l}/4\right)}\|x_{S^c_0}\|_1 \label{boundwithoutsqrts}
\end{equation}
provided
\begin{equation}
\sqrt{1-\delta_{2\mathbf{s,M}}^2}-\delta_{2\mathbf{s,M}}\sqrt{l}/4 > 0. \label{firstcondition}
\end{equation}
Multiplying \eqref{boundwithoutsqrts} by $\sqrt{\tilde{s}}$ yields
\begin{align*}
\sqrt{\tilde{s}}\|x_{S_0}\| &\leq \sqrt{\tilde{s}} \frac{\sqrt{1+\delta_{2\mathbf{s,M}}}}{\sqrt{1-\delta_{2\mathbf{s,M}}^2}-\delta_{2\mathbf{s,M}}\sqrt{l}/4} \|Ux\|_2 + \frac{\delta_{2\mathbf{s,M}}\sqrt{\tilde{s}}}{\min{\sqrt{s_i}}\left(\sqrt{1-\delta_{2\mathbf{s,M}}^2}-\delta_{2\mathbf{s,M}}\sqrt{l}/4\right)}\|x_{S^c_0}\|_1 \\
& \leq \tau \sqrt{\tilde{s}} \,\|Ux\|_2 + \frac{\delta_{2\mathbf{s,M}}}{\sqrt{1-\delta_{2\mathbf{s,M}}^2}-\delta_{2\mathbf{s,M}}\sqrt{l}/4}\sqrt{\sum\limits_{k=1}^l\frac{ s_k}{\min s_i}}\|x_{S^c_0}\|_1 \\
& \leq \tau \sqrt{\tilde{s}}\, \|Ux\|_2 +  \frac{\delta_{2\mathbf{s,M}}\sqrt{l \etasm}}{\sqrt{1-\delta_{2\mathbf{s,M}}^2}-\delta_{2\mathbf{s,M}}\sqrt{l}/4}\|x_{S^c_0}\|_1
\end{align*}
where $\tau = \frac{\sqrt{1+\delta_{2\mathbf{s,M}}}}{\sqrt{1-\delta_{2\mathbf{s,M}}^2}-\delta_{2\mathbf{s,M}}\sqrt{l}/4}.$
It is clear that \eqref{requiredNSP} is satisfied if condition \eqref{firstcondition} holds and
\begin{equation}
 \frac{\delta_{2\mathbf{s,M}}\sqrt{l \etasm}}{\sqrt{1-\delta_{2\mathbf{s,M}}^2}-\delta_{2\mathbf{s,M}}\sqrt{l}/4} <1  \quad\text{ or equivalently } \, \delta_{2\mathbf{s,M}} < \frac{1}{\sqrt{l \left(\sqrt{\etasm}+\frac{1}{4}\right)^2 + 1}} \label{finalcondition} 
\end{equation}
whilst \eqref{firstcondition} is equivalent to 
$
\delta_{2\mathbf{s,M}} < \frac{1}{\sqrt{ \frac{l}{16}+1}}.
$
Since 
\begin{equation*}
 \frac{1}{\sqrt{l \left(\sqrt{\etasm}+\frac{1}{4}\right)^2 + 1}} \leq  \frac{1}{\sqrt{ \frac{l}{16}+1}}
\end{equation*}
it will suffice for \eqref{finalcondition} to hold, completing the proof.
\end{proof}

\subsection{Proof of Theorem \ref{Theorem:etaDependenceTheorem} and \ref{Theorem:lDependenceTheorem}}\label{Section:etaDependenceProof}

\begin{proof}[Proof of Theorem \ref{Theorem:etaDependenceTheorem}] The ideas behind the counterexample in this proof are similar to those in \cite{SharpRIPBounds}.
We prove this theorem in three stages. First we shall construct the matrix $U$. Next we shall show that our construction does indeed have a RIP in levels constant satisfying \eqref{equation:etaDependenceRIPBounds}. Finally, we shall explain why $z^1$ exists.

{\bf Step I:}
Set $n = C+C^2$, where the non-negative integer $C$ is much greater than $a$ (we shall give a precise choice of $C$ later). 
Let $x^1 \in \complex^n$ be the vector 
\begin{equation*}
x^1:=\lambda(\underbrace{C,C,\dotsc, C}_{C},\underbrace{1,1,\dotsc,1}_{C^2}).
\end{equation*}
With this definition, the first $C$ elements of $x^1$ have value $C \lambda$ and the next $C^2$ elements have value $\lambda$. Our $\smsparse$ sparsity pattern is given by $\mathbf{s}=(1,C^2)$ and $\mathbf{M}=(0,C,C+C^2)$. 
Clearly, by the definition of the ratio constant, $\eta_{\mathbf{s,M}} = C^2$ (in particular, $\eta_{\mathbf{s,M}}$ is finite). Choose
$\lambda = \frac{1}{\sqrt{C^3+C^2}}$
so that $\|x^1\|_2 = 1$.
By using this fact, we can form an orthonormal basis of $\complex^{C+C^2}$ that includes $x^1$. We can write this basis as $(x^i)_{i=1}^{C+C^2}$.
Finally, for a vector $v \in \complex^{C+C^2}$, we define the linear map $U$ by
\begin{equation*}
Uv := \sum_{i=2}^{C+C^2} v^i x^i \text{ where } v = \sum_{i=1}^{C+C^2} v^i x^i
\end{equation*}  
In particular, notice that the nullspace of $U$ is precisely the space spanned by $x^1$, and that 
$
v^i = \left\langle v,x^i \right\rangle.
$

{\bf Step II:}
Let $\mathbf{\gamma}$ be an $(a\mathbf{s,M})$ sparse vector.
Our aim will be to estimate $\left|\,\|U\gamma\|^2_2 - \|\gamma\|^2_2\,\right|$. Clearly,
$
\|U\gamma\|^2_2 - \|\gamma\|^2_2  = -|\gamma^1|^2,
$
where $\gamma^1$ is the coefficient of $x^1$ in the expansion of $\gamma$ in the basis $(x^i)$. 
Therefore, to show that $U$ satisfies the $\ripl$ we will only need to bound
$
|\gamma^1|=|\!\left\langle \gamma,x^1 \right\rangle\!|.
$
Let $S$ be the support of $\gamma$. Then
\begin{equation*}
|\!\left\langle \gamma,x^1 \right\rangle \!|=|\!\left\langle \gamma_{S},x^1 \right\rangle \!| = |\!\left \langle \gamma,x^1_{S} \right\rangle\!| \leq \|\gamma\|_2 \|x^1_{S}\|_2\leq  \lambda\|\gamma\|_2 \sqrt{aC^2+C^2}
\end{equation*}
where we have used Cauchy-Schwarz in the first inequality and in the second inequality we have used the fact that $x^1_S$ has at most $a$ elements of size $\lambda C$ and at most $C^2$ elements of size $\lambda$.
From the definition of $\lambda$ we get
$
\left|\left\langle \gamma,x^1 \right\rangle\right| \leq \sqrt{\frac{a+1}{C+1}}\|\gamma\|_2. 
$
Therefore,
$$
\left|\, \|U\gamma\|^2_2 - \|\gamma \|^2_2 \,\right| =|\!\left\langle\gamma,x^1\right\rangle\!|^2 \leq \frac{a+1}{C+1} \|\gamma\|_2^2.
$$
By the assumption that $f(x) = o(x^\frac{1}{2})$, we can find a $C \in \mathbb{N}$ sufficiently large so that
$
\frac{a+1}{C+1} \leq \frac{1}{|f(C^2)|}.
$
Then
$
\delta_{a\mathbf{s,M}} \leq \frac{1}{|f(\eta_{\mathbf{s,M}})|}
$
as claimed. 

{\bf Step III:}
Let
\begin{align*}
z^1:=(C,\underbrace{0,0,\dotsc,0,0}_{C-1},\underbrace{1,1,\dotsc,1}_{C^2}), \qquad z^2:=(0,\underbrace{C,C, \dotsc, C,C}_{C-1},\underbrace{0,0, \dotsc ,0}_{C^2}).
\end{align*}
It is clear that $z^1$ is $\smsparse$-sparse.
Additionally,
$
\|z^1\|_1 =  C^2+C
$
and
$
\|-z^2\|_1 = \left(C-1\right)C=C^2-C.
$
Because $U(z^1+z^2) = U(x^1)/\lambda = 0,$
 we have
$
U(-z^2) = U(z^1).
$
Since the kernel of $U$ is of dimension $1$, the only vectors $z$ which satisfy
$
U(z) =  U(z^1)
$
are $z=z^1$ and $z=-z^2$. Moreover,
$
\|z^1 \|_1 > \|-z^2\|_1.
$
Consequently
\begin{equation*}
z^1 \notin \arg\min \|z\|_1 \text{ such that } Uz = Uz^1.
\end{equation*}
\end{proof} 

\begin{proof}[Proof of Theorem \ref{Theorem:lDependenceTheorem}]
The proof of this theorem is almost identical to that of Theorem \ref{Theorem:etaDependenceTheorem}, so we shall omit details here. Again, we set $x^1$ so that
\begin{equation*}
x^1:=\lambda(\underbrace{C,C,\dotsc, C}_{C},\underbrace{1,1,\dotsc,1}_{C^2})
\end{equation*}
where $C \gg a$. We choose $\lambda$ so that $\|x^1\|_2=1$.
In contrast to the proof of Theorem \ref{Theorem:etaDependenceTheorem}, we take 
$$
\mathbf{s}=(\underbrace{1,1,1,\dotsc,1}_{C^2+1}), \qquad \mathbf{M}=(0,C,C+1, \dotsc, C+C^2-1,C+C^2).
$$
This time, there are $C^2+1$ levels and the ratio constant $\eta_{\mathbf{s,M}}$ is equal to $1$. Once again, we produce an orthonormal basis of $\complex^{C+C^2}$ that includes $x^1$, which we label $(x^i)_{i=1}^{C+C^2}$ and we define the linear map $U$ by
\begin{equation*}
Uv := \sum_{i=2}^{C+C^2} v^i x^i \text{ where } v = \sum_{i=1}^{C+C^2} v^i x^i.
\end{equation*}  
The same argument as before proves that for any $(a\mathbf{s,M})$-sparse $\gamma$, 
\begin{equation*}
\left|\, \|U\gamma\|^2_2 - \|\gamma\|^2_2 \,\right| \leq \frac{a+1}{C+1} \|\gamma\|_2^2
\end{equation*}
and again, taking $C$ sufficiently large so that
$
\frac{a+1}{C+1} \leq \frac{1}{|f(C^2+1)|}
$
yields
$
\delta_{a\mathbf{s,M}} \leq \frac{1}{|f(l)|}.
$
The proof of the existence of $z^1$ is the identical to Step III in the proof of Theorem \ref{Theorem:etaDependenceTheorem}.
\end{proof}

\subsection{Proof of Theorem \ref{Theorem:RIPInLevelsOptimalL2}}
\begin{proof} 
Once again, we prove this theorem in three stages. First we shall construct the matrix $U_2$. Next, we shall show that the matrix $U_2$ has a sufficiently small $\ripl$ constant. Finally, we shall explain why both $z^1$ and $z$ exist.

{\bf Step I:}
Let $x^1$ be the vector 
\begin{equation*}
x^1:=\lambda(\underbrace{0,0,\dotsc, 0}_{C^2},\underbrace{1,1,\dotsc,1}_{\omega(\rho,C)+1})
\end{equation*}
where $\omega(\rho,C) = \text{ceil}(\frac{2C}{\rho})$ for a fixed $\rho \in (0,1)$ which we will specify later, $\text{ceil}(a)$ denotes the smallest integer greater than or equal to $a$ and $C$ is an integer greater than $1$. 
In other words, the first $C^2$ elements of $x^1$ have value $0$ and the next $\omega(\rho,C) + 1$ elements have value $\lambda$. We choose $\lambda$ so that $\|x^1\|_2=1$ and $C$ so that $C^2 > \omega(\rho,C)$ and choose our $\smsparse$ sparsity pattern so that $\mathbf{s}=(C^2,1)$ and $\mathbf{M}=(0,C^2,C^2+\omega(\rho,C)+1)$. 
By the definition of the ratio constant, $\eta_{\mathbf{s,M}} = C^2$ (in particular, $\eta_{\mathbf{s,M}}$ is finite). 
Because $\|x^1\|_2 = 1$, we can form an orthonormal basis of $\complex^{C^2+\omega(\rho,C)+1}$ that includes $x^1$ which we can write as $(x^i)_{i=1}^{C^2+\omega(\rho,C)+1}$.
Finally, for a vector $v \in \complex^{C^2+\omega(\rho,C)+1}$, we define the linear map $U_2$ by
\begin{equation*}
U_2v := \frac{\sqrt{2}w}{\tau} \text{ where } w = \!\!\!\!\!\sum_{i=2}^{C^2+\omega(\rho,C)+1} \! \! \!\!\! v^i x^i \text{ and } v = v^1 x^1 + w
\end{equation*}  
In particular, notice that the nullspace of $U_2$ is precisely the space spanned by $x^1$, and that 
$
v^i = \left\langle v,x^i \right\rangle.
$

{\bf Step II:}
Let $\gamma$ be an $\left(a\mathbf{s,M}\right)$ sparse vector. For the purposes of proving Theorem \ref{Theorem:RIPInLevelsOptimalL2}, it will suffice to take $\tau = \sqrt{2}$. Then
 \begin{equation*}
\|U_2\gamma\|^2_2 - \|\gamma\|^2_2  = -|\gamma^1|^2,
\end{equation*}
where $\gamma^1$ is the coefficient corresponding to $x^1$ in the expansion of $\gamma$ in the basis $(x^i)$. 
As in the proof of Theorem \ref{Theorem:etaDependenceTheorem},
$
|\gamma^1|=|\!\left\langle \gamma,x^1 \right\rangle\!|.
$ Let $S$ be the support of $\gamma$. Then
\begin{align*}
|\!\left\langle \gamma_{S},x^1 \right\rangle \!| = |\!\left \langle \gamma,x^1_{S} \right\rangle\!| \leq \|\gamma\|_2 \|x^1_{S}\|_2 \leq  \lambda\|\gamma\|_2 \sqrt{a}  
\end{align*}
where we have used Cauchy-Scharwz in the first inequality and in the second inequality we have used the fact that $x^1_S$ has at most $a$ elements of size $\lambda$.
It is easy to see that $\lambda = \frac{1}{\sqrt{\omega(\rho,C)+1}}$.
Therefore,
\begin{align*}
\left|\, \|U_2\gamma\|^2_2 - \|\gamma\|^2_2 \,\right| &=|\!\left\langle\gamma,x^1\right\rangle\!|^2 \leq \frac{a}{\omega(\rho,C)+1} \|\gamma\|_2^2 \leq \frac{\rho a}{2C}.
\end{align*}
because $\omega(\rho,C) \geq \frac{2C}{\rho}$. 
By the assumption that $g(\eta_{\mathbf{s,M}}) \leq \frac{1}{A} \sqrt{\eta_{\mathbf{s,M}}}$ for some $A >0$ and $\eta_{\mathbf{s,M}}$ sufficiently large, and the fact that $\eta_{\mathbf{s,M}} = C^2$, we must have
$
\frac{A}{C} \leq \frac{1}{g(\eta_{\mathbf{s,M}})}.
$
If we take $\rho$ sufficiently small and $C$ sufficiently large, then
\begin{equation*}
\delta_{a\mathbf{s,M}} < \frac{\rho a}{2C} \leq \frac{A}{C} \leq \frac{1}{g(\eta_{\mathbf{s,M}})}.
\end{equation*}
as claimed. 

{\bf Step III:}
Let $z^1:=x^1$ and set $z$ to be the $0$ vector in  $\complex^{C^2 + \omega(\rho,C) + 1}$. 
Because $x^1$ is in the kernel of $U_2$, $U_2(z-z^1) = 0$. Furthermore, it is obvious that $\|z\|_1 \leq \|z^1\|_1$.
Additionally,
$
\|z^1\|_2 = 1
$
and
\begin{align*}
\frac{\sigma_{\mathbf{s,M}}(z^1)_1}{\sqrt{\tilde{s}}}&= \lambda \frac{\omega(\rho,C)}{\sqrt{C^2+1}} \leq \frac{\omega(\rho,C)}{\sqrt{\omega(\rho,C)\left(C^2+1\right)}} \leq  \sqrt{\frac{2C+1}{\rho\left(C^2+1\right)}}
\leq \sqrt{\frac{3}{\rho \sqrt{\etasm}}}
\end{align*}
since $\eta_{\mathbf{s,M}} = C^2$ and $\omega(\rho,C) \leq 2C/\rho +1 \leq (2C+1)/\rho$. Because $f(\eta_{\mathbf{s,M}}) = o(\eta_{\mathbf{s,M}}^{\frac{1}{4}})$,
\begin{equation*}
\frac{ \sigma_{\mathbf{s,M}}(z^1)_1}{\sqrt{\tilde{s}}} f(\eta_{\mathbf{s,M}}) \to 0, \quad \eta_{\mathbf{s,M}} \to \infty.
\end{equation*}
The desired result follows by taking $\eta_{\mathbf{s,M}}$ sufficiently large so that
\begin{equation*}
\|z-z^1\|_2 = 1 > \frac{ \sigma_{\mathbf{s,M}}(z^1)_1}{\sqrt{\tilde{s}}} f(\eta_{\mathbf{s,M}}).
\end{equation*}
\end{proof}
\begin{proof}[Proof of part 2]
The proof of part $2$ follows with a few alterations to the previous case. We now use the sparsity pattern 
\begin{equation*}
\mathbf{s} = (\underbrace{1,1,1, \dotsc, 1}_{C^2},1) \quad \text{and} \quad
\mathbf{M} = (0,1,2, \dotsc, C^2, C^2 + \omega(\rho,C)+1).
\end{equation*}
In this case, $\eta_{\mathbf{s,M}} = 1$ and $l = C^2 + 1$. The result follows by simply employing the same matrix $U_2$ with this new sparsity pattern.
\end{proof}

\subsection{Proofs of Theorem \ref{Theorem:RNSPOptimalL2}}
The counterexample for Theorem \ref{Theorem:RNSPOptimalL2} is the same as the one used in the proof of Theorem \ref{Theorem:RIPInLevelsOptimalL2}. In that case, the matrix depended on three parameters: $C, \tau$ and $\rho$. We show that $U_2$ satisfies the $\ell^2$ robust nullspace property of order $\smsparse$ with parameters $\rho$ and $\tau$. The existence of $z^1$ and $z$ is identical to Step III in the proof of Theorem \ref{Theorem:RIPInLevelsOptimalL2}. 
\begin{proof}[Proof of part 1]
Firstly, if $T \subset S$ then for any $v \in \complex^{C^2+\omega(\rho,C)+1}$, we have
\begin{equation*}
\|v_T\|_2 \leq \|v_S\|_2 \text{ and } \frac{\rho}{\sqrt{\tilde{s}}} \|v_{T^c}\|_1 + \tau \|U_2v\|_2 \geq \frac{\rho}{\sqrt{\tilde{s}}} \|v_{S^c}\|_1 + \tau \|U_2v\|_2
\end{equation*}
so it will suffice to prove that $U_2$ satisfies \eqref{eq:NSPDefinitionOrderSM} for $\smsparse$-sparse sets $S$ with $|S|= \tilde{s}$. \\
As before, we set $U_2v:= \sqrt{2}w/\tau$ where $w$ is defined as in the proof of Theorem \ref{Theorem:RIPInLevelsOptimalL2}.
Let us consider  a set $S$ such that $|S| = \tilde{s}$. Because $w_S$ and $w_{S^c}$ have disjoint support, by the Cauchy-Schwarz inequality applied to the vectors $(1,1)$ and $\left(\|w_S\|_2,\|w_{S^c}\|_2\right)$ we get
$
\sqrt{2}\|w\|_2 \geq \|w_S\|_2 + \|w_{S^c}\|_2.
$
Therefore,
\begin{equation} \label{eq:UVwSC+wSL2Bound}
\tau\|U_2v\|_2 \geq  \sqrt{2}\|w\|_2 \geq \|w_S\|_2 + \|w_{S^c}\|_2.
\end{equation}
Furthermore, because $|S| = \tilde{s} \geq |S^c|$ (recall that $|S| = C^2+1$ and that $C$ was chosen so that $ C^2 > \omega(\rho,C) = |S^c|$) and $\rho \in (0,1)$
\begin{align}
\|w_{S^c}\|_2 \geq \frac{1}{\sqrt{|S^c|}}\|w_{S^c}\|_1 \geq \frac{1}{\sqrt{\tilde{s}}} \|w_{S^c}\|_1  \geq \frac{\rho}{\sqrt{\tilde{s}}} \|w_{S^c}\|_1
 \label{eq:wSCL2wSCL1Bound}
\end{align}
Combining \ref{eq:UVwSC+wSL2Bound} and \eqref{eq:wSCL2wSCL1Bound} gives
\begin{align}
\tau \|Uv\|_2 + \frac{\rho}{\sqrt{\tilde{s}}}\|v_{S^c}\|_1 \geq \|w_S\|_2 + \frac{\rho}{\sqrt{\tilde{s}}} \|w_{S^c}\|_1 + \frac{\rho}{\sqrt{\tilde{s}}} \|v_{S^c}\|_1 &\geq \|w_S\|_2 + \frac{\rho}{\sqrt{\tilde{s}}} \|v_{S^c} - w_{S^c}\|_1 \nonumber\\
&\geq \|w_S\|_2 + \frac{\rho}{\sqrt{\tilde{s}}} \|v^1 x^1_{S^c}\|_1 \label{eq:reductionTox^1SC}
\end{align}
We shall now aim to bound $\|v^1 x^1_S\|_2$ in terms of $\|v^1 x^1_{S^c}\|_1$. We have
\begin{equation}\label{eq:v^1x^1_S|v^1|LowerBound}
\|v^1 x^1_S\|_2  \leq  \lambda |v^1|  
\end{equation}
since at most one element of $x^1_S$ is non-zero and its value will be at most $\lambda$. Additionally, since each element of $x^1_{S^c}$ has value $\lambda$ and there are at least $\frac{2C}{\rho}$ of them
\begin{align*}
\rho \|v^1 x^1_{S^c}\|_1 &= \rho  |v^1| \|x^1_{S^c}\|_1 \geq \frac{2\lambda C}{\rho} \rho  |v^1|  \geq 2\lambda C|v^1|.
\end{align*}
Therefore,
\begin{align}
\frac{\rho}{\sqrt{\tilde{s}}} \|v^1 x^1_{S^c}\|_1 &\geq \frac{2\lambda C}{ \sqrt{C^2+1}} |v^1| \geq \lambda |v^1| \label{eq:l1robustNSPLowerBoundx^1}.
\end{align}
Using \eqref{eq:v^1x^1_S|v^1|LowerBound} and \eqref{eq:l1robustNSPLowerBoundx^1}, we have
$
\|v^1 x^1_S\|_2  \leq \frac{\rho}{\sqrt{\tilde{s}}} \|v^1 x^1_{S^c}\|_1.
$
We can conclude the proof that $U_2$ satisfies the $\ell^2$ robust nullspace property by combining this result with \eqref{eq:reductionTox^1SC} as follows:
\begin{align*}
\|v_S\|_2 &\leq \|v^1 x^1_S\|_2 + \|w_S\|_2 \leq \frac{\rho}{\sqrt{\tilde{s}}} \|v^1 x^1_{S^c}\|_1 + \|w_S\|_2 \leq \tau \|Uv\|_2 + \frac{\rho}{\sqrt{\tilde{s}}}\|v_{S^c}\|_1.
\end{align*}

\end{proof}
\begin{proof}[Proof of part 2]
The proof of part $2$ is identical. We simply adapt the sparsity pattern so that 
\begin{equation*}
\mathbf{s} = (\underbrace{1,1,1, \dotsc, 1}_{C^2},1) \text{ and }\mathbf{M} = (0,1,2, \dotsc, C^2, C^2 + \omega(\rho,C)+1).
\end{equation*}
We can apply the proceeding argument with this new sparsity pattern to obtain the required result. 
\end{proof}

\bibliographystyle{abbrv}
\bibliography{RIPLevelsBib3}

\end{document}